\documentclass[11pt]{article}

\usepackage[utf8]{inputenc}
\usepackage{authblk}

\usepackage{amsfonts,amsmath,amssymb,mathtools,amsthm}
\usepackage{graphicx}
\usepackage{stmaryrd}
\usepackage{thmtools}
\DeclareGraphicsExtensions{.eps}
\usepackage{float}

\usepackage{xcolor}
\definecolor{linkcolour}{rgb}{0.15,0.15,0.55}
\definecolor{urlcolour}{rgb}{0.15,0.15,0.55}
\definecolor{citecolour}{rgb}{0.15,0.15,0.55}

\usepackage[linktoc=all,backref=page]{hyperref}
	\hypersetup{
		colorlinks = true,
		linkcolor = linkcolour,
		urlcolor = citecolour,
		citecolor = citecolour,
		linktoc = all,
		hypertexnames = false,
		unicode = true,
		bookmarksnumbered = false,
		pdfmenubar = true,
		pdftoolbar = true}

\usepackage
	[top=2.5cm,
	bottom=2.5cm,
	left=2.5cm,
	right=2.5cm,
	a4paper]
	{geometry}
		\usepackage[margin=2cm]{caption}
\renewcommand{\theequation}{\thesection.\arabic{equation}}

\newcommand\encadremath[1]{\vbox{\hrule\hbox{\vrule\kern8pt
\vbox{\kern8pt \hbox{$\displaystyle #1$}\kern8pt}
\kern8pt\vrule}\hrule}}
\def\enca#1{\vbox{\hrule\hbox{
\vrule\kern8pt\vbox{\kern8pt \hbox{$\displaystyle #1$}
\kern8pt} \kern8pt\vrule}\hrule}}

\newcommand\framefig[1]{
\begin{figure}[bth]
\hrule\hbox{\vrule\kern8pt
\vbox{\kern8pt \vbox{
\begin{center}
{#1}
\end{center}
}\kern8pt}
\kern8pt\vrule}\hrule
\end{figure}
}

\newcommand\figureframex[3]{
\begin{figure}[bth]
\hrule\hbox{\vrule\kern8pt
\vbox{\kern8pt \vbox{
\begin{center}
{\mbox{\epsfxsize=#1.truecm\epsfbox{#2}}}
\end{center}
\caption{#3}
}\kern8pt}
\kern8pt\vrule}\hrule
\end{figure}
}
\newcommand\figureframey[3]{
\begin{figure}[bth]
\hrule\hbox{\vrule\kern8pt
\vbox{\kern8pt \vbox{
\begin{center}
{\mbox{\epsfysize=#1.truecm\epsfbox{#2}}}
\end{center}
\caption{#3}
}\kern8pt}
\kern8pt\vrule}\hrule
\end{figure}
}
 \makeatother

 \usepackage{mathdots}
 \usepackage{xcolor}
 \usepackage{amsfonts,amsmath,amssymb,mathtools,amsthm}
 \usepackage{graphicx}
 \usepackage{stmaryrd}
 \usepackage{thmtools}
 \DeclareGraphicsExtensions{.eps}
\usepackage{hyperref}

\renewcommand{\thesection}{\arabic{section}}
\renewcommand{\theequation}{\arabic{section}-\arabic{equation}}
\makeatletter
\@addtoreset{equation}{section}
\makeatother
\newtheorem{theorem}{Theorem}[section]

\newtheorem{proposition}{Proposition}[section]
\newtheorem{lemma}{Lemma}[section]
\newtheorem{corollary}{Corollary}[section]

\theoremstyle{definition}
\newtheorem{remark}{Remark}[section]
\newtheorem{definition}{Definition}[section]

\def\br{\begin{remark}\rm\small}
\def\er{\end{remark}}
\def\bt{\begin{theorem}}
\def\et{\end{theorem}}
\def\bd{\begin{definition}}
\def\ed{\end{definition}}
\def\bp{\begin{proposition}}
\def\ep{\end{proposition}}
\def\bl{\begin{lemma}}
\def\el{\end{lemma}}
\def\bc{\begin{corollary}}
\def\ec{\end{corollary}}
\def\beaq{\begin{eqnarray}}
\def\eeaq{\end{eqnarray}}

\theoremstyle{definition}

\newcommand{\be}{\begin{equation}}
\newcommand{\ee}{\end{equation}}
\newcommand{\beq}{\begin{equation}}
\newcommand{\eeq}{\end{equation}}
\newcommand{\bea}{\begin{eqnarray}}
\newcommand{\eea}{\end{eqnarray}}
\newcommand{\beqq}{\begin{equation*}}
\newcommand{\eeqq}{\end{equation*}}
\newcommand{\beaa}{\begin{eqnarray*}}
\newcommand{\eeaa}{\end{eqnarray*}}

\newcommand{\diag}{{\operatorname{diag}}}

\newcommand{\td}{\tilde}

\newcommand\blfootnote[1]{%
  \begingroup
  \renewcommand\thefootnote{}\footnote{#1}%
  \addtocounter{footnote}{-1}%
  \endgroup
}

\newcommand{\Res}{\mathop{\,\rm Res\,}}
\title{\bf{A symmetry reduction of the Painlev\'{e} IV hierarchy to the Flaschka-Newell Painlev\'{e} II hierarchy
}}

\author{$_{1}$Mohamad Alameddine\footnote{Section de math\'ematiques, Universit\'e de Gen\`eve, Rue du conseil-g\'en\'eral 7-9, 1205 Gen\`eve Switzerland}\,\,,
$_{2}$Olivier Marchal\footnote{Universit\'{e} Jean Monnet Saint-\'{E}tienne, CNRS, Institut Camille Jordan UMR 5208, Institut Universitaire de France, Les Forges 2, 20 Rue du Dr Annino, 42000 Saint-Etienne, France.}
}

\date{\vspace{-5ex}}

\begin{document}

\maketitle

\textbf{Abstract}: We study the isomonodromic deformation problem associated with rank-two meromorphic connections on the Riemann sphere having one regular singularity and one irregular singularity of even order at infinity, corresponding to the even Painlev\'{e} IV hierarchy. We show that the symmetry $\Psi(-\lambda)= \sigma_1 \Psi(\lambda)\sigma_1$ defines an invariant submanifold whose induced isomonodromic dynamics coincides with the Flaschka-Newell Painlev\'{e} II hierarchy. Under this identification, the corresponding Lax matrices, Darboux coordinates and Hamiltonian structures can be matched explicitly. In particular, the Hamiltonians of the first members of the Flaschka–Newell hierarchy are recovered from the even Painlev\'{e} IV hierarchy. This provides a geometric interpretation of the Flaschka–Newell hierarchy as a symmetry reduction of an isomonodromic deformation problem, complementing its classical description as a similarity reduction of the modified Korteweg–de Vries hierarchy.

\blfootnote{\textit{Email Addresses:}
$_{1}$\textsf{Mohamad.Alameddine@unige.ch}
$_{2}$\textsf{olivier.marchal@univ-st-etienne.fr}}

\tableofcontents

\section{Introduction}
The theory of isomonodromic deformations has a long history dating back at least to Riemann. It unifies the study of algebraic geometry of moduli spaces of connections, integrable Hamiltonian systems, and complex analysis of nonlinear differential equations. The best known examples are the Painlev\'e equations. Originally, they were obtained through isomonodromic deformations of trivialized meromorphic connections defined over the rank $2$ trivial vector bundle \cite{Fuchs,Gambier,Garnier,Painleve,Picard}. The Painlev\'e equations are nonlinear second-order differential equations obtained originally through a classification program, they admit a Hamiltonian description \cite{Malmquist1922} thereby linking them to Hamiltonian integrable systems. The framework extends beyond these examples, the Painlev\'e $VI$ equation, originally obtained through deformations of Fuchsian systems (regular poles) has motivated the study of rational connections with higher numbers of simple poles: the Garnier systems \cite{Okamoto1986Iso,Okamoto1986} leading to Schlesinger's equations \cite{schlesinger1912klasse}. Garnier showed that the other Painlev\'e equations are obtained from non-Fuchsian systems through complete integrability conditions, this led to the relation with isomonodromic deformations of linear ordinary differential equations with irregular singularities giving the Hamiltonian description of these equations. In fact, the seminal work of the Japanese school \cite{JimboMiwa,JimboMiwaUeno,JMMS} in the generic case (no ramified local behavior) led to several generalizations. These equations obtained via isomonodromic deformations are required to preserve a set of generalized monodromy data, which is obtained by considering Stokes data around the irregular singularities in addition to the usual monodromies. The space of deformations itself needs to be extended beyond the space of complex structures of the Riemann surface by including the coefficients of the irregular type near the pole considered \cite{Boalch2001,Boalch2014}. Another direction was motivated by the existence of non-trivial dualities between meromorphic differential systems defined on different rank bundles admitting different polar structures, this has been initiated by Harnad \cite{Harnad:1993hw} obtaining the JMMS \cite{JMMS} Hamiltonian structure evaluated in two different setups linked by a symmetry. This Harnad duality has been generalized by Boalch \cite{Boalch2012} to a continuous $\text{SL}_2(\mathbb{C})$ action initiating the classification problem of isomonodromy systems. 

\medskip

One consequence from this line of research is the existence of isomorphisms between different moduli spaces of connections generalizing the prior observation of Okamoto linking Painlev\'e equations to affine Dynkin diagrams \cite{Okamoto1986,zbMATH00929752}. Different Lax representations leading to the same Painlev\'e equations were established for several examples (see for instance \cite{MazzoccoP6}). The current understanding of these systems is that moduli spaces of meromorphic connections provide a geometric framework for isomonodromic deformations and Painlev\'{e}-type equations and different equations arise from different classes of connections. Understanding the relations between these classes is a fundamental problem in the geometry of isomonodromy systems. We will show in this work the existence of other types of relations between moduli spaces of connections, more precisely, embeddings between initially different connections leading to different isomonodromic deformation equations. Indeed, the main result of this article is that one can obtain the Flaschka-Newell Painlev\'e II (FN PII) hierarchy as a sub-family of connections of the even Painlev\'e IV (PIV) hierarchy, by enforcing a simple symmetry on the horizontal sections. The identification is carried out both at the level of the Lax matrices but also at the level of the Hamiltonian systems. 

\medskip

On the other hand, another approach to integrable systems exists relying on an algebraic reformulation: the isospectral approach. The most famous examples are the Kadomtsev-Petviashvili (KP) hierarchy and its reduction to the Korteweg-de Vries (KdV) hierarchy (see for instance the introductory book \cite{1791585}). This approach has become a research direction on its own and a large class of integrable systems were shown to fit in this framework \cite{Adams1993,HarnadHurtubise1997}. As the name suggests, the main quantity to investigate is the spectral data of a certain matrix $\mathcal{A}(\lambda)$ whose expansion will be determined by (part of) the deformation parameters, the isospectral invariants of the system and some additional corrections. The evolution equations will be then determined via a Poisson bracket defined on the loop algebra of the matrix $\mathcal{A}(\lambda)$. The principal difficulty of this approach is the definition of a proper set of canonical Darboux coordinates (with respect to the Poisson structure) to express these evolutions and compute the corrections terms. Linking in a general setting the isomonodromic and isospectral approaches is still an actively studied open problem with some partial results obtained recently in \cite{Alameddine2026,BertolaHarnadHurtubise2022,MarchalAlameddineIsospectralIsomono2023}. It is noteworthy that Painlev\'{e} hierarchies first emerged implicitly as similarity reductions of integrable PDE hierarchies, such as KdV or mKdV, in the late 1970s. For example, the FN PII hierarchy was implicitly derived as a reduction of mKdV by Flaschka and Newell \cite{Flaschka1980} while the Painlev\'{e} I (PI) hierarchy was implicitly derived as a reduction of stationary KdV by Dubrovin and Novikov in \cite{DubrovinNovikov1974} and as a reduction of KdV by Ablowitz and Fokas \cite{AblowitzFokas1983}. These similarity reductions were only recognized and named as Painlev\'{e} hierarchies a decade later: the PII hierarchy being named and described in \cite{Joshi1995} while the PI hierarchy was presented in \cite{GordoaPickering1999}. Recently, the authors linked the two approaches explicitly \cite{Alameddine20262} in the context of the $(2,2g+1)$ minimal model of the PI hierarchy, also known as the $(2,2g+1)$ string equation approach, based on the fact that this hierarchy can be characterized as a reduction of the KP (more specifically here, KdV) hierarchy. This is also the case for string equations of all types \cite{Dubrovin1976,Krichever1977}. One of the main goals of the present article is to establish an analogous result for the FN PII hierarchy studied in detail by Mazzocco and Mo \cite{mazzocco2007hamiltonian}, and to link it to the isomonodromic approach of \cite{marchal2024hamiltonianrepresentationisomonodromicdeformations}. This yields to a new interpretation of this hierarchy as a symmetry reduction of the PIV hierarchy.  

\medskip

The FN PII hierarchy is the hierarchy whose first member is the FN Lax pair giving the PII equation, the higher members of the hierarchy are non-linear differential equations of order $2d$ obtained as reduction of the mKdV hierarchy. The Lax matrix of the hierarchy itself is of particular interest for this work, it admits a pole structure given by one irregular pole at infinity of order $r_\infty=2d+2$ (here $d\in\mathbb{N}^*$) and another regular finite pole $r_0=1$ at zero. Note that the terminology PII hierarchy is ambiguous because another PII hierarchy, that we shall refer to as the Jimbo-Miwa-Ueno Painlev\'e II hierarchy (JMU PII), has been described in \cite{JimboMiwa} and explicitly characterized in \cite{marchal2024hamiltonianrepresentationisomonodromicdeformations}. It corresponds to the generalization of the JMU Lax representation of the PII equation and it is obtained by isomonodromic deformations of connections with only one irregular pole at infinity. It turns out that although these two hierarchies provide the same Hamiltonian system for the first element (i.e. the PII equation), they differ for the next cases hence the necessity to be careful with the terminology.  As mentioned above, the pole structure of the FN PII corresponds to one regular pole at zero and an irregular and non-twisted pole at infinity. In the Jimbo-Miwa-Ueno classification, this corresponds to the class of connections associated with the Painlev\'{e} IV equation and its hierarchy (where the order of the irregular pole at infinity is arbitrary). However, as we shall see in this article, isomonodromic deformations of such family of connections do not reproduce the FN PII equation nor its hierarchy. Indeed, the crucial observation is that the Lax representation for FN PII hierarchy admits an additional symmetry that has to be imposed on its PIV counterpart to obtain an identification of the Lax matrices and of the Hamiltonian structures. 

\medskip

The symmetry of the FN PII Lax pair has been originally discussed in \cite{Flaschka1980}. The symmetry decomposes the entries of the Lax matrix into an even diagonal part and an even-odd decomposition in its off-diagonal part (with a relation between the even-odd off diagonal components of the entries), this is seen as a grading on the loop algebra on which the matrix is defined. The symmetry extends to the Riemann-Hilbert problem and the Stokes parameters will be decomposed into two equal sets. The symmetry has been used in \cite{mazzocco2007hamiltonian} implicitly through the dimension of the symplectic leaves taken to be as half the number of Darboux coordinates defined in their paper. The Darboux coordinates of \cite{mazzocco2007hamiltonian} were defined through an adaptation of the separation of variables of \cite{Sklyanin1995} leading to the definition of a good number of coordinates having the canonical relations for the Poisson bracket. This separation of variables defines coordinates in terms of the entries of the Lax matrix through a certain normalization of the Baker-Akhiezer eigenvector.

\medskip

On the isomonodromic side, we will review the setup of the even PIV hierarchy, this is the hierarchy corresponding to $r_\infty=2d+2$ and $r_0=1$ for $d\in \mathbb{N}^*$ in the setup of \cite{marchal2024hamiltonianrepresentationisomonodromicdeformations}. Different canonical Darboux parameterizations of the connection matrix $L(\lambda)$ will be discussed. Unfortunately none of these sets will reduce nicely with the symmetry. Indeed, in addition to the canonical property relatively to the Poisson bracket (which is necessary to be able to integrate the evolutions in terms of Hamiltonian equations) one would like to have Darboux coordinates on the PIV hierarchy for which half the coordinates would trivialize (i.e. set to $0$) when the symmetry is enforced. This would immediately preserve the Hamiltonian evolution of the remaining coordinates and the reduced Hamiltonians would simply be the general Hamiltonians with half of the Darboux coordinates set to zero. Unfortunately, if the natural sets of Darboux coordinates arising from the PIV hierarchy are canonical relatively to the Poisson structure before reduction, they do not behave well under the symmetry and there are no longer canonical after the symmetry reduction. The solution to this problem  that is developed in this article is to use the symmetry to relate the PIV hierarchy Darboux coordinates, for which the evolutions are explicitly known, to the Darboux coordinates given in \cite{mazzocco2007hamiltonian}. Then, using the evolutions of the former, one can get the evolutions of the latter and integrate them into an Hamiltonian form. We used this general approach for the cases $d=1$ and $d=2$ to recover the Hamiltonians of \cite{mazzocco2007hamiltonian}.

\medskip
The main result of this article is the identification of the Flaschka–Newell Painlev\'{e} II hierarchy with a symmetric sub-family of the even Painlev\'{e} IV hierarchy viewed as an isomonodromic deformation problem. More precisely, we show that the involution induced by the symmetry $\Psi(-\lambda)=\sigma_1\Psi(\lambda)\sigma_1$ defines an invariant locus in the space of meromorphic connections underlying the even PIV hierarchy. Restricting the isomonodromic deformation equations to this locus reproduces the FN PII hierarchy. Furthermore, we establish an explicit correspondence between the Darboux coordinates arising from the isomonodromic description and the symmetric coordinates introduced in \cite{mazzocco2007hamiltonian}. This correspondence allows us to transfer the Hamiltonian structure of the even PIV hierarchy to the FN PII hierarchy and to recover, for the first members of the hierarchy, the Hamiltonians obtained by Mazzocco and Mo in \cite{mazzocco2007hamiltonian}.

\medskip

The article is organized as follows: in \autoref{sec2} we will review how the FN PII hierarchy is obtained as a reduction of the mKdV hierarchy. This review is mostly based on  \cite{mazzocco2007hamiltonian}. We will then discuss in \autoref{sec3} the isomonodromic approach to the even PIV hierarchy and different sets of coordinates giving explicit expressions for the Lax matrices and Hamiltonians. In particular, we will build a new set of Darboux coordinates in \autoref{TheodS} and prove that it preserves the symplectic $2-$form in \autoref{TheoSymplectic}. In \autoref{SecSymmetryReduction}, we  introduce the symmetry on the horizontal sections and lift it to an involution at the level of the Lax pair. This enables us to identify the Darboux coordinates with the one proposed in \cite{mazzocco2007hamiltonian}, this is done in \autoref{PropIdentification}. We illustrate the correspondence in \autoref{sec5} for the first two cases $d=1$ and $d=2$ and match the results of the procedure with results of \cite{mazzocco2007hamiltonian}. Finally, \autoref{sec6} is devoted to  discussions and questions raised by the article for future works. 

\subsection*{Acknowledgments}
The authors would like to thank P. Boalch and N. Hayford for fruitful discussions on different aspects of this project. M. A. was partially funded by the Swiss National Science Foundation (SNSF) under the grant agreement PZ00P2 223297 and by the DFG -- project ID 551478549. O. M. was supported by the fundamental junior IUF grant G752IUFMAR.

\section{The Lenard operator approach to the FN PII hierarchy} \label{sec2}

The main purpose of this section is to review the main construction that leads to the Hamiltonian structure of the FN PII hierarchy following the work of \cite{mazzocco2007hamiltonian}. The section starts by recalling the mKdV similarity reduction and linear reformulation of the hierarchy. Our goal is to highlight the main ingredients preparing for the identification presented in \autoref{SecSymmetryReduction}. 

\subsection{The PII hierarchy as reduction of the mKdV hierarchy}
The PII hierarchy is obtained from the mKdV hierarchy via the \textit{self-similarity reduction} following the symmetries of the mKdV hierarchy \cite{Ablowitz1977ExactLO,Airault1979RationalSO,Flaschka1980}.

\begin{definition}[The mKdV hierarchy] Define the following Lenard type recursion for $n\geq 0$:
\begin{align}
      \frac{\partial}{\partial x} R_{n+1} =& \left( \frac{\partial^3}{\partial x^3} + 4(v_x - v^2) \frac{\partial}{\partial x} + 2(v_x-v^2)_x \right) R_n, \qquad  \text{with} \qquad  
    R_0[v] = \frac{1}{2}
\end{align}
The mKdV hierarchy is defined as the set of PDEs given by
\begin{align}
    \frac{\partial}{\partial T_{n+1}} v + \frac{\partial}{\partial x} \left( \frac{\partial}{\partial x} +2v \right)R_n (v_x-v^2) = 0, \,\,\,\,\, \forall\, n \in \mathbb{N}.
\end{align}
where $(T_n)_{n\geq 1}$ and $x$ are the parameters.
\end{definition}
Every equation in the hierarchy could be seen as a Hamiltonian flow and each single equation defines a symmetry for all the others, this is due to the commutation of the Hamiltonian flows. For instance, take the solution space of the $n$-th mKdV equation denoted $v(x;T_1,T_2,...)$ with $\frac{\partial v}{\partial T_{n+1} }=0$, due to the integrability of the hierarchy, all other elements might be restricted to this space. Furthermore, this hierarchy has an additional set of symmetries, let us for this define the Virasoro infinitesimal generator 
\begin{align}
    \frac{d}{d s_n} :=  \sum_{l=0}^n (2l+1) T_{l+1} \frac{\partial}{\partial T_{l+1}} \end{align}
The stationary solutions satisfy $\frac{d v}{d s_n} = 0$ and thus one writes from the form of the hierarchy
\begin{align}
    \frac{d v }{ d s_n} = - \sum_{l=0}^n (2l+1) T_{l+1} \frac{\partial}{\partial x} \left( \frac{\partial}{\partial x} +2v \right)R_l (v_x-v^2) = 0
\end{align}
which, after integration, provides
\begin{align}
    - \sum_{l=0}^n (2l+1) T_{l+1} \left( \frac{\partial}{\partial x} +2v \right)R_l (v_x-v^2) = \alpha_n
\end{align}
$\alpha_n$ in this setup is the integration constant. From the $n=0$ equation of the hierarchy, set $T_1=-x$ so that the above equation is nothing but an ODE in the variable $x$ dependent on some additional parameters. These parameters can be absorbed due to the additional symmetry reduction
\begin{align}
    v(x,T_{n+1} ) =& \frac{u(z)}{[(2n+1) T_{n+1}]^{1/(2n+1)}} , \qquad z = \frac{x}{[(2n+1) T_{n+1}]^{1/(2n+1)}}, \cr
    R_l (v_x -v^2) =& \frac{1}{[(2n+1) T_{n+1}]^{2l/(2n+1)}} \mathcal{L}_l (u_z-u^2), \cr
    t_0 = & -z, \qquad t_l = \frac{(2l+1) T_{n+1}}{[(2n+1) T_{n+1}]^{(2l+1)/(2n+1)}},\,\, \forall\, l \in \llbracket 1, n\rrbracket \,\,,\,\,\, t_n = 1.
 \end{align}

In this way, one obtains the FN PII hierarchy defined as follows
\begin{definition}[The FN PII hierarchy] The FN PII hierarchy is defined as the set of non-linear ODEs given by 
\begin{align}
    \left( \frac{d}{d z} + 2  u \right) \mathcal{L}_n (u_z-u^2) + \sum_{l=1}^{n-1}t_l  \left( \frac{d}{d z} + 2  u \right) \mathcal{L}_l (u_z-u^2) = z u + \alpha_n 
 \end{align}
 with the operator $\mathcal{L}_n$ satisfying the Lenard recursion
 \begin{align}
      \mathcal{L}_0[u_z-u^2] =& \frac{1}{2}, \qquad  \text{and} \qquad
    \frac{d}{d z} \mathcal{L}_{n+1} = \left( \frac{d^3}{d z^3} + 4(u_z - u^2) \frac{d}{d z} + 2(u_z-u^2)_z \right) \mathcal{L}_n,
 \end{align}    
\end{definition}
For example, the first two cases of the FN PII hierarchy are 
\begin{itemize}
    \item For $n=1$, one has 
    \begin{align}\label{PII}
        {u}_{xx} - 2 u ^3 =  x u - \alpha_1
    \end{align}
    which is the PII equation, the first element of the hierarchy.
    \item For $n=2$, one has
    \begin{align}\label{PII-2}
        t_1 ( {u}_{xx} - 2 u ^3) + (u_{xxxx} - 10 u u_x^2 -10 u^2 u_{xx} + 6u^5) = xu + \alpha_2
    \end{align}
    which is the second element of the hierarchy.
\end{itemize}

\subsection{The linear differential equation formulation}
In \cite{Flaschka1980}, the FN PII equation \eqref{PII} was recovered from the compatibility equations of compatible linear differential equations involving the FN PII Lax matrix.  Following a proposal of \cite{https://doi.org/10.1002/sapm1974534249}, the construction has been generalized to the full hierarchy in \cite{articleCJM,Kudryashov} for specific values of the times and then by \cite{mazzocco2007hamiltonian} for generic values. This last work provides the isomonodromic formulation of the FN PII hierarchy summarized below.

\begin{definition} \label{DefIsomondoromicProb}
    Let $n\geq 1$. The isomonodromic deformation problem of the $n$-th member of the FN PII hierarchy is given by 
    \begin{align}
        \frac{\partial}{\partial \lambda} \Psi(\lambda) &= \mathcal{L}^{(n)}(\lambda) \Psi(\lambda), \qquad \frac{\partial}{\partial z} \Psi(\lambda) = \begin{pmatrix}
            - \lambda &u \\ u& \lambda    \end{pmatrix} \Psi(\lambda)  \nonumber \\
         \text{and } \,& (2 k +1) \frac{\partial}{\partial t_k} \Psi(\lambda) = \left( M^{(k)}(\lambda) - \begin{pmatrix}
             0  & (\partial_z  + 2 u ) \mathcal{L}_k \\
             (\partial_z  + 2 u ) \mathcal{L}_k  & 0
         \end{pmatrix} \right) \Psi(\lambda),\,\, \forall\, k\in \llbracket 1,n-1\rrbracket
    \end{align}
    with the Lax matrix 
    \beq \mathcal{L}^{(n)}(\lambda):=\frac{1}{\lambda}\left[\begin{pmatrix}-\lambda z& -\alpha_n\\ -\alpha_n& \lambda z \end{pmatrix}+M^{(n)}(\lambda)+\sum_{i=1}^{n-1} t_l M^{(l)}(\lambda)\right]\eeq
and with $M^{(l)}(\lambda)$ a polynomial in $\lambda$ given by
\beq \forall\, l\in \llbracket 1,n\rrbracket\,:\,  M^{(l)}(\lambda):=\begin{pmatrix}\underset{j=1}{\overset{2l+1}{\sum}} A_j^{(l)} \lambda^j & \underset{j=1}{\overset{2l}{\sum}} B_j^{(l)} \lambda^j  \\ \underset{j=1}{\overset{2l}{\sum}} C_j^{(l)}\lambda^j & -\underset{j=1}{\overset{2l+1}{\sum}} A_j^{(l)} \lambda^j \end{pmatrix}
\eeq 
with the coefficients 
\begin{align}
    A^{(l)}_{2l+1} :=& 4^l, \qquad A_{2k} = 0,\qquad  \forall\,k \in \llbracket 0,l\rrbracket  \nonumber \\
    A^{(l)}_{2k+1} :=& \frac{4^{k+1}}{2} \left[ \mathcal{L}_{l-k} (u_z-u^2) - \frac{d}{d z } \left( \frac{d}{d z }  + 2 u \right) \mathcal{L}_{l-k-1} (u_z-u^2)   \right], \qquad \forall\,k \in \llbracket 0,l-1\rrbracket\nonumber \\
     B^{(l)}_{2k+1} :=& \frac{4^{k+1}}{2} \frac{d}{d z } \left( \frac{d}{d z }  + 2 u \right) \mathcal{L}_{l-k-1} (u_z-u^2), \qquad \forall\,k \in \llbracket 0,l-1\rrbracket \nonumber \\ 
      B^{(l)}_{2k} :=& -4^{k} \left( \frac{d}{d z }  + 2 u \right) \mathcal{L}_{l-k} (u_z-u^2), \qquad \forall\,k\in \llbracket 1, l\rrbracket \\
      C^{(l)}_{2k+1} :=& - B^{(l)}_{2k+1}, \,\,\,\forall\, k\in \llbracket0 ,l-1\rrbracket \qquad \qquad  C^{(l)}_{2k} := B^{(l)}_{2k}, \qquad \forall\, k \in \llbracket 0,l\rrbracket.
\end{align}
\end{definition}

In the previous definition, the deformation parameters are $(z=t_0,t_1,\dots,t_{n-1})$ while the spectral parameter is $\lambda$ and $t_n=1$. The Lax matrix is $\mathcal{L}^{(n)}$. The compatibility of the above three equations provides the FN PII hierarchy and their consistency has been checked in \cite{mazzocco2007hamiltonian}. We will not reproduce the computations here and will state the main observation behind the above definition: the linear problem of \autoref{DefIsomondoromicProb} is the isomonodromic problem of the FN PII hierarchy. It is rather useful to regroup the coefficients of the matrices into the following form
\beq \mathcal{L}^{(n)}(\lambda):=\begin{pmatrix}\underset{k=0}{\overset{n}{\sum}} a_{2k+1}^{(n)}\lambda^{2k} & \underset{k=0}{\overset{n}{\sum}}  b_{2k}^{(n)}\lambda^{2k-1}+ \underset{k=0}{\overset{n-1}{\sum}}  b_{2k+1}^{(n)}\lambda^{2k}\\\underset{k=0}{\overset{n}{\sum}}  b_{2k}^{(n)}\lambda^{2k-1}- \underset{k=0}{\overset{n-1}{\sum}}  b_{2k+1}^{(n)}\lambda^{2k}& -\underset{k=0}{\overset{n}{\sum}}  a_{2k+1}^{(n)}\lambda^{2k}
    \end{pmatrix}
\eeq
and the correspondence between $\left(A_j^{(l)},B_j^{(l)},C_j^{(l)}\right)_{j,l}$ and $(a_{2k+1},b_{2k},b_{2k+1})_{0\leq k\leq n}$ is given by
\begin{align}
a_1^{(n)}=&\sum_{l=1}^n t_l A_{1}^{(l)} -z, \qquad  b_0^{(n)}=-\alpha_n, \qquad a_{2k+1}^{(n)}=\sum_{l=1}^n t_l A_{2k+1}^{(l)} \,\,,\,\, \forall \, k\in \llbracket 1,n\rrbracket \cr
b_{2k+1}^{(n)}=&\sum_{l=1}^n t_l B_{2k+1}^{(l)} \,\,,\,\, \forall \, k\in \llbracket 1,n-1\rrbracket, \quad \quad b_{2k}^{(n)}=\sum_{l=1}^n t_l B_{2k}^{(l)} \,\,,\,\, \forall \, k\in \llbracket 1,n\rrbracket
\end{align} 
The isomonodromic reformulation of the hierarchy allows one to attach a linear differential system whose isomonodromic deformations provide the hierarchy. However, in order to have explicit expressions, one needs to introduce some Darboux coordinates. In \cite{mazzocco2007hamiltonian}, the construction of the Darboux coordinates is based on considering the problem as a vector field on the coadjoint orbits of a twisted loop algebra. Using the Poisson structure on the dual loop algebra\footnote{The idea of the Poisson structure on dual loop algebras is known and well documented, see for instance \cite{Babelon_Bernard_Talon_2003,HarnadRouthier1994}. This is the main construction of the isospectral approach to integrable hierarchies.} one obtains a degenerate Poisson-Lie bracket whose symplectic leaves are the co-adjoint orbits of the elements. Following this, one obtains the Casimir elements as the times $t_1,\dots,t_{n}$.  One important observation of this approach is that the dimension of the coadjoint orbits is $2n$, this is of central importance when discussing the definition of the coordinates in the next section.

\subsection{Canonical coordinates and the Hamiltonians}

The idea behind the canonical coordinates of \cite{mazzocco2007hamiltonian} is the use of the algebro-geometric construction on the cotangent bundle of the Jacobian of the spectral curve \cite{Adams1993,Flaschka1976}. In particular, one uses the characteristic polynomial of the Lax matrix
\begin{definition}[Spectral curve] The spectral curve attached to the matrix $\mathcal{L}^{(n)}(\lambda)$ is the affine plane curve defined by the characteristic equation 
\begin{align}
    \Gamma(\mu, \lambda) := \left\{ \det\left( \mu - \mathcal{L}^{(n)}(\lambda) \right)=0 \right\} = \left\{ \mu^2 = - \det\left( \mathcal{L}^{(n)}(\lambda) \right) \right\}
\end{align}
This characteristic equation defines the eigenvalues as functions on the corresponding $2$-sheeted Riemann surface of genus $g$.     
\end{definition}
The construction is based on the Baker-Akhiezer eigenvector corresponding to the eigenvalue $\mu(\lambda)$, this is the separation of variable technique introduced by Sklyanin \cite{Sklyanin1995} that allows one to define the coordinates on the cotangent bundle of the Jacobian of the plane curve $\Gamma(\mu,\lambda)$. The first part of the coordinates $(\td{q}_j)_{1\leq j\leq 2n}$ is defined as 
\begin{align}
    (c_1, c_2 ) \psi(q) = 1
\end{align}
where the coefficients $c_1,c_2 \in \mathbb{C}$ are normalization constants. The coordinates are chosen, after the normalization choice, to satisfy the equation 
\begin{align}
    A_{1,2} (\td{q}_j) + A_{1,1} (\td{q}_j) - A_{2,2} (\td{q}_j) - A_{2,1} (\td{q}_j) = 0
\end{align}

The second half of the coordinates $(\td{p}_j)_{1\leq j\leq n}$ corresponds to the eigenvalues. The choice of normalization made in \cite{mazzocco2007hamiltonian} provides the coordinates $(\td{q}_j)_{1\leq j\leq n}$ as the roots of the equation 
\begin{align}
    \sum_{k=0}^{n-1} \left(  b_{2k+1}^{(n)}+  a_{2k+1}^{(n)} \right) \lambda^{2k} + a_{2n+1}^{(n)} \lambda^{2n} = 0
\end{align}
and their dual coordinates are given by 
\begin{align}
   \forall\, j\in \llbracket 1, 2n\rrbracket\,:\, \td{p}_j =  \sum_{k=0}^{n}   b_{2k}^{(n)} \td{q}_j^{2k-1}.
\end{align}
This set of coordinates is known to be canonical with respect to the Konstant-Kirillov Poisson structure when we restrict our analysis to the generic case:
\begin{align}
    \{ \td{p}_i,\td{p}_j \} =  \{\td{q}_i,\td{q}_j \}= 0,\qquad  \{ \td{p}_i,\td{q}_j \}= \delta_{i,j}\,\,,\,\, \forall\, (i,j)\in \llbracket 1,2n\rrbracket^2
\end{align}
At this level, one realizes that the dimension of the cotangent bundle $T^*J$ is $2g=4n$ which coincides with the dimension of the symplectic leaves on the coadjoint orbits. This allows one to treat these coordinates as canonical coordinates on the symplectic leaves themselves. However, in this construction, the genus of the hyperelliptic spectral curve $\Gamma(\mu, \lambda)$ is $g=2n$ which is twice the dimension of the leaves. In fact the characteristic equation could be written 
\begin{align}
    \mu^2 = 4^{2n} \lambda^{4n} + W_{2n-1} (\lambda^2) + \frac{\alpha_n^2}{\lambda^2}
\end{align}
with a degree $2n-1$ polynomial $W$, the genus in this case is $g=2n$. Another drawback of this construction is the Painlev\'{e} property that is not taken into account when considering the canonical coordinates. To remedy both problems, another set of coordinates should be used. The idea behind this is that the spectral curve admits an additional symmetry manifested by the change $\lambda \rightarrow -\lambda$ in addition to the hyperelliptic involution. As a consequence, eliminating this symmetry, $\Gamma$ is a $2-$sheeted cover of a hyperelliptic curve of genus $n$ given by taking $z= \lambda^2$
\begin{align}
    \mathcal{C} := \left\{ \mu^2 = 4^{2n} z^{2n} + W_{2n-1} (z) + \frac{\alpha_n^2}{z} \right\}
\end{align}
Looking back at the equation defining the coordinates $\td{q}_j$, one sees that on the doubled cover $\mathcal{C}$ these coordinates come in pairs ($\td{q}_j=-\td{q}_{n+j}$ for $j\in \llbracket 1,n\rrbracket$) and define only $n$ independent coordinates $\td{\mathbf{q}}$. Dual to these are the $n$ coordinates $\td{\mathbf{p}}$. This solves the issue of the dimension of the symplectic leaves\footnote{In fact, a construction of A. Weil states that symmetric functions of the divisor on  $\mathcal{C}$ appear as coordinates when looking at the Jacobian of the curve as an algebraic variety, this is well explained in \cite{Mumford2007}.}. This observation and the two problems encountered when dealing with symmetric Lax matrices motivated the definition of a new set of coordinates. In order to preserve the Painlev\'{e} property of isomonodromic deformations, one considers another set of coordinates defined as follows
\begin{definition}[Symmetric algebro-geometric coordinates \cite{mazzocco2007hamiltonian}]\label{DefDarbouxMarta} Define the following set of coordinates
\begin{align}
    \td{P}_k = \td{\Pi}_{2k} = \frac{a_{2(n-k)+1}^{(n)}+b_{2(n-k)+1}^{(n)}}{a_{2n+1}^{(n)}}, \qquad \td{Q}_k = \sum_{j=1}^n \frac{1}{2 j } b_{2j}^{(n)} \frac{\partial \td{S}_{2j}}{\partial\td{\Pi}_{2k} },\qquad k \in \llbracket 1,n \rrbracket
\end{align} 
where $\td{S}_k = \underset{j=1}{\overset{2n}{\sum}} (\td{q}_j)^k,$ for $k \in \llbracket 1,2n\rrbracket$ and $(\td{\Pi}_j)_{1\leq j\leq 2n}$ are the symmetric functions given by 
\begin{align}
    \td{\Pi}_1 = \td{q}_1+\td{q}_2+...+\td{q}_{2n}, \qquad \td{\Pi}_2 = \sum_{1 \leq j <k \leq 2n} \td{q}_j \td{q}_k ,\qquad \td{\Pi}_{2n} = \td{q}_1 \td{q}_2 ...\td{q}_{2n}. 
\end{align}
\end{definition}

These coordinates are defined on the symplectic leaves and inherit the canonical property from the set $(\td{q}_j,\td{p}_j)_{1\leq j\leq 2n}$: the Poisson structure and the canonical nature of these coordinates has been proven in Theorem $5.1$ of \cite{mazzocco2007hamiltonian}. Furthermore, with these coordinates, Mazzocco and Mo showed that the Lax matrices are written in a particular polynomial form 
\begin{proposition}[Theorem 6.1 of \cite{mazzocco2007hamiltonian}]\label{LaxMatrixMarta} Let us define 
\begin{align}
    \td{\mathcal{A}}(\lambda) :=\sum_{i=0}^n  a_{2i+1}^{(n)} \lambda^{2i +1} , \qquad \td{\mathcal{B}}_{odd}(\lambda)  := \sum_{i=0}^n  b_{2i+1}^{(n)} \lambda^{2i +1}  , \qquad \td{\mathcal{B}}_{even}(\lambda)  := \sum_{i=1}^n  b_{2i}^{(n)} \lambda^{2i}
\end{align}
Furthermore, define the following truncated series 
\begin{align}
    \td{\mathcal{Q}}(\lambda)  :=\sum_{i=1}^n \td{Q}_i \lambda^{2i} , \qquad \td{\mathcal{P}}(\lambda)  := \sum_{i=1}^n \td{P}_i \lambda^{-2i} ,  \qquad 
    \td{\mathcal{T}}(\lambda)  := \sum_{i=1}^{n-1} s_i (2 \lambda)^{2i-2n}- z (2 \lambda)^{-2n}
 \end{align} 
Then, the following relations hold
\begin{align}\label{CoeffABB}
    \td{\mathcal{A}}(\lambda) = & \left( \frac{1}{4} (2\lambda)^{2n+1} \left( 1 + \td{\mathcal{P}} + \frac{\left( 1 + \td{\mathcal{T}} \right)^2}{1+\td{\mathcal{P}}} \right) - (2 \lambda)^{-2n-1} \mathcal{\td{Q}}^2(1+ \td{\mathcal{P}}) \right)_+ \cr
    \td{\mathcal{B}}_{odd}(\lambda)  = &  \left( \frac{1}{4} (2\lambda)^{2n+1} \left( 1 + \td{\mathcal{P}} - \frac{\left( 1 + \td{\mathcal{T}} \right)^2}{1+\td{\mathcal{P}}} \right) + (2 \lambda)^{-2n-1} \td{\mathcal{Q}}^2(1+ \td{\mathcal{P}}) \right)_+ \cr
    \td{\mathcal{B}}_{even}(\lambda)  = & -\lambda^2 \left( \lambda^{-2} \td{\mathcal{Q}} (1+ \td{\mathcal{P}}) \right)_+
\end{align}
where the inverse $(1+ \td{\mathcal{P}})^{-1}$ is thought of as $\underset{l=0}{\overset{\infty}{\sum}} (-\td{\mathcal{P}})^l$.
\end{proposition}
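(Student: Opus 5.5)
The plan is to reconstruct the matrix entries from three pieces of data: the spectral curve, the definition of $\td{\mathcal{P}}$ as normalized coefficients of $a+b_{odd}$, and the definition of $\td{\mathcal{Q}}$ through the dual coordinates. The first step is to recognize $1+\td{\mathcal{P}}$ directly. By \autoref{DefDarbouxMarta}, the polynomial $\td{\mathcal{A}}+\td{\mathcal{B}}_{odd}$, being odd in $\lambda$, equals
\[
a_{2n+1}^{(n)}\lambda^{2n+1}\bigl(1+\td{\mathcal{P}}(\lambda)\bigr),
\]
where $a_{2n+1}^{(n)}=\sum_l t_l A^{(l)}_{2n+1}=4^n$. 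Since $4^n\lambda^{2n+1}=\frac14(2\lambda)^{2n+1}$, we get $\td{\mathcal{A}}+\td{\mathcal{B}}_{odd}=\frac14(2\lambda)^{2n+1}(1+\td{\mathcal{P}})$. This is exactly the sum of the first two formulas in \eqref{CoeffABB}, because the $\td{\mathcal{T}}$ and $\td{\mathcal{Q}}$ terms cancel and the $(\cdot)_+$ part of a polynomial is itself. It therefore remains to identify the difference $\td{\mathcal{A}}-\td{\mathcal{B}}_{odd}$ and the even part $\td{\mathcal{B}}_{even}$.

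Next I use the determinant. Up to the factor $\lambda^{-1}$ from the Lax matrix, $-\det(\lambda\mathcal{L}^{(n)})$ reads
\[
\td{\mathcal{A}}^2+\bigl(\td{\mathcal{B}}_{even}+\td{\mathcal{B}}_{odd}\bigr)\bigl(\td{\mathcal{B}}_{even}-\td{\mathcal{B}}_{odd}\bigr)
=(\td{\mathcal{A}}+\td{\mathcal{B}}_{odd})(\td{\mathcal{A}}-\td{\mathcal{B}}_{odd})+\td{\mathcal{B}}_{even}^2 ,
\]
with $\alpha_n$ entering $\td{\mathcal{B}}_{even}$ through the constant term $b_0$. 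Hence
\[
\td{\mathcal{A}}-\td{\mathcal{B}}_{odd}=\frac{\mu^2\lambda^2-\td{\mathcal{B}}_{even}^2}{\frac14(2\lambda)^{2n+1}(1+\td{\mathcal{P}})} .
\]
Because the left side is a polynomial, it equals the $(\cdot)_+$ part of the expansion of the right side at $\lambda=\infty$.

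The high-order coefficients of $\mu^2\lambda^2$ are not spectral invariants of the coadjoint orbit but Casimirs, namely the times $t_l$ and $z$. I would show that they assemble into $\frac1{16}(2\lambda)^{4n+2}(1+\td{\mathcal{T}})^2$ modulo terms that disappear under $(\cdot)_+$. This uses the explicit leading structure of $\mathcal{L}^{(n)}$: only the top $n+1$ coefficients of $\lambda^2\det$ survive the division by $\lambda^{2n+1}(1+\td{\mathcal{P}})$ in the positive part, and these are fixed by $t_l$ and $z$ (this is the standard mKdV recursion fact that the Casimirs are the $t_l$). The constants $s_i$ in $\td{\mathcal{T}}$ are defined precisely so that this holds. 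Dividing gives the $\frac14(2\lambda)^{2n+1}(1+\td{\mathcal{T}})^2/(1+\td{\mathcal{P}})$ term.

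For $\td{\mathcal{B}}_{even}$, I use that $\td{p}_j=\lambda^{-1}\td{\mathcal{B}}_{even}(\td{q}_j)$ at the roots $\td{q}_j$ of $\lambda^{2n}(1+\td{\mathcal{P}})$. The polynomial $\td{\mathcal{B}}_{even}/\lambda^2$ has degree $2n-2$ in $\lambda$, that is $n-1$ in $\lambda^2$, while $\td{\mathcal{Q}}(1+\td{\mathcal{P}})$ involves $\td{Q}$ through the derivatives $\partial\td{S}_{2j}/\partial\td{\Pi}_{2k}$. Newton's identities give $\partial \td{S}_{2j}/\partial\td{\Pi}_{2k}$ as the coefficients of a truncated expansion of $\log(1+\td{\mathcal{P}})$ differentiated with respect to $\td{P}_k$. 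The identity $\td{\mathcal{B}}_{even}=-\lambda^2(\lambda^{-2}\td{\mathcal{Q}}(1+\td{\mathcal{P}}))_+$ then becomes a triangular linear relation between $(b_{2j})$ and $(\td{Q}_k)$. I would invert it using $\frac{1}{2j}\partial_{\td{P}_k}\td{S}_{2j}=$ the coefficient of $\lambda^{-2j}$ in $-\lambda^{2k}\,\partial_{\td P_k}$ of the expansion, which collapses to a coefficient of $(1+\td{\mathcal{P}})^{-1}$.

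Finally, substituting $\td{\mathcal{B}}_{even}^2$ gives $\td{\mathcal{Q}}^2(1+\td{\mathcal{P}})^2\lambda^{-2}$ up to a negative-order remainder. Dividing by $\frac14(2\lambda)^{2n+1}(1+\td{\mathcal{P}})$ produces the term $-(2\lambda)^{-2n-1}\cdot 4\td{\mathcal{Q}}^2(1+\td{\mathcal{P}})$, and normalization constants must match. I must still check that the discarded remainder contributes nothing to $(\cdot)_+$, by counting degrees. Averaging and subtracting the identities for $\td{\mathcal{A}}\pm\td{\mathcal{B}}_{odd}$ then yields \eqref{CoeffABB}. I expect the main obstacle to be the $\td{\mathcal{Q}}$ inversion step: proving the Newton-identity manipulation exactly, including the constant term $b_0=-\alpha_n$, and keeping all powers of $2$ consistent.
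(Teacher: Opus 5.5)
The paper does not prove this statement. It imports it as Theorem 6.1 of \cite{mazzocco2007hamiltonian}, so your outline has to stand on its own.

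\textbf{The structure is right, but the constants are wrong.} Your three ingredients are the correct ones: the sum $\td{\mathcal{A}}+\td{\mathcal{B}}_{odd}$ from the definition of $\td{P}_k$; the difference from $\lambda^2\mu^2=(\td{\mathcal{A}}+\td{\mathcal{B}}_{odd})(\td{\mathcal{A}}-\td{\mathcal{B}}_{odd})+(b_0+\td{\mathcal{B}}_{even})^2$; and $\td{\mathcal{B}}_{even}$ from the power sums. But as written the argument does not close, because a normalization error propagates.

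Since $(2\lambda)^{2n+1}=2\cdot 4^n\lambda^{2n+1}$, you have $a^{(n)}_{2n+1}\lambda^{2n+1}=4^n\lambda^{2n+1}=\frac12(2\lambda)^{2n+1}$, not $\frac14(2\lambda)^{2n+1}$. The first two lines of \eqref{CoeffABB} also add up to $\frac12(2\lambda)^{2n+1}(1+\td{\mathcal{P}})$, which is the normalization the paper uses in the proof of \autoref{PropIdentification}. So your first step comes out right only because two slips cancel.

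Downstream, the Casimir part of $\lambda^2\mu^2$ must be $\frac14(2\lambda)^{4n+2}(1+\td{\mathcal{T}})^2$. Its leading coefficient has to be $(a^{(n)}_{2n+1})^2=4^{2n}$, whereas your $\frac1{16}$ gives $4^{2n-1}$. The correct difference is
\[
\td{\mathcal{A}}-\td{\mathcal{B}}_{odd}=\Bigl(\tfrac12(2\lambda)^{2n+1}\tfrac{(1+\td{\mathcal{T}})^2}{1+\td{\mathcal{P}}}-2(2\lambda)^{-2n-1}\td{\mathcal{Q}}^2(1+\td{\mathcal{P}})\Bigr)_+ .
\]
With your constants $\frac14$ and $4$, averaging gives $\frac18$ and $-2$ in $\td{\mathcal{A}}$ instead of $\frac14$ and $-1$. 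The relative weight of the two terms is wrong, not just an overall scale. With the corrected constants, the half-sum and half-difference reproduce \eqref{CoeffABB} exactly.

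Two smaller corrections:
\begin{itemize}
\item $\td{\mathcal{B}}_{even}^2=\td{\mathcal{Q}}^2(1+\td{\mathcal{P}})^2+\dots$ carries no factor $\lambda^{-2}$.
\item The remainder $r:=\td{\mathcal{B}}_{even}+\td{\mathcal{Q}}(1+\td{\mathcal{P}})$ has degree $\le 0$, not negative degree. This still suffices: since $\deg\td{\mathcal{Q}}\le 2n$, both $\lambda^{-2n-1}\td{\mathcal{Q}}\,r$ and $\lambda^{-2n-1}r^2$ contain only negative powers.
\end{itemize}

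\textbf{You have misplaced the difficulty.} The $\td{\mathcal{Q}}$ step is immediate:
\begin{itemize}
\item The roots come in pairs $\pm\td{q}_j$, so $\td{\Pi}_{odd}=0$ and $\sum_{j\ge1}\frac{\td{S}_{2j}}{2j}\lambda^{-2j}=-\log(1+\td{\mathcal{P}})$.
\item Hence $\frac1{2j}\partial_{\td{P}_k}\td{S}_{2j}=-[\lambda^{2k-2j}](1+\td{\mathcal{P}})^{-1}$, which gives $\lambda^{-2}\td{\mathcal{Q}}=-\bigl(\lambda^{-2}\td{\mathcal{B}}_{even}(1+\td{\mathcal{P}})^{-1}\bigr)_+$.
\item Multiplying by $1+\td{\mathcal{P}}$ and taking $(\cdot)_+$ inverts this, because the error term is a strictly negative series times a nonpositive one.
\end{itemize}
Moreover, $b_0=-\alpha_n$ never enters $\td{Q}_k$. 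In the determinant it appears only through $2b_0\td{\mathcal{B}}_{even}+b_0^2$, which has degree $\le 2n$ and is killed by the division by $\lambda^{2n+1}$.

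\textbf{The Casimir step is the real content, and you only assert it.} What must be shown is that the coefficients of $\lambda^{2m}$, $m\ge n+1$, of $-\det(\lambda\mathcal{L}^{(n)})$ coincide with those of $\hat\tau^2$, where $\hat\tau(\lambda)=\sum_{l=0}^{n}t_l4^l\lambda^{2l+1}$ with $t_0:=-z$ and $t_n=1$. Here is one way to prove it:
\begin{itemize}
\item From \autoref{DefIsomondoromicProb}, the rescaled coefficients $4^{-l}A^{(l)}_{j}$ and $4^{-l}B^{(l)}_{j}$ depend on $(l,j)$ only through $2l+1-j$.
\item So the matrices $4^{-l}\lambda^{-2l-1}M^{(l)}$ are truncations of a single series $R(\lambda)=\operatorname{diag}(1,-1)+O(\lambda^{-1})$, and $\lambda\mathcal{L}^{(n)}=\hat\tau R+E$ with $\deg E\le 0$.
\item Therefore $-\det(\lambda\mathcal{L}^{(n)})=\frac12\operatorname{tr}(\lambda\mathcal{L}^{(n)})^2=\hat\tau^2\,\frac12\operatorname{tr}R^2+O(\lambda^{2n+1})$.
\item Finally, $\frac12\operatorname{tr}R^2=1$ is the resolvent normalization encoded by the Lenard recursion, taking the integration constants to be zero.
\end{itemize}
Since $\hat\tau=\frac12(2\lambda)^{2n+1}(1+\td{\mathcal{T}})$ exactly when $s_i=t_i$, this argument also identifies the $s_i$, which the statement leaves undefined.
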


The knowledge of the Lax matrix in terms of $(\td{\mathcal{T}},\td{\mathcal{Q}},\td{\mathcal{P}})$ allows to express the spectral curve and thus to define the \textit{spectral invariants}.

\begin{definition} \label{LemmaSpecInv} The coefficients of the spectral curve of the Lax matrix $\mathcal{L}^{(n)}$ can be expressed as polynomials of the irregular times and symmetric coordinates: 
\small{\begin{align}
    \mu^2 =& (2 \lambda) ^{2 n -1} (1 + \mathcal{P}) \left( (2 \lambda) ^{2 n +1 } \frac{ \left( 1 + \mathcal{T} \right)^2}{\left( 1 + \mathcal{P} \right)} - 4  (2 \lambda) ^{-2 n -1} \mathcal{Q}^2 \left( 1 + \mathcal{P} \right) \right)_+ 
     + \lambda^{-2} \left( \lambda^2 \left( \lambda^{-2} \mathcal{Q} (1 + \mathcal{P})  \right)_+ - \alpha_n \right)^2.
\end{align}}
\normalsize{The} expansion of the spectral curve at its poles define the spectral invariants:
\begin{align}
    \mu^2  = & (2 \lambda)^{2n} \left( (2 \lambda)^{2n}  \left( 1 + \mathcal{T} \right)^2\right)_+ - 4^n \td{h}_n^{(n)} \lambda^{2n-2} + \sum_{k=1}^{n-1} \td{h}^{(n)}_k \lambda^{2k-2} + \frac{\alpha_n^2 }{\lambda^2} 
\end{align}
In particular, it gives that irregular times $(t_i)_{1\leq i\leq n-1}$ and $\alpha_n$ are trivial spectral invariants while $\td{h}_n^{(n)}$ and  $(\td{h}^{(n)}_k)_{1\leq k\leq n-1}$ are the non-trivial spectral invariants.  
\end{definition}

The coefficient $\td{h}_n^{(n)}$ (order $\lambda^{2n-2}$ at infinity of $\mu^2$) has several specific properties:

\begin{proposition}[Theorem $7.2$ of \cite{mazzocco2007hamiltonian}]\label{Prophnn} $\td{h}_n^{(n)}$ is given by
\begin{align}
    \td{h}_n^{(n)}(\td{\mathbf{P}},\td{\mathbf{Q}};z) &:= -\frac{1}{4^n} \bigg( \sum_{l=0}^{n-1}   a_{2l+1}^{(n)}  a_{2(n-l)-1}^{(n)} - \sum_{l=0}^{n-1}   b_{2l+1}^{(n)}  b_{2(n-l)-1}^{(n)} +  \sum_{l=0}^{n}   b_{2l}^{(n)}  b_{2(n-l)}^{(n)} \bigg) 
\end{align}
Moreover, 
\beq \td{\mathcal{H}}_n^{(n)}(\td{\mathbf{P}},\td{\mathbf{Q}};z) :=\td{h}_n^{(n)}(\td{\mathbf{P}},\td{\mathbf{Q}};z) +\frac{\td{Q}_n}{4^n}\eeq
corresponds to the Hamiltonian for the evolution relatively to $z$ since the $n$-th member of the FN PII hierarchy is given by:
\begin{align}
   \forall \, k\in \llbracket 1,n\rrbracket\,:\,  \frac{ d \td{P}_k}{d z} = - \frac{ \partial \td{\mathcal{H}}_n^{(n)}}{ \partial \td{Q}_k}, \qquad \frac{ d \td{Q}_k}{d z} = \frac{ \partial \td{\mathcal{H}}_n^{(n)}}{ \partial \td{P}_k}, 
\end{align}
\end{proposition}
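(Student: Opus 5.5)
The proposition has two parts, and I would treat them separately.

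\textbf{Part 1: the formula for $\td{h}_n^{(n)}$.} The plan is to read off the coefficient of $\lambda^{2n-2}$ in $\mu^2=-\det\mathcal{L}^{(n)}(\lambda)$. Write $\mathcal{L}^{(n)}$ in the regrouped form with entries $\lambda^{-1}\td{\mathcal{A}}$, $\lambda^{-1}(\td{\mathcal{B}}_{even}+\td{\mathcal{B}}_{odd})$ and $\lambda^{-1}(\td{\mathcal{B}}_{even}-\td{\mathcal{B}}_{odd})$ (the latter two with $b_0$ included in the even part). Then
\[
\lambda^2\mu^2=\td{\mathcal{A}}^2+\td{\mathcal{B}}_{even}^2-\td{\mathcal{B}}_{odd}^2 .
\]
Extracting the coefficient of $\lambda^{2n}$ gives $\sum a_{2l+1}a_{2(n-l)-1}-\sum b_{2l+1}b_{2(n-l)-1}+\sum b_{2l}b_{2(n-l)}$, with index ranges dictated by the degrees. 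One must check that the sums run over the correct ranges: $a_{2n+1}$ does pair with $a_{-1}=0$. This coefficient is then compared with the normalisation in Definition \ref{LemmaSpecInv}. The term $(2\lambda)^{2n}\big((2\lambda)^{2n}(1+\td{\mathcal{T}})^2\big)_+$ also contributes at order $\lambda^{2n-2}$, through the $z$ and $s_i$ cross terms. The prefactor $-4^{-n}$ must therefore absorb both the normalisation and this trivial part. I expect this to be a routine but sign-sensitive bookkeeping step.

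\textbf{Part 2: the Hamiltonian equations.} This is the main content. I would use the isomonodromic zero-curvature equation
\[
\partial_z\mathcal{L}^{(n)}-\partial_\lambda U=[U,\mathcal{L}^{(n)}],\qquad U=\begin{pmatrix}-\lambda&u\\u&\lambda\end{pmatrix}.
\]
This gives the $z$-derivatives of the coefficients $a_{2k+1},b_{2k+1},b_{2k}$ as polynomial expressions in the coefficients, of the form
\[
\partial_z a=2u(\dots)\pm 2(\dots),
\]
together with the explicit term coming from $\partial_\lambda U$. Since $\td{P}_k=(a_{2(n-k)+1}+b_{2(n-k)+1})/a_{2n+1}$ with $a_{2n+1}=4^n$ constant, $\partial_z\td{P}_k$ follows directly as a combination of the coefficients.

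For $\td{Q}_k$ I would use the explicit polynomial parametrisation of Proposition \ref{LaxMatrixMarta}. There the relation $\td{\mathcal{B}}_{even}=-\lambda^2(\lambda^{-2}\td{\mathcal{Q}}(1+\td{\mathcal{P}}))_+$ is triangular in $\td{Q}$ and can be inverted. Similarly, $\td{\mathcal{A}}+\td{\mathcal{B}}_{odd}=\tfrac12(2\lambda)^{2n+1}(1+\td{\mathcal{P}})$ encodes $\td{P}$, and $\td{\mathcal{A}}-\td{\mathcal{B}}_{odd}$ is determined by $\td{\mathcal{T}},\td{\mathcal{P}},\td{\mathcal{Q}}$. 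Substituting these into the flow equations expresses $\partial_z\td{P}_k$ and $\partial_z\td{Q}_k$ as functions of $(\td{\mathbf{P}},\td{\mathbf{Q}};z)$.

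Independently, $\td{h}_n^{(n)}$ can be written via the spectral curve formula of Definition \ref{LemmaSpecInv} as a coefficient of generating series in $\td{\mathcal{P}},\td{\mathcal{Q}},\td{\mathcal{T}}$. Differentiating this with respect to $\td{P}_k$ and $\td{Q}_k$ gives residue-type expressions. The aim is to match these with the flows. The explicit $z$-dependence enters only through $\td{\mathcal{T}}$, via its $-z(2\lambda)^{-2n}$ term, and through $a_1$. The correction $\td{Q}_n/4^n$ should account for the non-autonomous term $\partial_\lambda U=\mathrm{diag}(-1,1)$. Indeed, that term shifts only $\partial_z a_1$ and hence only $\partial_z\td{P}_n$, by $-4^{-n}$, which is exactly $-\partial_{\td{Q}_n}(\td{Q}_n/4^n)$.

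\textbf{Main obstacle.} I expect the hard step to be matching the nonlinear flow with $\partial\td{h}_n^{(n)}/\partial\td{Q}_k$ and $\partial\td{h}_n^{(n)}/\partial\td{P}_k$. As a shortcut, I would first use the canonicity of $(\td{\mathbf{P}},\td{\mathbf{Q}})$ for the Kostant–Kirillov bracket (Theorem 5.1 of \cite{mazzocco2007hamiltonian}). The isospectral part of the flow is generated by the spectral invariant $\td{h}_n^{(n)}$ through that bracket, since $U$ is, up to the gauge/$u$ term, the polynomial part $(\lambda^{-2n}\mathcal{L}^{(n)})_+$ suitably normalised, i.e. an $r$-matrix flow. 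This leaves only the explicit non-isospectral term to be checked by hand, and that term is handled by the $\td{Q}_n/4^n$ correction described above.
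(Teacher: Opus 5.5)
The paper does not prove this statement. It is imported from Theorem~7.2 of \cite{mazzocco2007hamiltonian}, and the text only cross-checks $n=1,2$ through the reduced PIV Hamiltonians in \autoref{sec5}. Your argument is therefore a self-contained reconstruction, and its architecture is sound. The isospectral part $[U,\mathcal L^{(n)}]$ is the Adler--Kostant--Symes (AKS) flow of $\td h_n^{(n)}$. The non-isospectral part $\partial_\lambda U=\diag(-1,1)$ is a constant matrix, which is exactly why the correction is as simple as $\td Q_n/4^n$; a bare citation does not explain this. Two points need correcting, though.

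\textbf{Part 1.} Your claim that $(2\lambda)^{2n}\big((2\lambda)^{2n}(1+\td{\mathcal T})^2\big)_+$ contributes at order $\lambda^{2n-2}$ is false. The truncation $(\cdot)_+$ keeps only non-negative powers, so after multiplying by $(2\lambda)^{2n}$ only powers $\lambda^j$ with $j\geq 2n$ remain. The $z s_i$ cross terms lie at negative powers inside the bracket and are discarded. Hence the coefficient of $\lambda^{2n-2}$ in $\mu^2$ is exactly $-4^n\td h_n^{(n)}$, and there is no ``trivial part'' for the prefactor to absorb. Your identity $\lambda^2\mu^2=\td{\mathcal A}^2+\td{\mathcal B}_{even}^2-\td{\mathcal B}_{odd}^2$ then gives the formula immediately. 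For instance, at $n=1$ one gets $\td h_1^{(1)}=4\td P^2+\tfrac14\td P\td Q^2+2z\td P-\tfrac12\alpha_1\td Q$ directly.

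\textbf{Part 2.} The key step is asserted with hedges that should be removed, and the exact form of $U$ needs fixing.
\begin{itemize}
\item Since $a_{2n+1}=4^n$ and $b_{2n}=-4^n u$, one has exactly $U=-4^{-n}\big(\lambda^{1-2n}\mathcal L^{(n)}\big)_+$, with no leftover $u$ term. With $\lambda^{-2n}$ instead, the polynomial part is only $4^n\diag(1,-1)$.
\item Write $[\lambda^k]$ for coefficient extraction. Then $\td h_n^{(n)}=-4^{-n}[\lambda^{2n-2}]\tfrac12\mathrm{tr}(\mathcal L^{(n)})^2$, whose gradient for the pairing $[\lambda^{-1}]\mathrm{tr}(XY)$ is $-4^{-n}\lambda^{1-2n}\mathcal L^{(n)}$. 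AKS on the twisted loop algebra then yields $\pm[U,\mathcal L^{(n)}]$, provided this is the bracket for which Theorem~5.1 of \cite{mazzocco2007hamiltonian} makes $(\td{\mathbf P},\td{\mathbf Q})$ canonical.
\item The sign is not cosmetic: the paper itself notes that the two formalisms differ by the sign of the symplectic form. You can fix it at $n=1$, where $\dot{\td Q}=4u_z=2b_1=2z+8\td P+\tfrac14\td Q^2=\partial_{\td P}\td h_1^{(1)}$.
\item Your step ``hence only $\td P_n$'' needs the fact that $\td{\mathbf Q}$ does not depend on $a_1$. From $\td{\mathcal B}_{even}$ one gets $b_{2m}=-\big(\td Q_m+\sum_{l=1}^{n-m}\td P_l\td Q_{m+l}\big)$, which involves neither $\td P_n$ nor $a_1$.
\item The cleanest bookkeeping is in the coordinates $(\td{\mathbf P},\td{\mathbf Q},z)$. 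From \autoref{LaxMatrixMarta}, $\partial_z\mathcal L^{(n)}|_{\td{\mathbf P},\td{\mathbf Q}}=\begin{pmatrix}-1&1\\-1&1\end{pmatrix}$ and $\partial_{\td P_n}\mathcal L^{(n)}=4^n\begin{pmatrix}0&1\\-1&0\end{pmatrix}$. Hence $\partial_\lambda U=\partial_z\mathcal L^{(n)}|_{\td{\mathbf P},\td{\mathbf Q}}-4^{-n}\partial_{\td P_n}\mathcal L^{(n)}$.
\end{itemize}
So the non-isospectral term moves the Casimir $z$ by $1$ and $\td P_n$ by $-4^{-n}$, which is exactly the Hamiltonian vector field of $\td Q_n/4^n$.
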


Let us conclude this section with some comments on the difficulties of this approach: First of all, the symmetric coordinates of \autoref{DefDarbouxMarta} are not coordinates on which the isospectral conditions $\partial_z \hat{M}^{(k)} = \partial_{t_k} \mathcal{L}^{(n)}$ are verified. Therefore, the spectral invariants $(\td{h}_k^{(n)})_{1\leq k\leq n}$ do not produce Hamiltonians for the time flows and requires some corrections $\delta \td{h}_k^{(n)}$ that are difficult to compute and for which no explicit formulas exist (contrary to $k=n$ for which the correction term is given in \autoref{Prophnn}). In \cite{mazzocco2007hamiltonian}, a procedure is explained to obtain them using the resolutions of an overdetermined differential system but no explicit expressions could be derived from them. However, this procedure could be performed for $n=1$ and $n=2$ and we summarize the results below:
\begin{itemize}
    \item For $n=1$ one obtains the Hamiltonian evolutions
    \begin{align}
        \frac{d \td{Q}}{d z} = 8 \td{P} + \frac{1}{4} \td{Q}^2 + 2z   ,\qquad  \frac{d \td{P}}{d z} = - \frac{1}{4} -\frac{1}{2} \td{P} \td{Q} - \frac{\alpha_1}{2}
    \end{align}
with the Hamiltonian
\begin{align}\label{Example1}
    \td{\mathcal{H}}_1^{(1)} = 4 \td{P}^2 + \frac{1}{4} \td{Q} + \frac{1}{4} \td{P} \td{Q}^2 + 2 z\td{P} - \frac{1}{2} \td{Q} \alpha_1
 \end{align}
    \item For $n=2$ one obtains (see Example $8.6$ in \cite{mazzocco2007hamiltonian}):
\small{\bea \label{Example2}  \td{h}_{1}^{(2)}&=&-32z\td{P}_2+256\td{P}_1^2\td{P}_2-256\td{P}_2^2-2\td{P}_2\td{Q}_1\td{Q}_2-\td{P}_1\td{P}_2\td{Q}_2^2+2\alpha_2(\td{P}_1\td{Q}_2+\td{Q}_1)-16t_1\td{P}_2(8\td{P}_1-t_1)\cr
\td{h}_{2}^{(2)}&=&-32z\td{P}_1+256\td{P}_1^3-512\td{P}_1\td{P}_2-\td{P}_2\td{Q}_2^2+\td{Q}_1^2+2\alpha_2\td{Q}_2
-16t_1(8\td{P}_1^2-8\td{P}_2-t_1\td{P}_1)\cr
\delta \td{h}_{1}^{(2)}&=&-\td{P}_1\td{Q}_2+\td{Q}_1\cr
\delta \td{h}_{2}^{(2)}&=&-\td{Q}_2\cr
\td{\mathcal{H}}_2^{(2)}&:=&\mathrm{Ham}_z(\td{P}_1,\td{P}_2,\td{Q}_1,\td{Q}_2):=\frac{1}{16}\left(-\td{h}_{2}^{(2)}-\delta \td{h}_2^{(2)}\right)\cr
&=&2z\td{P}_1-16\td{P}_1^3+32\td{P}_1\td{P}_2+\frac{1}{16}\td{P}_2\td{Q}_2^2-\frac{1}{16}\td{Q}_1^2-\frac{1}{8}\alpha_2\td{Q}_2+8t_1\td{P}_1^2-8t_1\td{P}_2-t_1^2\td{P}_1+\frac{\td{Q}_2}{16}\cr
\td{\mathcal{H}}^{(2)}_1 &:=&\mathrm{Ham}_{t_1}(\td{P}_1,\td{P}_2,\td{Q}_1,\td{Q}_2):=\frac{1}{3}\left(\frac{1}{4}(\td{h}_1^{(2)}+ \delta \td{h}_1^{(2)}) -\frac{1}{16}t_1(\td{h}_2^{(2)} +\delta \td{h}_2^{(2)})\right)\cr
&=&
\frac{64}{3}\td{P}_1^2 \td{P}_2-\frac{64}{3}\td{P}_2^2-\frac{1}{6}\td{Q}_1\td{Q}_2\td{P}_2-\frac{1}{12}\td{P}_1\td{P}_2\td{Q}_2^2+\frac{1}{6}\alpha_2 \td{P}_1\td{Q}_2-\frac{8}{3}z\td{P}_2 \cr&&-\frac{4}{3}t_1^2 \td{P}_2
-\frac{1}{12}\td{P}_1\td{Q}_2+\frac{1}{12}\td{Q}_1-\frac{16}{3}t_1\td{P}_1^3+\frac{8}{3}t_1^2\td{P}_1^2-\frac{1}{3}t_1^3\td{P}_1+\frac{1}{48}t_1\td{P}_2\td{Q}_2^2+\frac{2}{3}zt_1\td{P}_1\cr&&-\frac{1}{48}t_1\td{Q}_1^2
-\frac{1}{24}t_1\alpha_2\td{Q}_2+\frac{1}{6}\alpha_2 \td{Q}_1+\frac{1}{48}t_1\td{Q}_2
\eea}
\normalsize{Note} that we have corrected several sign mistakes in \cite{mazzocco2007hamiltonian} (sign in front of $2\alpha_2 \td{P}_1 \td{Q}_2$ in $\td{h}_1^{(2)}$ and sign in front of $\td{Q}_1$ in $\delta \td{h}_{1}^{(2)}$). These corrections were checked by direct verification of the Hamilton equations and compatibility with the Lax formulation. Moreover, they are also coherent with the general procedure presented in \cite{mazzocco2007hamiltonian}. The Hamiltonian evolutions are given by:
\bea 
            \forall\, i\in \llbracket 1,2\rrbracket\,:\, \frac{\partial \td{Q}_i}{\partial z}&=&\frac{\partial \td{\mathcal{H}}_2^{(2)}}{\partial \td{P}_i},\qquad \frac{\partial \td{P}_i}{\partial z}=-\frac{\partial \td{\mathcal{H}}_2^{(2)}}{\partial \td{Q}_i}\cr
            \forall\, i\in \llbracket 1,2\rrbracket\,:\,  \frac{\partial \td{Q}_i}{\partial t_1}&=&\frac{\partial \td{\mathcal{H}}_1^{(2)}}{\partial \td{P}_i},\qquad \frac{\partial \td{P}_i}{\partial t_1}=-\frac{\partial \td{\mathcal{H}}_1^{(2)}}{\partial \td{Q}_i}
\eea
Moreover, upon the identification $\td{Q}_2 = 16u$, we have the equivalence
        \begin{equation}
    \forall\, i\in \llbracket 1,2\rrbracket\,:\, \frac{\partial \td{Q}_i}{\partial z}=\frac{\partial \td{\mathcal{H}}_2^{(2)}}{\partial \td{P}_i},\qquad \frac{\partial \td{P}_i}{\partial z}=-\frac{\partial \td{\mathcal{H}}_2^{(2)}}{\partial \td{Q}_i}\Longleftrightarrow \,\,  \eqref{PII-2},
        \end{equation}
Note that the auxiliary Virasoro symmetry $\frac{\partial u}{\partial t_1} = 2u^2u_z-\frac{1}{3}u_{zzz}$ (which here is just the statement that $u$ solves the mKdV equation) is equivalent to\footnote{This Hamiltonian could be mapped through a transformation to the Hamiltonian of the PXXXIV hierarchy, this is not surprising since the PII hierarchy already admits a similar Miura transformation linking its members to the PXXXIV hierarchy.}
        \begin{equation}
          \forall\, i\in \llbracket 1,2\rrbracket\,:\,  \frac{\partial \td{Q}_i}{\partial t_1}=\frac{\partial \td{\mathcal{H}}_1^{(2)}}{\partial \td{P}_i},\qquad \frac{\partial \td{P}_i}{\partial t_1}=-\frac{\partial \td{\mathcal{H}}_1^{(2)}}{\partial \td{Q}_i}
        \end{equation}
\end{itemize}

\section{The even PIV hierarchy} \label{sec3}
The even PIV hierarchy is a specific case of rank $2$ meromorphic connections over the Riemann sphere with a given pole structure. More precisely, it corresponds to one regular pole located at $X_0$ (i.e. $r_0=1$) and one irregular and non-twisted pole of even order $r_\infty=2d+2\geq 4$ located at $\infty$. Meromorphic connections over the Riemann sphere have been studied in detail in \cite{MarchalAlameddineIsospectralIsomono2023,marchal2024hamiltonianrepresentationisomonodromicdeformations} where expressions of the Lax matrices and Hamiltonians have been obtained for an arbitrary pole structure. This section is thus devoted to presenting the main results derived from this setting and to complement them with additional Darboux coordinates. Note that one cannot directly use the results of \cite{MarchalAlameddineIsospectralIsomono2023} because of a different choice of reduction of the symplectic space using M\"obius transformations.

\subsection{Isomonodromic formulation of the even PIV hierarchy}
The even PIV hierarchy governs the isomonodromic deformations of rank $2$ generic connections defined on the trivial vector bundle over the Riemann sphere with an effective divisor having one regular pole (i.e. $n=1$ and $r_0=1$ in the notation of \cite{marchal2024hamiltonianrepresentationisomonodromicdeformations}) located at $X_0$ (i.e. $r_0=1$), and one irregular pole of even order $r_\infty=2d+2\geq 4$ located at $ \infty$. Note that the PIV hierarchy can be defined by relaxing the even condition for the pole at infinity. This motivates the following definition:

\begin{definition}[The space of connections for the even PIV hierarchy]\label{DefP4hierarchy} Considering a meromorphic connection defined on the rank $2$ trivial vector bundle over the Riemann sphere, we define the following Poisson space:
\bea
F_{d} &:=& \Big\{\hat{L}(\lambda) = \sum_{k=1}^{2d+1} \hat{L}^{[\infty,k]} \lambda^{k-1} + \frac{\hat{L}^{[X_0,0]}}{\lambda-X_0}  \text{ with }\,\, \{\hat{L}^{[X_0,0]}\}\cup \{\hat{L}^{[\infty,k]}\}_{k\in \llbracket 1,2d+1\rrbracket} \in \left(\mathfrak{sl}_2(\mathbb{C})\right)^{2d+2}\cr
&& \text{ and }  (\hat{L}^{[\infty,2d+1]}, \hat{L}^{[X_0,0]})  \text{ have distinct eigenvalues.}
\Big\}/\text{SL}_2(\mathbb{C})
\eea
where the $\text{SL}_2(\mathbb{C})$ action is the conjugation action of the reductive group. Upon trivializing the bundle, one obtains the meromorphic differential system
\beq \label{eqhatL}\partial_\lambda \hat{\Psi}(\lambda)= \hat{L}(\lambda)\hat{\Psi}(\lambda)\eeq
defined in the usual complex-analytic charts of the Riemann sphere with $\hat{\Psi}(\lambda)$ the matrix of horizontal sections solution to the system. 
\end{definition}

From the theory of meromorphic differential systems, it is always possible to (locally) transform a generic (leading parts at each pole having distinct eigenvalues) connection on the formal punctured disk to a diagonal normal form. This is the classical Fabry-Hukuhara-Turritin-Levelt theorem \cite{Birkhoff,Wasowbook}. This provides the following proposition:

\begin{proposition}[Local diagonalization and irregular times] There exists a local and analytic gauge transformation, giving the Birkhoff factorization or formal Turritin-Levelt form in which the singular part of the connection matrix is diagonalizable:
\beq \hat{L}(\lambda) \overset{\lambda\to \infty}{\sim} \diag\left(-\sum_{k=0}^{2d+1} t_{\infty,k}\lambda^{k-1},\sum_{k=0}^{2d+1} t_{\infty,k}\lambda^{k-1} \right)+ O(\lambda^{-2})\eeq
Similarly, at $\lambda=X_0$, one has
\beq \hat{L}^{[X_0,0]}\sim \text{diag}(-t_{X_0,0},t_{X_0,0})\eeq
The parameters $\mathbf{t}:=(t_{\infty,k})_{1\leq k\leq 2d+1}$ appearing in the formal normal form are called the irregular times, they constitute the base manifold of parameters for isomonodromic deformations together with the location of the pole $X_0$. On the contrary, the parameters $(t_{\infty,0},t_{X_0,0})$ are called the monodromies at the poles, they constitute the exponent of formal monodromy of the connection. They are treated as constants while performing isomonodromic deformations.
\end{proposition}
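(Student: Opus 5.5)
The statement is the standard formal reduction, so I would give a constructive recursive argument, treating the two poles separately.

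\textbf{The pole at $\lambda=X_0$.} The pole is simple with residue $\hat{L}^{[X_0,0]}\in\mathfrak{sl}_2(\mathbb{C})$. By assumption this residue has distinct eigenvalues, so they are $\pm t_{X_0,0}$ with $t_{X_0,0}\neq 0$. A constant $G_0\in \text{SL}_2(\mathbb{C})$ conjugates it to $\diag(-t_{X_0,0},t_{X_0,0})$, which is exactly the claimed statement at $X_0$. If one also wants an analytic normalization of the whole local system, the standard Frobenius construction gives a holomorphic gauge $G(\lambda)=G_0(I+O(\lambda-X_0))$. This works provided $2t_{X_0,0}\notin\mathbb{Z}^*$, i.e. no resonance. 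The statement only concerns the residue, so this is not needed.

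\textbf{The pole at $\infty$: setting up.} Set $\xi=1/\lambda$. The system $\partial_\lambda\hat{\Psi}=\hat{L}\hat{\Psi}$ has leading term $\hat{L}^{[\infty,2d+1]}\lambda^{2d}$. Because this leading coefficient has distinct eigenvalues $\mp t_{\infty,2d+1}$, there is a constant $G_\infty$ diagonalizing it.

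\textbf{The pole at $\infty$: recursive gauge.} I then look for a formal gauge
\[
F(\lambda)=G_\infty\Big(I+\sum_{j\geq 1}F_j\lambda^{-j}\Big)
\]
such that
\[
F^{-1}\hat{L}F-F^{-1}\partial_\lambda F=D(\lambda),
\]
with $D$ diagonal. Equating coefficients of $\lambda^{2d-j}$ gives, at each order,
\[
[\Lambda,F_j]+D_j=\text{(known terms from lower orders)},
\]
where $\Lambda=\diag(-t_{\infty,2d+1},t_{\infty,2d+1})$. Since $\mathrm{ad}_\Lambda$ is invertible on off-diagonal matrices, I solve for the off-diagonal part of $F_j$ and set $D_j$ equal to the diagonal part of the right-hand side. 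The diagonal part of $F_j$ can be taken to be zero.

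\textbf{Reading off the times and the tracelessness.} The derivative term $F^{-1}\partial_\lambda F$ is $O(\lambda^{-2})$. Hence it does not affect the polynomial part of $D$ or its $\lambda^{-1}$ coefficient. Those coefficients define $t_{\infty,k}$ for $k=0,\dots,2d+1$, and everything beyond is absorbed into $O(\lambda^{-2})$. Tracelessness is preserved at every step because $\hat{L}\in\mathfrak{sl}_2$ and $\det F$ can be normalized to $1$. This makes $D$ of the form $\diag(-\,\cdot\,,\,\cdot\,)$.

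\textbf{From formal to analytic.} The formal series $F$ is generally divergent. It is the Borel summation / Hukuhara–Turrittin theorem that gives actual analytic gauges on sectors with this asymptotic expansion. This is what the statement calls the Birkhoff factorization.

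\textbf{Where the work lies.} The main obstacle is only bookkeeping: checking that the $\lambda^{-1}$ coefficient (the formal monodromy $t_{\infty,0}$) is well defined and gauge independent. I would handle this by noting that the ambiguity in $F_j$ lies only in its diagonal part, which commutes with $D$. Such a diagonal ambiguity changes $D$ only by $O(\lambda^{-2})$.
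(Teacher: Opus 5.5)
Your argument is correct and is essentially the standard proof of the Fabry--Hukuhara--Turrittin--Levelt reduction, which the paper does not prove but simply invokes (citing Birkhoff and Wasow). It has the same ingredients: constant diagonalization of the residue at $X_0$, and at $\infty$ the order-by-order recursion driven by the invertibility of $\mathrm{ad}_\Lambda$ on off-diagonal matrices, with derivative terms of order $O(\lambda^{-2})$ ensuring that $t_{\infty,0},\dots,t_{\infty,2d+1}$ are well defined.

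Since the statement only asks for the normal form modulo $O(\lambda^{-2})$, you can drop the Borel/Hukuhara--Turrittin step: truncating your formal gauge at order $N\geq 2d+1$ already gives a genuinely analytic gauge near $\infty$ (polynomial in $\lambda^{-1}$) that achieves it. One small inconsistency: you cannot both set the diagonal parts of the $F_j$ to zero and have $\det F=1$. With zero diagonal parts one only gets $\det F=1+O(\lambda^{-2})$, but this still yields $\mathrm{tr}\,D=O(\lambda^{-3})$ and hence the claimed traceless diagonal form.
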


The geometric picture is then captured by the definition of the tangent space to the base $\mathbb{B}=\{X_0,t_{\infty,1},\dots,t_{\infty,2d+1}\}$ of deformation parameters. This tangent space is the playground of the Hamiltonian and symplectic nature of the isomonodromy deformation equations. It is a symplectic variety of dimension $4d$. Therefore, in our setting, a general isomonodromic deformation corresponds to
\beq \label{GeneralDef}\mathcal{L}_{\boldsymbol{\td{\alpha}}}:=\sum_{k=1}^{2d+1}\td{\alpha}_{\infty,k}\partial_{t_{\infty,k}} + \td{\alpha}_{X_0}\partial_{X_0}\eeq
where $\td{\boldsymbol{\alpha}} \in \mathbb{C}^{2d+2}$ is a complex vector of the tangent space. This gives rise to a space of deformations of dimension $2d+2$. To a general deformation operator \eqref{GeneralDef}, we associate an auxiliary matrix $\hat{A}_{\td{\boldsymbol{\alpha}}}(\lambda)$ solution of:
\beq \label{AuxEq}\mathcal{L}_{\td{\boldsymbol{\alpha}}}[\hat{\Psi}(\lambda)]=\hat{A}_{\td{\boldsymbol{\alpha}}}(\lambda)\hat{\Psi}(\lambda)\eeq
which is also meromorphic with poles bounded by those of $\hat{L}(\lambda)$. The Hamiltonian structure obtained from the compatibility equations of \eqref{eqhatL} and \eqref{AuxEq} admits various symmetries, one of which is the set of M\"obius transformations on $\lambda$ \cite{Boalch2012,marchal2024hamiltonianrepresentationisomonodromicdeformations}. In practice, since we have fixed a pole at infinity, there remain two degrees of freedom from the action of M\"obius transformations that allows to choose two time parameters. In \cite{marchal2024hamiltonianrepresentationisomonodromicdeformations}, the canonical choice taken was $(t_{\infty,2d+1},t_{\infty,2d})=(1,0)$ and simplified versions of the Hamiltonians and Lax matrices were derived under this canonical choice. However, in order to obtain a more direct identification with the formalism of \autoref{sec2}, we will use the M\"obius transformations to take a different normalization.

\begin{definition}[Normalization from M\"obius transformations]\label{NormalizationLaxMatrix} Using part of the $\text{SL}_2(\mathbb{C})$ action, we normalize the Lax matrix at infinity by imposing
\beq \hat{L}^{[\infty,2d+1]}=\text{diag}(t_{\infty,2d+1},-t_{\infty,2d+1}) \eeq
Moreover, we use the M\"obius transformations on $\lambda$ to take:
\beq X_0=0 \text{ and } t_{\infty,2d+1}=4^d.\eeq
We shall refer to this normalization as the “FN normalization". Moreover, to stress that this normalization is enforced throughout the rest of this paper we shall denote $\check{\Psi}(\lambda)$, $\check{L}(\lambda)$ and $\check{A}_{\boldsymbol{\alpha}}(\lambda)$ the corresponding matrices.\footnote{We use a check rather than a tilde because the tilde notation was used in \cite{marchal2024hamiltonianrepresentationisomonodromicdeformations} for the normalization $(t_{\infty,r_\infty-1},t_{\infty,r_\infty-2})=(1,0)$ and $L^{[\infty,2d]}_{1,2}:=\omega=1$ which is not the one we are considering in this paper.} 
\end{definition}

Note that the normalization $t_{\infty,2d+1}=4^d$ anticipates the leading term of the FN PII hierarchy and simplifies the comparison carried out in \autoref{SecSymmetryReduction}.

\medskip

\autoref{NormalizationLaxMatrix} implies that a general isomonodromic deformation simplifies into
\beq  \label{GeneralDef2}\mathcal{L}_{\boldsymbol{\alpha}}:=\sum_{k=1}^{2d}\alpha_{\infty,k}\partial_{t_{\infty,k}} \eeq 
where $\alpha \in \mathbb{C}^{2d}$. This gives rise to a reduced space of deformations of dimension $2d$. In order to obtain explicit expressions for the Lax matrices, one needs to trivialize the symplectic structure on the phase space by defining $4d$ Darboux coordinates $(\mathbf{u},\mathbf{v}):=(u_j,v_j)_{1\leq j\leq 2d}$. As soon as the choice of Darboux coordinates is done, the compatibility condition
\beq \label{compatibility} \partial_\lambda \check{A}_{\boldsymbol{\alpha}}(\lambda)- \mathcal{L}_{\boldsymbol{\alpha}}[\check{L}(\lambda)] + \left[\check{A_{\boldsymbol{\alpha}}}(\lambda),\check{L}(\lambda) \right]=0
\eeq
provides the evolutions of the chosen coordinates: $\mathcal{L}_{\boldsymbol{\alpha}}[u_j], \mathcal{L}_{\boldsymbol{\alpha}}[v_j]$ in any direction $\boldsymbol{\alpha}$ of the tangent space. If these coordinates are chosen in a canonical way with respect to the Poisson structure, then these coordinates will satisfy the usual Hamiltonian evolution equations
\beq \mathcal{L}_{\boldsymbol{\alpha}}[u_j]= \frac{\partial \text{Ham}_{\boldsymbol{\alpha}}(\mathbf{u},\mathbf{v};\mathbf{t},t_{\infty,0}, t_{0,0})}{\partial v_j}\,\,,\,\, \mathcal{L}_{\boldsymbol{\alpha}}[v_j]= -\frac{\partial \text{Ham}_{\boldsymbol{\alpha}}(\mathbf{u},\mathbf{v};\mathbf{t},t_{\infty,0}, t_{0,0})}{\partial u_j}\eeq
for any isomonodromic deformations characterized by $\boldsymbol{\alpha}$. One of the main goals of \cite{MarchalAlameddineIsospectralIsomono2023,marchal2024hamiltonianrepresentationisomonodromicdeformations} was to provide suitable sets of Darboux coordinates and provide explicit formulas for both the Lax matrices and the Hamiltonians. We summarize these results in the following section.

\subsection{Oper Darboux coordinates and explicit parameterization}
The first set of canonical Darboux coordinates used to solve the compatibility equation is the set of ``oper Darboux coordinates": $(\mathbf{q},\mathbf{p}):=(q_1,\dots,q_{2d},p_1,\dots,p_{2d})$ defined by 
\beq\label{DefCheckL12}\check{L}_{1,2}(\lambda)=\frac{\omega \underset{i=1}{\overset{2d}{\prod}}(\lambda-q_i)}{\lambda}, \qquad \text{ and } \qquad  
p_i=\check{L}_{1,1}(q_i) \,\,, \qquad  \forall \, i\in \llbracket 1, 2d\rrbracket.
\eeq
These coordinates define a point on the spectral curve $\det( yI_2-\check{L}(\lambda))=0$, since $\det(p_i I_2-\check{L}(q_i))=0$ for all $i\in \llbracket 1,2d\rrbracket$. Moreover, they are canonical coordinates and the formula for the Hamiltonians and Lax matrices are available in \cite{marchal2024hamiltonianrepresentationisomonodromicdeformations}, we will briefly recall them below. The coefficient $\omega$ is an arbitrary parameter, it corresponds to the fact that we did not use all the degrees of freedom of the $SL_2(\mathbb{C})$ action to select a representative of $F_d$, in particular, one has the additional action of the centralizer that allows to fix one off-diagonal entry of the sub-leading coefficient without changing the leading order (conjugation by matrices $\text{diag}(1,\varphi)$). As we will see below, the symmetry imposed on the system  determines a specific choice of $\omega$. In this setting, the entry $\check{L}_{1,1}(\lambda)$ is given by:
\beq \label{DefCheckL11}\check{L}_{1,1}(\lambda)=\frac{-O_{2d}(\lambda)- (t_{\infty,2d+1}\lambda+t_{\infty,2d})\underset{i=1}{\overset{2d}{\prod}}(\lambda-q_i)}{\lambda} \,, \,\text{  with }\,\, \,
O_{2d}(\lambda):=-\sum_{i=1}^{2d} p_iq_i\prod_{j\neq i}\frac{\lambda-q_j}{q_i-q_j}
\eeq
a Lagrange interpolating term such that $O_{2d}(q_i)=p_i$ for all $i\in \llbracket 1, 2d\rrbracket$. As suggested by their name, the oper Darboux coordinates are particularly convenient to express the Lax matrices in the oper gauge. The oper gauge is reached through the transformation:
\beq \Psi_{\text{oper}}(\lambda) =\begin{pmatrix}1 &0\\
\check{L}_{1,1}(\lambda)& \check{L}_{1,2}(\lambda)\end{pmatrix} \check{\Psi}(\lambda)\eeq
for which the corresponding Lax matrix $L_{\text{oper}}(\lambda)$ is companion-like:
\begin{align} \left[L_{\text{oper}}(\lambda)\right]_{1,1}=&0, \qquad \left[L_{\text{oper}}(\lambda)\right]_{1,2}=1, \qquad 
  \left[L_{\text{oper}}(\lambda)\right]_{2,2}=\sum_{i=1}^{2d} \frac{1}{\lambda-q_i} -\frac{1}{\lambda}\cr
\left[L_{\text{oper}}(\lambda)\right]_{2,1}=&-t_{\infty,2d+1}\lambda^{2d-1} -\td{P}_2(\lambda) +\sum_{j=0}^{2d-2} H_{\infty, j}\lambda^{j}+\frac{H_{0,1}}{\lambda} -\sum_{j=1}^{2d}\frac{p_j}{\lambda-q_j}
\end{align}
where $\td{P}_2$ is a rational function of $\lambda$ given by the irregular times and monodromies:
\bea \td{P}_2(\lambda)&:=& -\sum_{k=0}^{2d-1} \left(\sum_{j=0}^k t_{\infty,2d+1-j} t_{\infty,2d+1-(k-j)}\right)\lambda^{4d-k} -\frac{(t_{0,0})^2}{\lambda^2}\cr
&=& -\underset{j= 2 d-1}{\overset{4d}{\sum}}\left(\underset{m=0}{\overset{4d-j}{\sum}} t_{\infty,2 d +1-m}t_{\infty,j+m-2d+1}\right) \lambda^{j}-\frac{(t_{0,0})^2}{\lambda^2}
\eea
and the coefficients $\mathbf{H}_{\infty}:=\left(H_{\infty,0},\dots, H_{\infty_{2d-2}}\right)$ and $H_{0,1}$ are the coefficients of the quantum curve satisfied by $\check{\Psi}_{1,1}=\left[\Psi_{\text{oper}}\right]_{1,1}$ and $\check{\Psi}_{1,2}=\left[\Psi_{\text{oper}}\right]_{1,2}$. The main advantage of the oper gauge is that one can solve more easily the compatibility equation \eqref{compatibility} to obtain the Hamiltonian evolutions of the oper Darboux coordinates. This result, derived in \cite{marchal2024hamiltonianrepresentationisomonodromicdeformations}, is recalled in the following proposition.

\begin{proposition}[Oper Hamiltonian evolutions for the even PIV hierarchy \cite{marchal2024hamiltonianrepresentationisomonodromicdeformations}]\label{PropHamOper}
Define the matrix $M_\infty(\mathbf{t}) \in \mathcal{M}_{2d}(\mathbb{C})$ and the vector $\boldsymbol{\nu}^{(\boldsymbol{\alpha})}_{\infty}:= \left(\nu^{(\boldsymbol{\alpha})}_{\infty,0},\dots, \nu^{(\boldsymbol{\alpha})}_{\infty,2d-1}\right) \in \mathcal{M}_{2d}(\mathbb{C})$ by 
\beq\label{MatrixMInfty} M_\infty(\mathbf{t}):=\begin{pmatrix}t_{\infty,2d+1}&0&\dots &\dots &0\\
\vdots &\ddots &\ddots  & &\vdots\\
\vdots &&\ddots&0&0\\
t_{\infty,3} &\dots&& t_{\infty,2d+1}&0\\
t_{\infty,2}& \dots & & t_{\infty,2d}& 2t_{\infty,2d+1}
 \end{pmatrix} \text{ and }\begin{pmatrix} 
\nu^{(\boldsymbol{\alpha})}_{\infty,0}\\ \vdots \\ \nu^{(\boldsymbol{\alpha})}_{\infty,2d-1}\end{pmatrix}:= (M_\infty(\mathbf{t}))^{-1}\begin{pmatrix} 
\frac{\alpha_{\infty,2d}}{2d}\\ \vdots \\ \frac{\alpha_{\infty,1}}{1}\end{pmatrix}
\eeq
Then the Hamiltonians are given by
\small{\beq \text{Ham}_{\boldsymbol{\alpha}}(\mathbf{q},\mathbf{p};\mathbf{t},t_{\infty,0},t_{0,0})=\sum_{k=0}^{2d-2} \nu_{\infty,k+1}^{\boldsymbol{(\alpha)}}H_{\infty,k}(\mathbf{q},\mathbf{p},\mathbf{t},t_{\infty,0},t_{0,0}) +\nu^{(\boldsymbol{\alpha})}_{\infty,0}\left(H_{0,1}(\mathbf{q},\mathbf{p},\mathbf{t},t_{\infty,0},t_{0,0})- \sum_{j=0}^{2d} p_j\right)
\eeq}
\normalsize{which} is equivalent to
\beq \label{NewHamReduced}\begin{pmatrix}\text{Ham}_{{t_{\infty,1}}}(\mathbf{q},\mathbf{p};\mathbf{t},t_{\infty,0},t_{0,0})\\ \vdots \\ (2d-1)\text{Ham}_{{t_{\infty,2d-1}}}(\mathbf{q},\mathbf{p};\mathbf{t},t_{\infty,0},t_{0,0})\\
(2d)\text{Ham}_{{t_{\infty,2d}}}(\mathbf{q},\mathbf{p};\mathbf{t},t_{\infty,0},t_{0,0})
\end{pmatrix}=\left(M_{\infty}(\mathbf{t})\right)^{-1}\begin{pmatrix}H_{\infty,2d-2}(\mathbf{q},\mathbf{p},\mathbf{t},t_{\infty,0},t_{0,0})\\ \vdots\\ H_{\infty,0}(\mathbf{q},\mathbf{p},\mathbf{t},t_{\infty,0},t_{0,0}) \\ H_{0,1}(\mathbf{q},\mathbf{p},\mathbf{t},t_{\infty,0},t_{0,0})-\underset{j=1}{\overset{2d}{\sum}}p_j
\end{pmatrix}
\eeq
Finally, the coefficients $\mathbf{H}_{\infty}$ and $H_{0,1}$ are given in terms of the oper Darboux coordinates by
\beq\label{HCoeffOperDarbouxCoord} \begin{pmatrix}1&q_1 &\dots &\dots &q_1^{2d-2}& \frac{1}{q_1}\\
1& q_2&\dots &\dots& q_{2}^{2d-2}& \frac{1}{q_2}\\
\vdots & & & & \vdots&\vdots\\
\vdots & & & & \vdots&\vdots\\
1& q_{2d} &\dots & \dots& q_{2d}^{2d-2}&  \frac{1}{q_{2d}}\end{pmatrix}\begin{pmatrix}H_{\infty, 0}\\ \vdots\\ H_{\infty,2d-2} \\ H_{0,1} \end{pmatrix}=\begin{pmatrix} p_1^2 + \frac{p_1}{q_1}+\td{P}_2(q_1)+\overset{2d}{\underset{i\neq 1}{\sum}}\frac{p_i-p_1}{q_1-q_i}+ t_{\infty,2d+1}q_1^{2d-1}\\
\vdots\\ 
\vdots\\
 p_{2d}^2 + \frac{p_{2d}}{q_{2d}}+\td{P}_2(q_{2d})+\overset{2d}{\underset{i\neq 2d}{\sum}}\frac{p_i-p_{2d}}{q_{2d}-q_i}+ t_{\infty,2d+1}q_{2d}^{2d-1}
\end{pmatrix}
\eeq
For completeness, the expression of the auxiliary matrix $A_{\boldsymbol{\alpha},\text{oper}}(\lambda)$ is given in \autoref{AppendixAuxiliaryMatrixOperGauge}.
\end{proposition}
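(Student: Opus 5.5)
This proposition is the specialization to the even PIV pole structure ($r_0=1$ at $X_0=0$, $r_\infty=2d+2$) of the general oper-gauge result of \cite{marchal2024hamiltonianrepresentationisomonodromicdeformations}. The plan is to redo that derivation under the FN normalization rather than the normalization $(t_{\infty,2d+1},t_{\infty,2d})=(1,0)$.

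\textbf{Step 1: the quantum curve.} Write $\Psi_{\text{oper}}$ via the gauge transformation above. Its first row is $(\check{\Psi}_{1,1},\check{\Psi}_{1,2})$, and it satisfies the scalar second-order ODE
\[
\partial_\lambda^2\psi=\left[L_{\text{oper}}\right]_{2,2}\partial_\lambda\psi+\left[L_{\text{oper}}\right]_{2,1}\psi .
\]
The apparent singularities at $\lambda=q_i$ are the zeros of $\check{L}_{1,2}$, and $1/\lambda$ comes from the regular pole at $0$; together they give $\left[L_{\text{oper}}\right]_{2,2}$ directly.

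Next, match the behaviour of $\left[L_{\text{oper}}\right]_{2,1}=\check{L}_{1,1}'+\check{L}_{1,1}\left[L_{\text{oper}}\right]_{2,2}-\det\check{L}+\dots$ at $\infty$ and at $0$ against the formal normal forms. This fixes the polynomial part $-\td{P}_2-t_{\infty,2d+1}\lambda^{2d-1}$ from the irregular times and the $\lambda^{-2}$ term from $t_{0,0}$. The unknown coefficients are then exactly $H_{\infty,0},\dots,H_{\infty,2d-2},H_{0,1}$.

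\textbf{Step 2: the linear system for the $H$'s.} Impose that each $q_i$ is an apparent singularity with trivial local monodromy. The standard computation gives a Laurent expansion at $q_i$ with residue $-p_i$ and the constraint
\[
p_i^2+\frac{p_i}{q_i}+\td{P}_2(q_i)+\sum_{j\ne i}\frac{p_j-p_i}{q_i-q_j}+t_{\infty,2d+1}q_i^{2d-1}=\sum_k H_{\infty,k}q_i^k+\frac{H_{0,1}}{q_i}.
\]
This is precisely \eqref{HCoeffOperDarbouxCoord}. The matrix is a Vandermonde-type matrix, invertible for generic distinct nonzero $q_i$.

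\textbf{Step 3: the auxiliary matrix.} Solve \eqref{compatibility} in the oper gauge. The auxiliary matrix $A_{\boldsymbol{\alpha},\text{oper}}$ must have the form
\[
\begin{pmatrix}-\tfrac12\partial_\lambda\nu&\nu\\ \ast&\ast\end{pmatrix}, \qquad \nu(\lambda)=\sum_k\nu_{\infty,k}^{(\boldsymbol{\alpha})}\lambda^{k}\cdot\frac{\lambda}{\prod(\lambda-q_i)}-\text{type ansatz}.
\]
Its coefficients are fixed by requiring that the singular part at infinity reproduces $\mathcal{L}_{\boldsymbol{\alpha}}$ acting on the exponential factor
\[
\exp\Big(\sum_k t_{\infty,k}\lambda^k/k\Big).
\]
Expanding at $\infty$ gives the lower-triangular Toeplitz system $M_\infty(\mathbf{t})\boldsymbol{\nu}=(\alpha_{\infty,2d}/2d,\dots,\alpha_{\infty,1})^T$. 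The factor $2$ in the last diagonal entry comes from the $\lambda^{-1}$ term, where the regular pole at $0$ contributes.

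The evolution of $q_i,p_i$ then follows from the absence of poles at $q_i$ in the compatibility equation. Comparing with the $H$-dependence shows these evolutions are Hamiltonian with $\text{Ham}_{\boldsymbol{\alpha}}=\sum\nu_{\infty,k+1}H_{\infty,k}+\nu_{\infty,0}(H_{0,1}-\sum p_j)$. The shift $-\sum p_j$ is a correction from the time-dependence of the gauge, as in the general formula of \cite{marchal2024hamiltonianrepresentationisomonodromicdeformations}. Specializing $\boldsymbol{\alpha}$ to the unit vectors and inverting $M_\infty$ yields \eqref{NewHamReduced}.

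\textbf{Main obstacle.} The hardest part is the normalization change. With $t_{\infty,2d}$ now free and $t_{\infty,2d+1}=4^d$, the reduced deformation space loses the directions killed by Möbius transformations. One must check that the corrections which depended on $(t_{\infty,2d+1},t_{\infty,2d})=(1,0)$ in \cite{marchal2024hamiltonianrepresentationisomonodromicdeformations} transform into exactly the $-\sum p_j$ shift and the modified last row of $M_\infty$. I would first verify this by tracking the action of $\lambda\mapsto a\lambda+b$ on the Hamiltonians, with its explicit time-derivative terms, and then confirm it for $d=1$ by direct computation.
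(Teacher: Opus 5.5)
The paper gives no proof of its own here. It quotes the result from \cite{marchal2024hamiltonianrepresentationisomonodromicdeformations}, where it is derived by the oper-gauge route you outline, so your attempt should be judged against that derivation. Your Steps 1 and 2 are sound up to a sign: $[L_{\text{oper}}]_{2,1}=\partial_\lambda\check L_{1,1}-\check L_{1,1}\,\partial_\lambda\log\check L_{1,2}-\det\check L$, and the cancellation $-2d\,t_{\infty,2d+1}+(2d-1)t_{\infty,2d+1}$ is what leaves $-t_{\infty,2d+1}\lambda^{2d-1}$.

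Step 3 rests on a wrong ansatz. The relation $[A_{\text{oper}}]_{1,1}=-\frac12\partial_\lambda[A_{\text{oper}}]_{1,2}$ holds for a traceless companion gauge. Here $[L_{\text{oper}}]_{2,2}=\partial_\lambda\log\check L_{1,2}\neq0$ and $\det\Psi_{\text{oper}}=\check L_{1,2}\det\check\Psi$. So, with $\nu:=[A_{\text{oper}}]_{1,2}=[\check A_{\boldsymbol\alpha}]_{1,2}/\check L_{1,2}$, the trace condition is $2[A_{\text{oper}}]_{1,1}+\partial_\lambda\nu+\nu\,\partial_\lambda\log\check L_{1,2}=\mathcal{L}_{\boldsymbol\alpha}\log\check L_{1,2}$. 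Your form leaves double poles $\mu_j/(2(\lambda-q_j)^2)$ that nothing cancels, so the compatibility equation cannot be solved with it.

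The correct solution is the one in Appendix~\ref{AppendixAuxiliaryMatrixOperGauge}: $\nu=\nu^{(\boldsymbol\alpha)}_{\infty,0}+\sum_j\frac{\mu_j}{\lambda-q_j}$ and $[A_{\text{oper}}]_{1,1}=\frac{\mathcal{L}_{\boldsymbol\alpha}\omega}{2\omega}-\sum_j\frac{p_j\mu_j}{\lambda-q_j}$. Here $\sum_j\mu_jq_j^k=\nu^{(\boldsymbol\alpha)}_{\infty,k+1}$ for $0\leq k\leq 2d-2$ and $\sum_j\mu_j/q_j=\nu^{(\boldsymbol\alpha)}_{\infty,0}$, so that $\nu(0)=0$. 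This $\mu$-system is the idea missing from your ``comparing with the $H$-dependence''. Its matrix is the transpose of the one in \eqref{HCoeffOperDarbouxCoord}, hence $\sum_k\nu^{(\boldsymbol\alpha)}_{\infty,k+1}H_{\infty,k}+\nu^{(\boldsymbol\alpha)}_{\infty,0}H_{0,1}=\sum_j\mu_jr_j$, with $r_j$ the $j$-th right-hand side there. Then $\partial_{p_j}$ of $\sum_i\mu_ir_i-\nu^{(\boldsymbol\alpha)}_{\infty,0}\sum_ip_i$ equals $2p_j\mu_j+\frac{\mu_j}{q_j}-\sum_{i\neq j}\frac{\mu_i+\mu_j}{q_j-q_i}-\nu^{(\boldsymbol\alpha)}_{\infty,0}$. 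This is exactly $\mathcal{L}_{\boldsymbol\alpha}q_j$ as read off from the residue at $q_j$ of the trace condition. In particular, the shift $-\sum_jp_j$ comes from the constant term $\nu^{(\boldsymbol\alpha)}_{\infty,0}$ of $\nu$; equivalently, $H_{0,1}-\sum_jp_j$ is the $\lambda^{-1}$-coefficient of $[L_{\text{oper}}]_{2,1}$ at infinity.

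Second, your explanation of the corner entry $2t_{\infty,2d+1}$ of $M_\infty(\mathbf t)$ has no mechanism behind it. Since $X_0=0$ is not a deformation parameter, $\check A_{\boldsymbol\alpha}$ is polynomial. Since $\check L$ and $\check A_{\boldsymbol\alpha}$ are diagonalized at infinity by the same formal gauge, $\check A_{\boldsymbol\alpha}$ agrees, up to constant terms, with the polynomial part of $\check L(\lambda)\sum_{i\geq0}\nu^{(\boldsymbol\alpha)}_{\infty,i}\lambda^{-i}$. That series is the expansion of $(\sum_{k=1}^{2d}\frac{\alpha_{\infty,k}}{k}\lambda^k)/(\sum_{k=0}^{2d+1}t_{\infty,k}\lambda^{k-1})$; compare $[\check A_{\boldsymbol\alpha}]_{1,1}$ and $[\check A_{\boldsymbol\alpha}]_{1,2}$ in Appendix~\ref{AppendixAuxiliaryMatrixGeometricDarboux}. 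The rows of the resulting system are the coefficients of $\lambda^{2d},\dots,\lambda^{1}$, and the last one is $t_{\infty,2}\nu^{(\boldsymbol\alpha)}_{\infty,0}+\dots+t_{\infty,2d+1}\nu^{(\boldsymbol\alpha)}_{\infty,2d-1}=\alpha_{\infty,1}$. There is no $\lambda^{-1}$ row and the pole at $0$ never enters, so $M_\infty$ comes out purely Toeplitz with corner $t_{\infty,2d+1}$.

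This is also what the paper's own computations in Section~\ref{sec5} use. For $d=1$ they have $\nu^{(\boldsymbol\alpha)}_{\infty,1}=\alpha_{\infty,1}/t_{\infty,3}+\dots$. The coefficient of $(P_{\infty,0})^2$ in $\text{Ham}_{t_{\infty,2}}$ is $-t_{\infty,2}/(2t_{\infty,3}^2)$, and with a doubled corner it would become $-t_{\infty,2}/(4t_{\infty,3}^2)$. This is because $H_{\infty,0}$ contains $(P_{\infty,0})^2$ with coefficient $1$, while $H_{0,1}-\sum_jp_j$ is at most linear in $P_{\infty,0}$.

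The Toeplitz form is also what the statement's claimed equivalence of its two formulas requires. With $E$ the order-reversing permutation matrix, that equivalence is $EM_\infty^tE=M_\infty$. This holds for a lower-triangular Toeplitz matrix and fails once only the bottom-right entry is doubled. So the $2$ is a misprint that an honest derivation exposes, and you should have flagged it rather than supplied a justification.

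Finally, your ``main obstacle'' is a detour. If you derive everything directly in the FN normalization, as in your Steps 1--3, nothing has to be transported along $\lambda\mapsto a\lambda+b$.
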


\autoref{PropHamOper} implies several important observations of interest for this work
\begin{itemize}
    \item From \eqref{NewHamReduced}, Hamiltonians are obtained by multiplication between a purely time-dependent matrix $M_\infty(\mathbf{t})^{-1}$ and the vector involving the coefficients $(H_{\infty,k})_{0\leq k\leq 2d-2}$ and $H_{0,1}$ appearing in the quantum curve (i.e. the Lax matrix in the oper gauge).  It is obvious that this structure is preserved for any change of Darboux coordinates that is time-independent and preserving the symplectic form $\Omega:=\underset{i=1}{\overset{2d}{\sum}} dq_i\wedge dp_i$ (and this will always give canonical coordinates since $(\mathbf{q},\mathbf{p})$ are themselves canonical). As we will see below, this is the case for the geometric Darboux coordinates. However, while the symplectic structure is preserved, one still needs to express the quantities $(H_{\infty,k})_{0\leq k\leq 2d-2}$ and $H_{0,1}$ using the new coordinates.
    \item The structure of the Hamiltonian described above immediately implies that the oper Darboux coordinates, or any Darboux coordinates that are obtained from a time-independent change preserving $\underset{i=1}{\overset{2d}{\sum}} dq_i\wedge dp_i$, differ from the so-called isospectral Darboux coordinates. In fact, the relation between both sets of coordinates is studied in \cite{MarchalAlameddineIsospectralIsomono2023}.
    \item The Hamiltonians are not polynomial in the oper Darboux coordinates. In order to obtain Darboux coordinates that fulfill this property and preserve the Hamiltonian structure, one simply needs to find a time-independent and symplectic change of coordinates. This is the goal of the next subsection.
\end{itemize}

\subsection{The geometric Darboux coordinates}
\sloppy{The geometric Darboux coordinates are denoted by $\left(\mathbf{Q}_\infty, Q_{0,1}, \mathbf{P}_{\infty},P_{0,1}\right):=\left(Q_{\infty, 0},\dots, Q_{\infty,2d-2}, Q_{0,1}, P_{\infty, 0},\dots, P_{\infty,2d-2}, P_{0,1}\right)$. They are also canonical Darboux coordinates and they are related to the oper Darboux coordinates by the time-independent symplectic (i.e. preserving the symplectic $2-$form $\Omega$) change of coordinates }
\bea\label{DefNewcoor} \frac{\omega\underset{j=1}{\overset{2d}{\prod}}(\lambda-q_j)}{\lambda}&=& \omega\left(\frac{Q_{0,1}}{\lambda}+\sum_{k=0}^{2d-2} Q_{\infty,k}\lambda^k +\lambda^{2d-1}\right)\\
p_i&=&\sum_{k=0}^{2d-2} P_{\infty,k}\frac{\partial Q_{\infty,k}(q_1,\dots,q_{2d})}{\partial q_i}+ P_{0,1}\frac{\partial Q_{0,1}(q_1,\dots,q_{2d})}{\partial q_i} \,\,,\,\, \forall \, i\in \llbracket 1,2d\rrbracket \nonumber
\eea
The fact that this change of coordinates is symplectic is a consequence of Lemma $6.3$ of \cite{MarchalP1Hierarchy}.\footnote{We point out that the geometric Darboux coordinates defined in this paper differ from those of \cite{MarchalAlameddineIsospectralIsomono2023} by a factor $\omega$ for $(\mathbf{Q}_{\infty},Q_{0,1})$ and $\omega^{-1}$ for $(\mathbf{P}_{\infty},P_{0,1})$. The present choice is motivated by the fact that we want Darboux coordinates to be independent of the choice of $\omega$ so that quantities $(\mathbf{H}_\infty, H_{0,1})$ expressed in these coordinates will be independent of $\omega$.}
    
Expressions for the Lax matrices and Hamiltonians for the geometric Darboux coordinates are adaptations of formulas in \cite{MarchalAlameddineIsospectralIsomono2023} taking into account that the normalization of times and poles is not exactly the same.

\begin{proposition}[Lax matrix in terms of geometric Darboux coordinates]\label{GeoLaxMatrices}The Lax matrix has the following parameterization in terms of the Darboux coordinates:
\small{\bea\label{L11OldPaper} \check{L}_{1,2}(\lambda)&=&\omega\left(\frac{Q_{0,1}}{\lambda}+\sum_{k=0}^{2 d-2} Q_{\infty,k}\lambda^k+\lambda^{2d-1}\right)\cr
\check{L}_{1,1}(\lambda)&=&-\sum_{k=0}^{2 d-2} P_{\infty,2 d-2-k}\lambda^k-\sum_{k=0}^{2 d-3}\sum_{m=0}^{2d-3-k}P_{\infty,m}Q_{\infty,k+1+m}\lambda^k\cr
&&+ \frac{P_{0,1}Q_{0,1}}{\lambda}-\left(t_{\infty,2 d+1}\lambda+t_{\infty, 2d}-t_{\infty,2d+1}Q_{\infty,2 d-2}\right)\frac{\check{L}_{1,2}(\lambda)}{\omega}\cr
\check{L}_{2,2}(\lambda)&=&-\check{L}_{1,1}(\lambda)\cr
\check{L}_{2,1}(\lambda)
&=& \frac{1}{\omega}\left[\frac{\underset{j= 2 d-1}{\overset{4d}{\sum}}\left(\underset{m=0}{\overset{4d-j}{\sum}} t_{\infty,2 d +1-m}t_{\infty,j+m-2d+1}\right) \lambda^{j} -\check{L}_{1,1}(\lambda)^2}{\frac{\check{L}_{1,2}(\lambda)}{\omega}}\right]_{\infty,+}\cr&&
+\frac{1}{\omega}\frac{\frac{(t_{0,0})^2}{Q_{0,1}}-Q_{0,1}\left(P_{0,1}-t_{\infty,2d}+t_{\infty,2d+1}Q_{\infty,2d-2}\right)^2 }{\lambda}
\eea}
\normalsize{}
\end{proposition}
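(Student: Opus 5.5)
We substitute the change of variables \eqref{DefNewcoor} into the oper parameterization \eqref{DefCheckL12}--\eqref{DefCheckL11} and check each entry separately. The overall logic is that $\check{L}_{1,2}$ follows from the definition, $\check{L}_{1,1}$ reduces to an identity between polynomials of degree at most $2d-1$ that we verify at the $2d$ points $q_i$, $\check{L}_{2,2}$ is tracelessness, and $\check{L}_{2,1}$ is forced by the determinant.

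\textbf{Step 1: $\check{L}_{1,2}$ and $\check{L}_{2,2}$.} The formula for $\check{L}_{1,2}$ is just the first line of \eqref{DefNewcoor}. Expanding $\prod_j(\lambda-q_j)=\lambda^{2d}+\sum_{k=0}^{2d-2}Q_{\infty,k}\lambda^{k+1}+Q_{0,1}$ identifies $Q_{\infty,k}$ and $Q_{0,1}$ with signed elementary symmetric functions of the $q_j$. The identity $\check{L}_{2,2}=-\check{L}_{1,1}$ holds because every coefficient lies in $\mathfrak{sl}_2(\mathbb{C})$.

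\textbf{Step 2: $\check{L}_{1,1}$.} From \eqref{DefCheckL11}, $\lambda\check{L}_{1,1}(\lambda)+(t_{\infty,2d+1}\lambda+t_{\infty,2d})\check{L}_{1,2}(\lambda)\lambda/\omega=-O_{2d}(\lambda)$, which is a polynomial of degree at most $2d-1$ determined by its values $-p_iq_i$ at the $q_i$. On the claimed side, multiplying by $\lambda$ makes every term polynomial, and the term $t_{\infty,2d+1}Q_{\infty,2d-2}\check{L}_{1,2}/\omega$ is placed so that the $\lambda^{2d}$ coefficients cancel. The remaining claimed expression therefore also has degree at most $2d-1$. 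It suffices to check that it takes the value $-p_iq_i$ at each $\lambda=q_i$. At $\lambda=q_i$ the factor $\check{L}_{1,2}(q_i)$ vanishes, so we must show
\[-p_iq_i=-\sum_{k}\sum_m P_{\infty,m}Q_{\infty,k+1+m}q_i^{k+1}-\sum_k P_{\infty,2d-2-k}q_i^{k+1}+P_{0,1}Q_{0,1}-t_{\infty,2d+1}Q_{\infty,2d-2}\cdot 0.\]
Substituting $p_i=\sum_k P_{\infty,k}\partial_{q_i}Q_{\infty,k}+P_{0,1}\partial_{q_i}Q_{0,1}$ reduces this to identities for $\partial_{q_i}$ of elementary symmetric functions. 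Writing $F(\lambda)=\prod_j(\lambda-q_j)$, we have $\partial_{q_i}F(\lambda)=-F(\lambda)/(\lambda-q_i)$, and hence $q_i\,\partial_{q_i}Q_{0,1}=Q_{0,1}$. For $\partial_{q_i}Q_{\infty,k}$ we use the polynomial division $F(\lambda)/(\lambda-q_i)=\sum_r\lambda^r\sum_{s>r}F_s q_i^{s-r-1}$. Matching coefficient of $P_{\infty,m}$ then produces exactly the double sum. This matching of indices is the main bookkeeping obstacle; the argument is the same as the $P_1$-hierarchy computation of \cite{MarchalP1Hierarchy} and \cite{MarchalAlameddineIsospectralIsomono2023}, and we adapt its index shifts.

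\textbf{Step 3: $\check{L}_{2,1}$.} Use $\det\check{L}=-\check{L}_{1,1}^2-\check{L}_{1,2}\check{L}_{2,1}$ together with the known structure of the spectral curve. Its singular part at infinity is fixed by the irregular times via the formal normal form: the coefficient $-\td{P}_2$ polynomial part appears in $-\det$, that is $\check{L}_{1,1}^2+\check{L}_{1,2}\check{L}_{2,1}=-\td{P}_2(\lambda)+O(\lambda^{2d-2})$. Its double pole at $0$ is $t_{0,0}^2/\lambda^2$.

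Since $\check{L}_{2,1}$ is a polynomial of degree at most $2d$ plus $c/\lambda$, we argue in two parts.
\begin{itemize}
\item \emph{Polynomial part.} Divide by $\check{L}_{1,2}/\omega\sim\lambda^{2d-1}$. The polynomial part of $\check{L}_{2,1}$ equals $[(-\td{P}_2^{\rm pol}-\check{L}_{1,1}^2)/(\check{L}_{1,2}/\omega)]_{\infty,+}/\omega$, because the lower-order unknowns contribute only negative powers after division.
\item \emph{Residue at $0$.} Comparing $\lambda^{-2}$ coefficients gives $\check{L}_{1,1}^{[-1]2}+\omega Q_{0,1}c=t_{0,0}^2$. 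Since $\check{L}_{1,1}^{[-1]}=P_{0,1}Q_{0,1}-(t_{\infty,2d}-t_{\infty,2d+1}Q_{\infty,2d-2})Q_{0,1}$, this yields exactly the stated residue.
\end{itemize}
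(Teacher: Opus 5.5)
Your Steps 1 and 3 are sound: the polynomial-part argument and the $\lambda^{-2}$ matching at $0$ for $\check L_{2,1}$ are right. Since the paper itself only asserts this proposition as an adaptation of formulas from \cite{MarchalAlameddineIsospectralIsomono2023}, a direct verification is a reasonable route. But Step 2 rests on a false claim. Substitute the claimed $\check L_{1,1}$, and write $\Pi(\lambda)$ for its polynomial part of degree $\le 2d-2$ (the two $P_\infty$ sums). You get
\[
\lambda\check L_{1,1}(\lambda)+(t_{\infty,2d+1}\lambda+t_{\infty,2d})\prod_{j=1}^{2d}(\lambda-q_j)=\lambda\Pi(\lambda)+P_{0,1}Q_{0,1}+t_{\infty,2d+1}Q_{\infty,2d-2}\prod_{j=1}^{2d}(\lambda-q_j).
\]
Its $\lambda^{2d}$ coefficient is $t_{\infty,2d+1}Q_{\infty,2d-2}$, and nothing cancels it. So the claimed side has degree $2d$, not $\le 2d-1$, and agreement at the $2d$ nodes $q_i$ proves nothing. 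In fact the proposition's $\check L_{1,1}$ and \eqref{DefCheckL11} taken literally differ by exactly $t_{\infty,2d+1}Q_{\infty,2d-2}\,\check L_{1,2}(\lambda)/\omega$. That term vanishes at every $q_i$, so your test cannot see it. The identity you set out to derive from \eqref{DefCheckL11} is false as printed: that formula puts $-(t_{\infty,2d}+t_{\infty,2d+1}Q_{\infty,2d-2})$ in front of $\lambda^{2d-1}$.

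The missing ingredient is the normalization at infinity, which you use in Step 3 but not in Step 2. By the formal normal form, the $\lambda^{2d-1}$ coefficient of $\check L_{1,1}$ is $-t_{\infty,2d}$; compare \eqref{CheckL11RS} and the $d=1$ formulas. The prefactor
\[
t_{\infty,2d+1}\lambda+t_{\infty,2d}-t_{\infty,2d+1}Q_{\infty,2d-2}=\bigl[(t_{\infty,2d+1}\lambda^{2d}+t_{\infty,2d}\lambda^{2d-1})\,\omega/\check L_{1,2}(\lambda)\bigr]_{\infty,+}
\]
is precisely what enforces this. The correct argument runs as follows. $\lambda\check L_{1,1}$ is a polynomial of degree $2d+1$, determined by its top coefficients $-t_{\infty,2d+1}$ and $-t_{\infty,2d}$ together with its values at the $2d$ distinct $q_i$. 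Hence $\lambda\check L_{1,1}+(t_{\infty,2d+1}\lambda+t_{\infty,2d}-t_{\infty,2d+1}Q_{\infty,2d-2})\prod_j(\lambda-q_j)$ has degree $\le 2d-1$. Your node computation via $\partial_{q_i}F=-F/(\lambda-q_i)$ then finishes the proof.

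Two smaller points. The node values are $+p_iq_i$, since $\check L_{1,1}(q_i)=p_i$, so the left side of your displayed identity has the wrong sign. Also, the residue $\check L_{1,1}^{[-1]}$ you use in Step 3 contains the $t_{\infty,2d+1}Q_{\infty,2d-2}$ term, so Step 3 is only valid once Step 2 is repaired in this way.
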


Following \autoref{PropHamOper}, one only needs to express the quantities $(\mathbf{H}_{\infty},H_{0,1})$ in terms of the new coordinates to obtain their evolutions. This is given by the following proposition.

\begin{proposition}[Hamiltonians for the geometric Darboux coordinates]\label{PropHamGeometricDarboux} We have
\small{\beq \label{NewHamReduced2}\begin{pmatrix}\text{Ham}_{{t_{\infty,1}}}(\mathbf{Q}_{\infty}, Q_{0,1},\mathbf{P}_{\infty},P_{0,1};\mathbf{t},t_{\infty,0},t_{0,0})\\ \vdots \\ (2d-1)\text{Ham}_{{t_{\infty,2d-1}}}(\mathbf{Q}_{\infty},Q_{0,1},\mathbf{P}_{\infty},P_{0,1};\mathbf{t},t_{\infty,0},t_{0,0})\\
(2d)\text{Ham}_{{t_{\infty,2d}}}(\mathbf{Q}_{\infty},Q_{0,1},\mathbf{P}_{\infty},P_{0,1};\mathbf{t},t_{\infty,0},t_{0,0})
\end{pmatrix}=\left(M_{\infty}(\mathbf{t})\right)^{-1}\begin{pmatrix}H_{\infty,2d-2}(\mathbf{Q}_{\infty},Q_{0,1},\mathbf{P}_{\infty},P_{0,1},\mathbf{t},t_{\infty,0},t_{0,0})\\ \vdots\\ H_{\infty,0}(\mathbf{Q}_{\infty},Q_{0,1},\mathbf{P}_{\infty},P_{0,1},\mathbf{t},t_{\infty,0},t_{0,0}) \\ H_{0,1}(\mathbf{Q}_{\infty},Q_{0,1},\mathbf{P}_{\infty},P_{0,1},t_{\infty,0},t_{0,0})+\delta 
\end{pmatrix}\eeq}
\normalsize{where} $\forall\, j\in \llbracket 0, 2d-2\rrbracket$: 
\begin{align}  \label{DefHs}
H_{\infty,j}(\mathbf{Q}_{\infty},Q_{0,1},\mathbf{P}_{\infty},P_{0,1},\mathbf{t},t_{\infty,0},t_{0,0})= &-\Res_{\lambda\to \infty}\lambda^{-j-1} \left[ (\check{L}_{1,1})^2+\check{L}_{2,1}\check{L}_{1,2} +\check{L}_{1,2}\partial_\lambda\left(\frac{ \check{L}_{1,1}}{\check{L}_{1,2}}\right)\right] \nonumber \\
H_{0,1} (\mathbf{Q}_{\infty},Q_{0,1},\mathbf{P}_{\infty},P_{0,1},\mathbf{t},t_{\infty,0},t_{0,0})=&\Res_{\lambda\to 0}(\check{L}_{1,1})^2+\check{L}_{2,1}\check{L}_{1,2} +\check{L}_{1,2} \partial_\lambda\left(\frac{ \check{L}_{1,1}}{\check{L}_{1,2}}\right)
\end{align}
Finally, the additional term is given by
\beq \delta=-\sum_{j=1}^{2d} p_j=2dP_{\infty,2d-2}+\sum_{k=0}^{2d-3} (k+2)Q_{\infty,k+1}P_{\infty,k}+ Q_{\infty,0}P_{0,1}\eeq
and the entries of the matrix $\check{L}(\lambda)$ are given by \autoref{GeoLaxMatrices}. Note in particular that $\left(\mathbf{H}_\infty(\mathbf{Q}_{\infty}, Q_{0,1},\mathbf{P}_{\infty},P_{0,1},\mathbf{t},t_{\infty,0},t_{0,0}), H_{0,1}(\mathbf{Q}_{\infty},Q_{0,1},\mathbf{P}_{\infty},P_{0,1},\mathbf{t},t_{\infty,0},t_{0,0})\right)$ are independent of $\omega$.
\end{proposition}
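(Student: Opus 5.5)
The plan is to transport \autoref{PropHamOper} through the change of variables \eqref{DefNewcoor}. Since the change from $(\mathbf{q},\mathbf{p})$ to the geometric coordinates is time-independent and preserves $\Omega=\sum_i dq_i\wedge dp_i$ (Lemma 6.3 of \cite{MarchalP1Hierarchy}), the Hamiltonian of each flow is just the oper Hamiltonian rewritten in the new variables. The matrix relation \eqref{NewHamReduced} therefore carries over verbatim. What remains is to show three things: the quantum-curve coefficients $(\mathbf{H}_\infty,H_{0,1})$ are given by the residue formulas \eqref{DefHs}; the term $-\sum_j p_j$ equals the stated $\delta$; and everything is independent of $\omega$.

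\textbf{Step 1 (the residue formulas).} I would compute the $(2,1)$ entry of the oper-gauge Lax matrix directly from the gauge transformation $G=\begin{pmatrix}1&0\\ \check{L}_{1,1}&\check{L}_{1,2}\end{pmatrix}$, using $L_{\text{oper}}=(\partial_\lambda G+G\check{L})G^{-1}$. This gives
\[
[L_{\text{oper}}]_{2,1}=\check{L}_{1,1}^2+\check{L}_{1,2}\check{L}_{2,1}+\check{L}_{1,2}\,\partial_\lambda\!\left(\frac{\check{L}_{1,1}}{\check{L}_{1,2}}\right),
\]
up to the trace/derivative terms, which are absorbed in the known $[L_{\text{oper}}]_{2,2}$. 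I would then compare with the explicit form $-t_{\infty,2d+1}\lambda^{2d-1}-\td{P}_2(\lambda)+\sum_j H_{\infty,j}\lambda^j+H_{0,1}/\lambda-\sum_j p_j/(\lambda-q_j)$.

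\textbf{Step 2 (extracting the coefficients).} At infinity, $\td{P}_2$ contributes only powers $\lambda^{j}$ with $j\geq 2d-1$ (and a $\lambda^{-2}$ term). The apparent-singularity terms $p_j/(\lambda-q_j)$ are $O(\lambda^{-1})$. Hence the coefficient of $\lambda^j$ for $0\le j\le 2d-2$ is exactly $H_{\infty,j}=-\Res_{\lambda\to\infty}\lambda^{-j-1}[\dots]$. At $\lambda=0$, the double pole $-t_{0,0}^2/\lambda^2$ does not affect the residue, and the $q_j$ are generically nonzero. This gives $H_{0,1}=\Res_{\lambda\to0}[\dots]$. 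Throughout I would use that $\check{L}$ is expressed in geometric coordinates by \autoref{GeoLaxMatrices}.

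\textbf{Step 3 ($\omega$-independence).} The entries scale as $\check{L}_{1,2}\propto\omega$ and $\check{L}_{2,1}\propto\omega^{-1}$, while $\check{L}_{1,1}$ and $\check{L}_{1,2}/\omega$ are $\omega$-free. Hence both $\check{L}_{1,2}\check{L}_{2,1}$ and the ratio-derivative term are independent of $\omega$.

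\textbf{Step 4 (the term $\delta$).} This step requires an actual computation, and I expect it to be the main obstacle: expressing $-\sum_i p_i$ in geometric coordinates. I would use $p_i=\sum_k P_{\infty,k}\,\partial_{q_i}Q_{\infty,k}+P_{0,1}\,\partial_{q_i}Q_{0,1}$, so that
\[
\sum_i p_i=\sum_k P_{\infty,k}\,\mathcal{D}Q_{\infty,k}+P_{0,1}\,\mathcal{D}Q_{0,1},\qquad \mathcal{D}=\sum_i\partial_{q_i},
\]
i.e. $\mathcal{D}$ is the generator of translations $q_i\mapsto q_i+\epsilon$. Applying $\mathcal{D}$ to $\prod_j(\lambda-q_j)=\lambda^{2d}+\sum_k Q_{\infty,k}\lambda^{k+1}+Q_{0,1}$ gives $-\partial_\lambda$ of the same polynomial. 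Matching coefficients yields
\[
\mathcal{D}Q_{\infty,k}=-(k+2)Q_{\infty,k+1},\qquad \mathcal{D}Q_{\infty,2d-2}=-2d,\qquad \mathcal{D}Q_{0,1}=-Q_{\infty,0}.
\]
This produces exactly the stated formula for $\delta$. The delicate points are the index bookkeeping and the sign conventions relating $\check{L}_{1,1}$ to $O_{2d}$ in \eqref{DefCheckL11}; I would check both on the case $d=1$.
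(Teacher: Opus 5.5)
Your proposal is correct and takes the route the paper indicates, where the paper itself defers the details to its earlier works. Because the change \eqref{DefNewcoor} is time-independent and preserves $\Omega$, the structure of \autoref{PropHamOper} carries over, and one only has to rewrite the quantum-curve coefficients; your gauge computation shows these are exactly the coefficients of $\check{L}_{1,1}^2+\check{L}_{1,2}\check{L}_{2,1}+\check{L}_{1,2}\partial_\lambda(\check{L}_{1,1}/\check{L}_{1,2})$, with no leftover trace/derivative terms. Your identity $\sum_i\partial_{q_i}\prod_j(\lambda-q_j)=-\partial_\lambda\prod_j(\lambda-q_j)$ yields $\delta$ cleanly and uses nothing from \eqref{DefCheckL11}, so the sign concern you raise there does not arise.
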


The proof of the above results follows from adaptations of results of \cite{marchal2024hamiltonianrepresentationisomonodromicdeformations} and \cite{MarchalAlameddineIsospectralIsomono2023}. For completeness, we also give the expression of the auxiliary matrix $\check{\mathcal{A}}_{\boldsymbol{\alpha}}(\lambda)$ in \autoref{AppendixAuxiliaryMatrixGeometricDarboux}. 

\subsection{The geometric Lax coordinates}

\subsubsection{Definition of the coordinates $(\mathbf{R},\mathbf{S})$}

In \cite{MarchalAlameddineIsospectralIsomono2023}, the coordinates  $\begin{pmatrix}\mathbf{R}_\infty, R_{0,1}\end{pmatrix}:=\left(R_{\infty,0},\dots, R_{\infty, 2d-2}, R_{0,1}\right)$ where introduced to express $\check{L}_{1,1}(\lambda)$ in a simpler way\footnote{Note that we changed here the ordering in the entries of $\mathbf{R}_\infty$ compared to \cite{MarchalAlameddineIsospectralIsomono2023}. We have also modified the definition of $R_{0,1}$ by dividing by $Q_{0,1}$.}. Since our normalization (in particular $t_{\infty,2d}\neq 0$) is different, we define
\bea 
R_{0,1}&:=&P_{0,1}+t_{\infty,2d+1}Q_{\infty,2d-2}
\cr
R_{\infty,0}&:=&- \, P_{\infty,0}+t_{\infty,2d+1}(Q_{\infty,2d-2})^2
-t_{\infty,2d+1}Q_{\infty,2d-3} 
\cr
R_{\infty, 2d-2-k}&:=&- P_{\infty,2d-2-k}-\sum_{m=0}^{2d-3-k}P_{\infty,m}Q_{\infty,k+1+m}-t_{\infty,2d+1}Q_{\infty,k-1}+t_{\infty,2d+1}Q_{\infty,2d-2}Q_{\infty,k}
\cr&&\,\,,\,\, \forall \, k\in \llbracket 1, 2d-3\rrbracket\cr
R_{\infty,2d-2}&:=&-P_{\infty,2d-2}-\sum_{m=0}^{2d-3}P_{\infty,m}Q_{\infty,m+1}+t_{\infty,2d+1}Q_{\infty,2d-2}Q_{\infty,0}
-t_{\infty,2d+1}Q_{0,1} 
\eea
yielding
\beq \label{CheckL11RS}\check{L}_{1,1}(\lambda)=
-t_{\infty,2d+1}\lambda^{2d}
+  \sum_{k=0}^{2d-2}R_{\infty,2d-2-k}\lambda^k+ \frac{R_{0,1}Q_{0,1}}{\lambda}-t_{\infty,2d}\frac{\check{L}_{1,2}(\lambda)}{\omega}
\eeq
There exists a matrix reformulation for the above coordinates, this is given in the following definition.
\begin{definition}
    The coordinates $\begin{pmatrix}\mathbf{R}_\infty ,R_{0,1}\end{pmatrix}:=\left(R_{\infty,0},\dots, R_{\infty, 2d-2}, R_{0,1}\right)$ are defined by the matrix change of variables 
    \bea\begin{pmatrix}\mathbf{R}_\infty\\ R_{0,1}\end{pmatrix}&=& \begin{pmatrix} - I_{2d-1}-N(\mathbf{Q}_\infty)& \mathbf{0}_{2d-1,1}\\ \mathbf{0}_{1,2d-1}& 1\end{pmatrix} \begin{pmatrix}\mathbf{P}_\infty\\ P_{0,1}\end{pmatrix}+ Q_{\infty,2d-2} t_{\infty,2d+1} \begin{pmatrix}\mathbf{Q}_\infty\\ 1\end{pmatrix}
-t_{\infty,2d+1}J \begin{pmatrix}\mathbf{Q}_\infty\\ Q_{0,1}\end{pmatrix}
\cr
&=& T(\mathbf{Q}_\infty)\begin{pmatrix}\mathbf{P}_\infty\\ P_{0,1}\end{pmatrix}+ \mathbf{K}(\mathbf{Q}_\infty)
\eea
with 
\beq J:=\begin{pmatrix} 0&1&0&\dots &0\\
\vdots &0&\ddots&\ddots&\vdots\\
\vdots &\ddots&\ddots&\ddots&0\\
\vdots &\ddots&\ddots&0&1\\
0&\dots& \dots& \dots&0
\end{pmatrix}\in \mathcal{M}_{2d}(\mathbb{C}) \,\, \text{ and } \,\,N(\mathbf{Q}_\infty):=\begin{pmatrix}
0&0&\dots&0\\
Q_{\infty,2d-2}&0& &0&\\
\vdots& \ddots &\ddots &\vdots \\
Q_{\infty,1}&\dots&Q_{\infty,2d-2}&0\\
\end{pmatrix}
\eeq
and
\bea \label{DefTK} T(\mathbf{Q}_\infty)&:=&-I-N(\mathbf{Q}_\infty) 
\cr 
\mathbf{K}(\mathbf{Q}_\infty) &=&Q_{\infty,2d-2} t_{\infty,2d+1}\begin{pmatrix}\mathbf{Q}_\infty\\ 1\end{pmatrix}-t_{\infty,2d+1}J \begin{pmatrix}\mathbf{Q}_\infty\\ Q_{0,1}\end{pmatrix}
\eea
\end{definition}



In practice, choosing $(\mathbf{Q}_\infty, Q_{0,1},\mathbf{R}_{\infty}, R_{0,1})$ as Darboux coordinates as the advantage that both entries $\check{L}_{1,1}(\lambda)$ and $\check{L}_{1,2}(\lambda)$ are expressed easily. In particular, we will see below that the variables $(\mathbf{R}_\infty, R_{0,1})$ are well-adapted to the symmetry (half of them vanish). However, the price to pay is that these coordinates are no longer canonical since the change of coordinates from $(\mathbf{Q}_\infty, Q_{0,1},\mathbf{P}_{\infty}, P_{0,1})$  to $(\mathbf{Q}_\infty, Q_{0,1},\mathbf{R}_{\infty}, R_{0,1})$ is not symplectic (it does not preserve the symplectic $2-$form $\Omega$). In particular, no Hamiltonian evolutions can be formulated in these coordinates. The fact that the change of coordinates involves both coordinates $(\mathbf{P_\infty},P_{0,1}, \mathbf{ Q_\infty},Q_{0,1})$ adds more difficulty to the problem. Indeed, to have a symplectic change of coordinates, one needs to find a symplectic completion for the above change of variables, however, in this case (coordinate dependent change) the existence of such a completion is not automatic. For this, one needs to ensure the integrability of the PDE system of the completion. We will show the existence of such a completion, our result generalizes to triangular Toeplitz changes of coordinates. For this, let us first observe the following standard result.  

\begin{lemma}[Inverse of a Toeplitz matrix]\label{PropInverseToep}We have:
\beq \begin{pmatrix}
-1&0&\dots&0\\
-Q_{\infty,2d-2}&-1& &0\\
\vdots& \ddots &\ddots &\vdots \\
-Q_{\infty,1}&\dots&-Q_{\infty,2d-2}&-1
\end{pmatrix}^{-1}= \begin{pmatrix}
-1&0&\dots&0\\
b_{2d-2}&-1& &0\\
\vdots& \ddots &\ddots &\vdots \\
b_{1}&\dots&b_{2d-2}&-1
\end{pmatrix}
\eeq
with coefficients $(b_1,\dots,b_{2d-2})$ given by the identification of
\beq \label{GenSeries} -1+ \sum_{n=1}^{2d-2} b_{2d-1-n} u^n \overset{t\to 0}{=}-\frac{1}{1+\underset{i=1}{\overset{2d-2}{\sum}} Q_{\infty,2d-1-i} u^i} +O\left(u^{2d-1}\right)\eeq
Note that the previous identities are equivalent (take $u=\lambda^{-1}$ and multiply by $\lambda^{-2+1}$) to
\beq \label{GenSeries2}\frac{1}{\lambda^{2d-1}+\underset{i=1}{\overset{2d-2}{\sum}} Q_{\infty,2d-1-i} \lambda^{2d-1-i}}\overset{\lambda\to \infty}{=} \lambda^{1-2d}- \sum_{n=1}^{2d-2} b_{2d-1-n} \lambda^{1-2d-n}  +O\left(1\right)\eeq
\end{lemma}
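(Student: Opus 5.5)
The plan is to identify lower unitriangular Toeplitz matrices with truncated power series, so that matrix multiplication becomes multiplication of series modulo $u^{2d-1}$. Let $N_0$ denote the nilpotent lower shift matrix of size $2d-1$, with ones on the first subdiagonal, so that $N_0^{2d-1}=0$. A lower triangular Toeplitz matrix whose first column is $(c_0,c_1,\dots,c_{2d-2})^T$ equals $\sum_{n=0}^{2d-2} c_n N_0^n$. The map $\sum_n c_n u^n\mapsto \sum_n c_n N_0^n$ is therefore an algebra isomorphism from $\mathbb{C}[u]/(u^{2d-1})$ onto the algebra of lower triangular Toeplitz matrices. In particular, this algebra is commutative and closed under inversion whenever $c_0\neq 0$.

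Reading off the first column of the matrix on the left-hand side gives $(-1,-Q_{\infty,2d-2},\dots,-Q_{\infty,1})$. Its entry in row $n+1$ is $-Q_{\infty,2d-1-n}$ for $n\in\llbracket 1,2d-2\rrbracket$. Under the isomorphism, this matrix therefore corresponds to the series $f(u)=-\bigl(1+\sum_{i=1}^{2d-2}Q_{\infty,2d-1-i}u^i\bigr)$. Since $f(0)=-1\neq 0$, the series $1/f$ exists in $\mathbb{C}[[u]]$. Its truncation modulo $u^{2d-1}$ is the image of the inverse matrix. So the inverse is again lower triangular Toeplitz, with diagonal $1/f(0)=-1$.

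The subdiagonal entry in row $n+1$ is the coefficient of $u^n$ in $-1/\bigl(1+\sum_i Q_{\infty,2d-1-i}u^i\bigr)$. Calling it $b_{2d-1-n}$ reproduces the stated pattern. The first column reads $(-1,b_{2d-2},\dots,b_1)$, and the entry at row $n+1$, column $1$ is $b_{2d-1-n}$. This is exactly the identification \eqref{GenSeries}. (The expansion point written as $t\to 0$ there should be read as $u\to 0$.)

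For \eqref{GenSeries2}, substitute $u=\lambda^{-1}$ and multiply \eqref{GenSeries} by $-\lambda^{1-2d}$. The left-hand side of \eqref{GenSeries2} is $\lambda^{1-2d}\bigl(1+\sum_i Q_{\infty,2d-1-i}\lambda^{-i}\bigr)^{-1}$. Its expansion is $\lambda^{1-2d}-\sum_{n=1}^{2d-2} b_{2d-1-n}\lambda^{1-2d-n}$ plus a remainder. The remainder comes from the $O(u^{2d-1})$ tail, so it is $O(\lambda^{1-2d-(2d-1)})$. This is certainly $O(1)$ as $\lambda\to\infty$, although the stated bound is much weaker than the one actually obtained.

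No step is a real obstacle. The only point needing care is the index bookkeeping: matching the reversed labelling $Q_{\infty,2d-1-i}$ and $b_{2d-1-n}$ to the positions in the matrix. A quick check on the $2\times 2$ lower corner handles this: the product gives $b_{2d-2}=Q_{\infty,2d-2}$, consistent with the series coefficient of $u$ in $-1/(1+Q_{\infty,2d-2}u+\dots)$.
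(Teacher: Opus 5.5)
Your proof is correct and is essentially the argument the paper has in mind. The paper omits the proof as standard, but records exactly the generating-series identity $\bigl(1-\sum_{n} b_{2d-1-n}u^n\bigr)\bigl(1+\sum_{i} Q_{\infty,2d-1-i}u^i\bigr)=1+O(u^{2d-1})$, which your isomorphism between lower triangular Toeplitz matrices and $\mathbb{C}[u]/(u^{2d-1})$ makes precise. You are also right that the remainder in the $\lambda$-version is actually $O(\lambda^{2-4d})$, which is the bound the paper uses later, not merely $O(1)$.
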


\begin{proof}
    This is a standard result for Toeplitz matrices and we omit the proof. 
\end{proof}

For clarity, we give the first cases:
\bea b_{2d-2}&=&Q_{\infty, 2d-2}\cr
b_{2d-3}&=&-(Q_{\infty, 2d-2})^2 + Q_{\infty,2d-3}\\
b_{2d-4}&=&(Q_{\infty, 2d-2})^3-2 Q_{\infty, 2d-2}Q_{\infty, 2d-3}+ Q_{\infty, 2d-4} \nonumber
\eea
and the general formula follow from the identity:
\beq \left(1- \sum_{n=1}^{2d-2} b_{2d-1-n} u^n\right)\left(1+\underset{i=1}{\overset{2d-2}{\sum}} Q_{\infty,2d-1-i} u^i\right)=1+O\left(u^{2d-1}\right)\eeq 
giving the induction:
\bea \forall\, k\in \llbracket 1, 2d-2\rrbracket\,:\,  b_{2d-1-k}&= &Q_{\infty,2d-1-k} -\sum_{n=1}^{k-1} b_{2d-1-n}Q_{\infty, 2d-1-k+n} \cr
\text{i.e. }\forall\, j\in \llbracket 1, 2d-2\rrbracket\,:\,  b_{j}&=& Q_{\infty,j} -\sum_{m=j+1}^{2d-2}b_{m}Q_{\infty, 2d-1+j-m} =0\eea

In order to define a suitable change of Darboux coordinates, we need the following theorem.

\begin{theorem}[Existence of dual coordinates]\label{TheodS} There exists a vector $\mathbf{S}_\infty(\mathbf{Q}_{\infty}):= \begin{pmatrix} S_{\infty,0}, ..., S_{\infty,2d-2}\end{pmatrix} \in \mathcal{M}_{2d-1,1}(\mathbb{C})$ such that $\forall \, (i,j)\in \llbracket 1,2d-1\rrbracket^2$:
\beq \partial_{Q_{\infty,j-1}}[\mathbf{S}_\infty(\mathbf{Q}_{\infty})]_i=\left[ \begin{pmatrix}
-1&0&\dots&0\\
-Q_{\infty,2d-2}&-1& &0\\
\vdots& \ddots &\ddots &\vdots \\
\vdots& \ddots &\ddots &0 \\
-Q_{\infty,1}&\dots&-Q_{\infty,2d-2}&-1\\
\end{pmatrix}^{-1} \right]_{j,i}=[(-I_{2d-1}-N)^{-1}]_{j,i} \eeq
In other words for all  $(i,j)\in \llbracket 1,2d-1\rrbracket^2$: $\partial_{Q_{\infty,j-1}} S_{\infty,i-1} =\left[( -I_{2d-1}-N(\mathbf{Q}_\infty))^{-1}\right]_{j,i}= b_{2d-1+i-j} \delta_{j\geq i}$ with the convention that $b_{2d-1}=-1$.
\end{theorem}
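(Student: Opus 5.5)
We need closedness: for each fixed $i$, the 1-form $\sum_j b_{2d-1+i-j}\delta_{j\geq i}\, dQ_{\infty,j-1}$ must be exact. Equivalently, $\partial_{Q_{\infty,k-1}} M_{j,i} = \partial_{Q_{\infty,j-1}} M_{k,i}$, where $M:=(-I_{2d-1}-N)^{-1}$. Since $\mathbb{C}^{2d-1}$ is contractible, this condition together with the Poincaré lemma gives existence, and the potential is polynomial because the $b$'s are. So the plan is to reduce everything to a symmetry statement about derivatives of the $b$'s and verify it with generating series.

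Reindex. Set $m=j-i\geq 0$, so the entry is $b_{2d-1-m}$, with $b_{2d-1}=-1$. Write $\beta_m:=b_{2d-1-m}$ and $c_i:=Q_{\infty,2d-1-i}$ for $i\ge1$. By \autoref{PropInverseToep}, with $F(u):=1+\sum_{i\ge1}c_i u^i$, one has $\sum_{m\geq0}\beta_m u^m=-1/F(u)$ modulo $u^{2d-1}$. The variable $Q_{\infty,j-1}$ is $c_{2d-j}$. Closedness for column $i$ then amounts to
$$\partial_{c_{2d-k}}\beta_{j-i}=\partial_{c_{2d-j}}\beta_{k-i}\qquad\text{for all } j,k\geq i.$$
When $j<i$ the entry vanishes identically, and $\beta_{k-i}$ involves only $c_1,\dots,c_{k-i}$. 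Hence $\partial_{c_{2d-j}}\beta_{k-i}\neq0$ requires $2d-j\le k-i$, which cannot happen when $j<i$ since $k\le 2d-1$. So the mixed cases are trivially consistent.

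For $j,k\geq i$, differentiate the generating function:
$$\partial_{c_a}\Big(-\frac1F\Big)=\frac{u^a}{F^2}.$$
Writing $1/F^2=\sum_n \gamma_n u^n$, this gives $\partial_{c_a}\beta_m=\gamma_{m-a}$; the truncation is harmless because only low coefficients are involved. The condition becomes $\gamma_{j-i-2d+k}=\gamma_{k-i-2d+j}$, which holds trivially since the two indices coincide. This is the key step, and it is immediate once the generating-function derivative is in place. The obstacle I anticipate is purely bookkeeping: the index conventions ($b_j$ versus $Q_{\infty,j}$, the reversed ordering, and the $j-1$ shift) must be handled consistently, and I would check them against the listed $b_{2d-2}$ and $b_{2d-3}$.

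Finally, to make the potential explicit, I would integrate. Since $\partial_{c_a}\beta_m$ depends only on $m-a$, a natural candidate is obtained by collecting the terms of $\log F(u)$ or of $1/F(u)$. Concretely, I would set
$$S_{\infty,i-1}=\sum_{j\ge i}\int \beta_{j-i}\,dQ_{\infty,j-1},$$
computed along the straight-line path from $0$ (the Poincaré homotopy formula), which yields a polynomial in $\mathbf{Q}_\infty$. Uniqueness holds up to additive constants, which are irrelevant here.
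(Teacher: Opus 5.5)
Your proposal is correct and takes essentially the same route as the paper. You reduce existence to the closedness condition $\partial_{Q_{\infty,k-1}}M_{j,i}=\partial_{Q_{\infty,j-1}}M_{k,i}$ and check it by differentiating the generating series $-1/F$, so that both sides become the same coefficient $[u^{j+k-i-2d}]\,F^{-2}$. Your version is slightly cleaner than the paper's: you get the correct sign $\partial_{c_a}(-1/F)=u^a/F^2$, whereas the paper writes a harmless minus sign. You also handle the cases $j<i$ or $k<i$ explicitly. Your $\log F$ hint does give an explicit potential, $S_{\infty,i-1}=[u^{2d-i}]\bigl(-\log(1+\sum_{l=1}^{2d-1}Q_{\infty,2d-1-l}u^l)\bigr)$, which reproduces the paper's examples for $d=2,3$.
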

\begin{proof} The existence is equivalent to the integrability condition  
\bea \label{Toprove}\partial_{Q_{\infty,k-1}}[(I+N)^{-1}]_{j,i}&=&\partial_{Q_{\infty,j-1}}[(I+N)^{-1}]_{k,i}  \,\,,\,\forall\,(i,j,k) \in \llbracket 1,2d-1\rrbracket.\cr
\Leftrightarrow  \partial_{Q_{\infty,k-1}}[b_{2d-1-(j-i)}]&=&\partial_{Q_{\infty,j-1}}[b_{2d-1-(k-i)}] \,\,,\,\forall\, i\leq j,\, k\leq i  \eea
with the convention that $b_{2d-1}=-1$ and $b_{k}=0$ when $k\geq 2d$. Using the generating series \eqref{GenSeries} we have:
\beq  \sum_{n=1}^{2d-2} \frac{\partial b_{2d-1-n}}{\partial Q_{\infty, k-1}} t^n =-\frac{t^{2d-k}}{\left(1+\underset{i=1}{\overset{2d-2}{\sum}} Q_{\infty,2d-1-i} t^i\right)^2} +O\left(t^{2d-1}\right)\eeq
so that
\beq \frac{\partial b_{2d-1-n}}{\partial Q_{\infty, k-1}}=- [t^{n+k-2d}]\left[\frac{1}{\left(1+\underset{i=1}{\overset{2d-2}{\sum}} Q_{\infty,2d-1-i} t^i\right)^2}\right]\eeq
Thus we have taking $n=j-i$:
\beq \frac{\partial b_{2d-1-(j-i)}}{\partial Q_{\infty, k-1}}=- [t^{j-i+k-2d}]\left[\frac{1}{\left(1+\underset{i=1}{\overset{2d-2}{\sum}} Q_{\infty,2d-1-i} t^i\right)^2}\right]\eeq
and taking $n=k-i$
\beq \frac{\partial b_{2d-1-(k-i)}}{\partial Q_{\infty, j-1}}=- [t^{k-i+j-2d}]\left[\frac{1}{\left(1+\underset{i=1}{\overset{2d-2}{\sum}} Q_{\infty,2d-1-i} t^i\right)^2}\right]\eeq
so that condition \eqref{Toprove} is satisfied.
\end{proof}

For completeness, let us give the first examples of $\mathbf{S}_\infty(\mathbf{Q}_{\infty})$:

\begin{itemize}
    \item For $d=2$: $(-I-N)^{-1}=\begin{pmatrix} -1 &0&0\\
    Q_{\infty, 2}&-1&0\\
    -(Q_{\infty, 2})^2 + Q_{\infty,1}& Q_{\infty, 2}& -1\end{pmatrix}$ 
    so that we can take $S_{\infty,2}= -Q_{\infty, 2}$ and $S_{\infty,1}= \frac{1}{2}(Q_{\infty, 2})^2 -Q_{\infty,1}$ and $S_{\infty,0}=-\frac{1}{3}(Q_{\infty,2})^3 +Q_{\infty,1}Q_{\infty,2}-Q_{\infty,0}$ that satisfy 
    $\frac{\partial S_{\infty,i-1}}{\partial Q_{\infty,j-1}}=[(-I-N)^{-1}]_{j,i}$ for all $(i,j)\in \llbracket 1, 3\rrbracket$.
    \item For $d=3$: 
\tiny{\beq (-I-N)^{-1}=\begin{pmatrix}
    -1 &0&0&0&0&\\
    Q_{\infty, 4}&-1&0&0&0\\
    -(Q_{\infty, 4})^2 + Q_{\infty,3}&Q_{\infty, 4}&-1&0&0\\
    (Q_{\infty, 4})^3-2 Q_{\infty, 4}Q_{\infty, 3}+ Q_{\infty, 2}& -(Q_{\infty, 4})^2 + Q_{\infty,3}&Q_{\infty, 4}&-1&0\\
    X&(Q_{\infty, 4})^3-2 Q_{\infty, 4}Q_{\infty, 3}+ Q_{\infty, 2}& -(Q_{\infty, 4})^2 + Q_{\infty,3}&+Q_{\infty, 4}&-1
    \end{pmatrix}
   \eeq}
   \normalsize{with}
   \beq X= -(Q_{\infty, 4})^4+3(Q_{\infty, 4})^2Q_{\infty, 3}- (Q_{\infty, 3})^2-2 Q_{\infty, 4}Q_{\infty, 2}+ Q_{\infty, 1} \eeq
   so that
   \bea S_{\infty,4}&=&-Q_{\infty,4}\cr
   S_{\infty,3}&=&\frac{1}{2}(Q_{\infty,4})^2- Q_{\infty,3}\cr
   S_{\infty,2}&=&-\frac{1}{3}(Q_{\infty,4})^3+Q_{\infty,4}Q_{\infty,3}-Q_{\infty,2}\cr
   S_{\infty,1}&=&\frac{1}{4}(Q_{\infty,4})^4-(Q_{\infty,4})^2Q_{\infty,3}+Q_{\infty,4}Q_{\infty,2}-Q_{\infty,1}\cr
   S_{\infty,0}&=&-\frac{1}{5}(Q_{\infty,4})^5+(Q_{\infty,4})^3Q_{\infty,3}-Q_{\infty,4}(Q_{\infty,3})^2-(Q_{\infty,4})^2Q_{\infty,2}+ Q_{\infty,4}Q_{\infty,1} \cr&&
   +Q_{\infty,3}Q_{\infty,2}-Q_{\infty,0}
   \eea
\end{itemize}

\begin{definition}[Definition of the coordinates $(\mathbf{S}_\infty,S_{0,1})$]\label{DefS} We define:
\beq \mathbf{S}:=(S_{\infty,0},\dots,S_{\infty,2d-1}, S_{0,1})^t
\eeq 
with the coordinates $\mathbf{S}_{\infty}:=(S_{\infty,0},\dots,S_{\infty,2d-1})^t$ defined by \autoref{TheodS} and
$S_{0,1}:=Q_{0,1}$.
\end{definition}

\begin{theorem}\label{TheoSymplectic} The change of coordinates $(Q_{\infty,0}, \dots, Q_{\infty,2d-2}, ,Q_{0,1},P_{\infty,0},\dots, P_{\infty,2d-2},P_{0,1})$ to $(S_{\infty,0},\dots, S_{\infty,2d-2} ,S_{0,1},R_{\infty,0},\dots, R_{\infty,2d-2},R_{0,1})$ is a time-independent symplectomorphism, i.e. it satisfies
\bea\label{CanonicalChangeEq} \sum_{j=0}^{2d-2} dQ_{\infty,j}\wedge dP_{\infty,j} +dQ_{0,1}\wedge dP_{0,1}&=&\sum_{k=0}^{2d-2} dS_{\infty,k}\wedge dR_{\infty,k} +dS_{0,1}\wedge dR_{0,1} 
\eea
In other words, the Darboux coordinates $(S_{\infty,0},\dots, S_{\infty,2d-2} ,S_{0,1},R_{\infty,0},\dots, R_{\infty,2d-2},R_{0,1})$ are also canonical for the Poisson bracket.
\end{theorem}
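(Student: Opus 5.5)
The approach is to view the change of coordinates as a cotangent lift of the point transformation $\mathbf{Q}\mapsto\mathbf{S}$ on configuration space, followed by a fibre translation by an exact one-form. We write $\mathbf{Q}:=(\mathbf{Q}_\infty,Q_{0,1})$, $\mathbf{P}:=(\mathbf{P}_\infty,P_{0,1})$ and similarly $\mathbf{S},\mathbf{R}$. By \autoref{TheodS}, the Jacobian of the map $\mathbf{Q}\mapsto \mathbf{S}$ satisfies $\partial S_{\infty,i-1}/\partial Q_{\infty,j-1}=[(-I-N)^{-1}]_{j,i}$, that is $D\mathbf{S}=\left((-I-N)^{-1}\right)^t$ on the $\infty$-block, and $S_{0,1}=Q_{0,1}$ with no cross dependence. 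Hence $D\mathbf{S}=(T(\mathbf{Q}_\infty)^{-1})^t$, with $T$ the block matrix of \eqref{DefTK} (identity in the $(0,1)$ slot). Note that $T$ does not involve $Q_{0,1}$, and $\mathbf{S}_\infty$ depends only on $\mathbf{Q}_\infty$.

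The first step is to compare Liouville forms. Since $\mathbf{R}=T\mathbf{P}+\mathbf{K}$, we compute
\[
\sum_k R_k\, dS_k=\mathbf{R}^t\, D\mathbf{S}\, d\mathbf{Q}=(T\mathbf{P}+\mathbf{K})^t (T^{-1})^t d\mathbf{Q}=\mathbf{P}^t d\mathbf{Q}+\mathbf{K}^t (T^{t})^{-1}d\mathbf{Q}
\]
(note $(T^{-1})^t=(T^t)^{-1}$, and $\mathbf{R}^t(T^{-1})^t=(T^{-1}\mathbf{R})^t$, which yields $\mathbf{P}+T^{-1}\mathbf{K}$). Taking $d$ of this identity gives $\sum dS_k\wedge dR_k$ on the left, up to sign conventions applied consistently on both sides. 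It therefore suffices to show that the one-form $\theta:=(T^{-1}\mathbf{K})^t d\mathbf{Q}=\sum_i (T^{-1}\mathbf{K})_i\, dQ_i$ is closed on configuration space, i.e.\ that $T^{-1}\mathbf{K}=\nabla F(\mathbf{Q})$ for some function $F$. The statement is time-independent in the sense of the paper: $t_{\infty,2d+1}=4^d$ is fixed, and $\mathbf{K}$ only involves $t_{\infty,2d+1}$.

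The second step is to exhibit $F$, and this is the main obstacle. I would translate into generating series. With $u=\lambda^{-1}$, the Toeplitz structure of $T$ means that multiplying by $T^{-1}$ corresponds to multiplying by $-1/(1+\sum_i Q_{\infty,2d-1-i}u^i)$ truncated, as in \autoref{PropInverseToep}. Meanwhile $\mathbf{K}$ is, up to truncation, the series $t_{\infty,2d+1}\bigl(Q_{\infty,2d-2}\,q(u)-u^{-1}(q(u)-1)\bigr)$, where $q(u)=\lambda^{1-2d}\check L_{1,2}/\omega$ is the normalized polynomial. The components of $T^{-1}\mathbf{K}$ should thus become coefficients of $q(u)^{-1}$ times simple expressions, i.e.\ derivatives of $\log q$ or of $q^{-1}$. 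Motivated by the $d=2,3$ examples where $S_{\infty,k}$ are coefficients of $\log$- or power-type expansions, I would guess
\[
F=t_{\infty,2d+1}\Bigl(c_1\,[\lambda^{\cdot}]\,\log\bigl(\lambda^{1-2d}\check L_{1,2}/\omega\bigr)+c_2\,Q_{\infty,2d-2}S_{\infty,\cdot}+\dots\Bigr)
\]
with suitable coefficients. Rather than guess blindly, I would verify closedness directly, $\partial_{Q_a}(T^{-1}\mathbf{K})_b=\partial_{Q_b}(T^{-1}\mathbf{K})_a$. The derivative of $T^{-1}$ gives terms of the form $[u^{\cdot}](1+\sum Q u^i)^{-2}$, exactly as in the proof of \autoref{TheodS}, and these are symmetric in the index sum. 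The derivative of $\mathbf{K}$ gives only unit or $Q_{\infty,2d-2}$ entries, and one checks that the resulting leftover terms pair symmetrically, namely the $J$-shift term against the $Q_{\infty,2d-2}$ term.

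Finally, the $Q_{0,1}$ component needs separate care. The last row of $T^{-1}\mathbf{K}$ is $t_{\infty,2d+1}Q_{\infty,2d-2}$ (the $J$ term has vanishing last row), giving the contribution $t_{\infty,2d+1}Q_{\infty,2d-2}\,dQ_{0,1}$. Its partner is the dependence of the $Q_{\infty,2d-2}$-component on $Q_{0,1}$ through $-t_{\infty,2d+1}JQ$, whose row $2d-1$ contains $Q_{0,1}$, multiplied by $[T^{-1}]_{2d-1,2d-1}=-1$. I expect these to cancel in $d\theta$ with matching signs, i.e.\ $F\ni t_{\infty,2d+1}Q_{\infty,2d-2}Q_{0,1}$. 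With $d\theta=0$ established, $\sum dS\wedge dR=\sum dQ\wedge dP$ follows immediately, which proves \eqref{CanonicalChangeEq}.
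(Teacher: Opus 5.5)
Your reduction is sound, and it is organized more efficiently than the paper's proof. The paper expands $\sum_i dS_i\wedge dR_i$ directly and checks three index identities $(C_1)$, $(C_2)$, $(C_3)$ by hand. Your Liouville-form identity $\sum_k R_k\,dS_k=\sum_k P_k\,dQ_k+\theta$, with $\theta=(T^{-1}\mathbf K)^t d\mathbf Q$, uses Theorem \ref{TheodS} once (this is the paper's $(C_1)$) and makes $(C_2)$ automatic. The theorem is then equivalent to $d\theta=0$, which is exactly the paper's $(C_3)$.

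Your handling of the $Q_{0,1}$ direction is correct: $K_{2d-1}=t_{\infty,2d+1}(Q_{\infty,2d-2}Q_{\infty,0}-Q_{0,1})$ does depend on $Q_{0,1}$, and the two relevant mixed derivatives both equal $t_{\infty,2d+1}$. The paper's own check of this subcase sets $\partial K_{2d-1}/\partial Q_{0,1}=0$, which is a slip, though its conclusion there still holds.

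The genuine gap is that you never establish $d\theta=0$ on the $\mathbf Q_\infty$-block, which is the only nontrivial content of the theorem. Your candidate $F$ has unspecified coefficients and indices. The sentence ``one checks that the leftover terms pair symmetrically'' is an assertion, not a verification. So the proposal stops exactly where the work is.

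You were one division away from closing it. Write $q(u)=1+\sum_{i=1}^{2d-1}Q_{\infty,2d-1-i}u^i+Q_{0,1}u^{2d}$. On the $\infty$-block, $T$ acts on generating series $\sum_i v_iu^i$ as multiplication by $-q$ modulo $u^{2d}$, so $T^{-1}$ acts as multiplication by $-1/q$. You already found $\sum_{i=1}^{2d-1}K_iu^i\equiv t_{\infty,2d+1}\bigl(Q_{\infty,2d-2}\,q-u^{-1}(q-1)\bigr)$. Therefore
\[
\sum_{k=1}^{2d-1}(T^{-1}\mathbf K)_k\,u^k\equiv t_{\infty,2d+1}\Bigl(u^{-1}-Q_{\infty,2d-2}-\frac{1}{u\,q(u)}\Bigr)\pmod{u^{2d}},\qquad\text{i.e.}\qquad (T^{-1}\mathbf K)_k=-t_{\infty,2d+1}\,[u^{k+1}]\frac{1}{q(u)} .
\]
Since $Q_{\infty,k-1}$ is the coefficient of $u^{2d-k}$ in $q$, we have $\partial_{Q_{\infty,k-1}}[u^{2d+1}]\log q=[u^{k+1}]q^{-1}$. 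Also $\partial_{Q_{0,1}}[u^{2d+1}]\log q=[u]q^{-1}=-Q_{\infty,2d-2}$. Hence $\theta=dF$ with
\[
F=-t_{\infty,2d+1}\,[u^{2d+1}]\log q(u).
\]
This $F$ contains $t_{\infty,2d+1}Q_{\infty,2d-2}Q_{0,1}$, as you predicted. It also fits the pattern $S_{\infty,k}=-[u^{2d-1-k}]\log q(u)$ visible in the paper's $d=2,3$ examples. With this explicit potential, your argument becomes a complete proof and is considerably shorter than the paper's index computation.
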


\begin{proof}The proof is done in \autoref{AppendixProofDarbouxCoord}.
\end{proof}

\begin{remark}For $t_{\infty,2d}=0$, the choice of Darboux coordinates $(\mathbf{R},\mathbf{S})$ follows from the same kind of transformations as the one relating $(q_i,p_i)_{1\leq i\leq 2d}$ to $(\mathbf{Q}_\infty,Q_{0,1},\mathbf{P}_{\infty},P_{0,1})$ with some Darboux coordinates $(u_i,v_i)_{1\leq i\leq 2d}$ that would correspond to
\beq \check{L}_{1,1}(u_i)=0 \,, \qquad v_i=\check{L}_{1,2}(u_i)\,,\qquad \forall\, i\in \llbracket 1, 2d\rrbracket\eeq
satisfying that $(u_i,v_i)$ is a point on the spectral curve $y^2= -\check{L}_{1,1}(\lambda)^2-\check{L}_{1,2}(\lambda)\check{L}_{2,1}(\lambda)$. Indeed, for $t_{\infty,2d}=0$, it is clear that we have
\beq \check{L}_{1,1}(\lambda)=-t_{\infty,2d+1} \prod_{i=1}^{2d}(\lambda-u_i) =-t_{\infty,2d+1}\lambda^{2d}  +\sum_{k=0}^{2d-2} R_{\infty,2d-2-k}\lambda^k +\frac{R_{0,1}S_{0,1}}{\lambda}\eeq
which is similar to \eqref{DefNewcoor}. The second half of the Darboux coordinates follows from the same construction because both sets are canonical so that the second identity in \eqref{DefNewcoor} should also hold for $(u_i,v_i)_{1\leq i\leq 2d} \to (\mathbf{R},\mathbf{S})$.
\end{remark}

\subsubsection{Expression of the Lax matrix in the coordinates $(\mathbf{R},\mathbf{S})$.}
We recall that we have:
\beq \label{CheckL12RS}\check{L}_{1,2}(\lambda)= \omega\left( \frac{S_{0,1}}{\lambda}+ \sum_{k=0}^{2d-2} Q_{\infty,k}(\mathbf{S}_\infty) \lambda^k +\lambda^{2d-1}\right)\eeq
where $(Q_{\infty,k}(\mathbf{S}_\infty))_{0\leq k\leq 2d-2}$ are given implicitly  by \autoref{TheodS}.
Then, from \autoref{GeoLaxMatrices} and the definition of the geometric Lax coordinates we have:
\bea \check{L}_{1,1}(\lambda)&=&-t_{\infty,2d+1}\lambda^{2d}
+  \sum_{k=0}^{2d-2}R_{\infty,2d-2-k}\lambda^k+ \frac{R_{0,1}S_{0,1}}{\lambda}-t_{\infty,2d}\frac{\check{L}_{1,2}(\lambda)}{\omega} \\
&=&-t_{\infty,2d+1}\lambda^{2d}-t_{\infty,2d}\lambda^{2d-1}+\sum_{k=0}^{2d-2}R_{\infty,2d-2-k}\lambda^k+ \frac{R_{0,1}S_{0,1}}{\lambda} -t_{\infty,2d}\left(\frac{S_{0,1}}{\lambda}+ \sum_{k=0}^{2d-2} Q_{\infty,k}(\mathbf{S}_\infty) \lambda^k\right)\cr
&=&-t_{\infty,2d+1}\lambda^{2d}-t_{\infty,2d}\lambda^{2d-1}+\sum_{k=0}^{2d-2}(R_{\infty,2d-2-k}-t_{\infty,2k}Q_{\infty,k}(\mathbf{S}_\infty))\lambda^k+ \left(R_{0,1} -t_{\infty,2d} \right)S_{0,1}\frac{1}{\lambda} \nonumber
 \eea
It is also a trivial observation that
\beq\label{SpecialValue} \frac{\frac{(t_{0,0})^2}{Q_{0,1}}-Q_{0,1}\left(P_{0,1}-t_{\infty,2d}+t_{\infty,2d+1}Q_{\infty,2d-2}\right)^2 }{\lambda}=\frac{\frac{(t_{0,0})^2}{S_{0,1}}-S_{0,1}(R_{0,1}-t_{\infty,2d}S_{0,1})^2  }{\lambda} \eeq
giving the coefficient $\lambda^{-1}$ of $\omega \check{L}_{2,1}(\lambda)$ in terms of $(\mathbf{R},\mathbf{S})$. Thus in order to have a parametrization of the Lax matrix $\check{L}(\lambda)$ in terms of the Darboux coordinates $(\mathbf{R},\mathbf{S})$, we need to get the polynomial part of $\check{L}_{2,1}(\lambda)$ in these coordinates. The main observation is that
\beq \label{RelationOneOverL12}\frac{1}{\frac{\check{L}_{1,2}(\lambda)}{\omega}} =\lambda^{1-2d}- \sum_{n=1}^{2d-2} b_{2d-1-n} \lambda^{1-2d-n} +O(\lambda^{2-4d})\eeq
following from \autoref{PropInverseToep}. Then, inserting into \eqref{L11OldPaper} we get:
\bea &&\omega \check{L}_{2,1}(\lambda)=\frac{\underset{j= 2 d-1}{\overset{4d}{\sum}}\left(\underset{m=0}{\overset{4d-j}{\sum}} t_{\infty,2 d +1-m}t_{\infty,j+m-2d+1}\right) \lambda^{j} -\check{L}_{1,1}(\lambda)^2}{\frac{\check{L}_{1,2}(\lambda)}{\omega}} +O(\lambda^{-1})\cr
&&= \left( \td{P}_\infty(\lambda)-\check{L}_{1,1}(\lambda)^2\right)\left(\lambda^{1-2d}- \sum_{n=1}^{2d-2} b_{2d-1-n} \lambda^{1-2d-n} +O(\lambda^{-4d+2}) \right) +O(\lambda^{-1})
\eea
where we have defined the polynomial $\td{P}_\infty$ by
\beq \td{P}_\infty(\lambda):= \underset{j= 2 d-1}{\overset{4d}{\sum}}\left(\underset{m=0}{\overset{4d-j}{\sum}} t_{\infty,2 d +1-m}t_{\infty,j+m-2d+1}\right) \lambda^{j}:= \sum_{j=2d-1}^{4d} \td{P}_{\infty,j} \lambda^j\eeq
Observe first that $\td{P}_\infty(\lambda)-\check{L}_{1,1}(\lambda)^2$ is of order $\lambda^{4d-3}$ at infinity because the first two terms cancel. Thus, we will not miss any terms up to $\lambda^{-1}$ in the expansion:
\beq \label{ExpressionL21}\omega \check{L}_{2,1}(\lambda)=\left( \td{P}_\infty(\lambda)-\check{L}_{1,1}(\lambda)^2\right)\left(\lambda^{1-2d}- \sum_{n=1}^{2d-2} b_{2d-1-n} \lambda^{1-2d-n} \right) +O(\lambda^{-1})
\eeq
The contribution from  $\td{P}_\infty$ is 
\beq \omega \check{L}_{2,1}(\lambda)= \td{P}_\infty(\lambda)-   \sum_{n=1}^{2d-2}\sum_{j=n+2d-1}^{4d} \td{P}_{\infty,j}b_{2d-1-n} \lambda^{j+1-2d-n} \eeq
and the other coefficients can be easily computed from \eqref{ExpressionL21} and the relation between $(b_{k})_{1\leq k\leq 2d-2}$ and $\mathbf{S}$ given by \autoref{TheodS}.

\subsubsection{Expression of the Hamiltonians for the Darboux coordinates $(\mathbf{R},\mathbf{S})$}
Since the change of coordinates giving $(S_{\infty,0},\dots, S_{\infty,2d-2} ,S_{0,1},R_{\infty,2d-2},\dots, R_{\infty,0},R_{0,1})$ preserve the symplectic structure and because  $t_{\infty, 2d+1}$ is not a deformation parameter in this paper, we can immediately obtain the Hamiltonians for the Darboux coordinates $(\mathbf{R},\mathbf{S})$ by just replacing the old variables $(\mathbf{Q},\mathbf{P})$ by the new variables $(\mathbf{R},\mathbf{S})$ in the previous Hamiltonians.

\begin{theorem}[Hamiltonians for the $(\mathbf{R},\mathbf{S})$ Darboux coordinates for the even PIV hierarchy]\label{PropHamGeometricDarbouxRS} We have
\small{\beq \label{NewHamReduced3}\begin{pmatrix}\text{Ham}_{{t_{\infty,1}}}(\mathbf{S}_{\infty}, S_{0,1},\mathbf{R}_{\infty},R_{0,1};\mathbf{t},t_{\infty,0},t_{0,0})\\ \vdots \\ (2d-1)\text{Ham}_{{t_{\infty,2d-1}}}(\mathbf{S}_{\infty}, S_{0,1},\mathbf{R}_{\infty},R_{0,1};\mathbf{t},t_{\infty,0},t_{0,0})\\
(2d)\text{Ham}_{{t_{\infty,2d}}}(\mathbf{S}_{\infty}, S_{0,1},\mathbf{R}_{\infty},R_{0,1};\mathbf{t},t_{\infty,0},t_{0,0})
\end{pmatrix}=\left(M_{\infty}(\mathbf{t})\right)^{-1}\begin{pmatrix}H_{\infty,2d-2}(\mathbf{S}_{\infty}, S_{0,1},\mathbf{R}_{\infty},R_{0,1},\mathbf{t},t_{\infty,0},t_{0,0})\\ \vdots\\ H_{\infty,0}(\mathbf{S}_{\infty}, S_{0,1},\mathbf{R}_{\infty},R_{0,1},\mathbf{t},t_{\infty,0},t_{0,0}) \\ H_{0,1}(\mathbf{S}_{\infty}, S_{0,1},\mathbf{R}_{\infty},R_{0,1},\mathbf{t},t_{\infty,0},t_{0,0})+\td{\delta} 
\end{pmatrix}
\eeq
}
\normalsize{with} the coefficient $\td{\delta}$ given in terms of the new coordinates from the initial term $\delta$. The expression of $(H_{\infty,j})_{0\leq j\leq 2d-2}$ $H_{0,1}$ in terms of $(\mathbf{R}, \mathbf{S})$ can be obtained from \eqref{DefHs} and the expression of the entries of $\check{L}(\lambda)$ in terms of $(\mathbf{R},\mathbf{S})$ given in the previous section.
\end{theorem}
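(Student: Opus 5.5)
The plan is to transport the Hamiltonian system of \autoref{PropHamGeometricDarboux} through the change of variables $(\mathbf{Q}_\infty,Q_{0,1},\mathbf{P}_\infty,P_{0,1})\mapsto(\mathbf{S}_\infty,S_{0,1},\mathbf{R}_\infty,R_{0,1})$. The starting point is \autoref{TheoSymplectic}, which says this change preserves $\Omega$. The second input is that the change is time-independent: $\mathbf{S}_\infty$ depends only on $\mathbf{Q}_\infty$ by \autoref{TheodS}, and $S_{0,1}=Q_{0,1}$, so neither involves the deformation parameters.

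The point needing care is $\mathbf{R}$. Through $\mathbf{K}(\mathbf{Q}_\infty)$ it depends on $t_{\infty,2d+1}$, but in the FN normalization of \autoref{NormalizationLaxMatrix} this is frozen to $4^d$. It does not depend on the genuine deformation parameters $t_{\infty,1},\dots,t_{\infty,2d}$. I would double check that $t_{\infty,2d}$ does not enter the definition of $R_{0,1}$ or $\mathbf{R}_\infty$; reading the formulas, it does not, since it only appears in \eqref{CheckL11RS} as a separate term.

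The key steps, in order, are as follows.

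\textbf{Step 1: no correction term.} For a time-dependent canonical transformation, the Hamiltonian of the new system would acquire the derivative of a generating function with respect to the time. I would argue this correction vanishes. Choose a type-2 generating function $F(\mathbf{Q},\mathbf{R})$. Alternatively, use the one-form identity $\sum P\,dQ-\sum R\,dS=dF$ on phase space. Since the relation between old and new coordinates carries no dependence on $t_{\infty,k}$ for $k\le 2d$, we have $\partial_{t_{\infty,k}}F=0$.

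\textbf{Step 2: the Hamilton equations carry over.} Hence for every direction $\boldsymbol{\alpha}$ in \eqref{GeneralDef2}, the vector field $\mathcal{L}_{\boldsymbol{\alpha}}$ on phase space is Hamiltonian with respect to $\Omega$ with the same function $\text{Ham}_{\boldsymbol{\alpha}}$. Because $\Omega$ has the same canonical form in $(\mathbf{S},\mathbf{R})$, Hamilton's equations in the new variables hold with $\text{Ham}_{\boldsymbol{\alpha}}$ composed with the inverse change of coordinates.

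\textbf{Step 3: re-expressing the right-hand side of \eqref{NewHamReduced2}.} The matrix $M_\infty(\mathbf{t})^{-1}$ is purely time-dependent and is unchanged. Each $H_{\infty,j}$ and $H_{0,1}$ is defined in \eqref{DefHs} intrinsically through residues of the entries of $\check{L}(\lambda)$. It therefore suffices to substitute the entries of $\check{L}$ written in $(\mathbf{R},\mathbf{S})$: \eqref{CheckL12RS} and \eqref{CheckL11RS}, together with \eqref{ExpressionL21} and \eqref{SpecialValue} for $\check{L}_{2,1}$. Finally, $\td{\delta}$ is simply $\delta=-\sum_j p_j$ rewritten by inverting $T(\mathbf{Q}_\infty)$ to express $\mathbf{P}$ in terms of $\mathbf{R}$ and $\mathbf{K}$, and $\mathbf{Q}_\infty$ in terms of $\mathbf{S}_\infty$.

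The main subtlety is Step 1, specifically certifying that the full coordinate change (including the inverse map $\mathbf{S}_\infty\mapsto\mathbf{Q}_\infty$ implicit from \autoref{TheodS}) has no hidden dependence on $t_{\infty,2d}$ or lower times. The inverse is triangular with diagonal entries $-1$, so it is a polynomial map, and this check reduces to inspecting $\mathbf{K}$.
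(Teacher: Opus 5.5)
Your proposal is correct and follows the paper's own argument. The change $(\mathbf{Q},\mathbf{P})\mapsto(\mathbf{S},\mathbf{R})$ is symplectic by \autoref{TheoSymplectic}, and it depends on the times only through the frozen parameter $t_{\infty,2d+1}$, so the Hamiltonians of \autoref{PropHamGeometricDarboux} carry over by direct substitution with no correction term. Your generating-function discussion and your explicit check that $t_{\infty,2d}$ does not enter $\mathbf{R}$ spell out what the paper states in one line; the only implicit choice is that the integration constants for $\mathbf{S}_\infty$ in \autoref{TheodS} are taken time-independent, as in the paper's explicit examples.
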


The important point to remember is that the general form of the Hamiltonians is the same in the three sets of Darboux coordinates given in this section. It consists of the invariants $(\mathbf{H}_\infty, H_{0,1})$ of the oper gauge multiplied by the time-dependent matrix $M_{\infty}(\mathbf{t})^{-1}$. Only the expression of $(\mathbf{H}_\infty, H_{0,1})$ depends on the choice of the set of Darboux coordinates.

\section{The symmetry reduction to the FN PII hierarchy}\label{SecSymmetryReduction}

The purpose of this section is to show that the FN PII hierarchy can be obtained as a reduction of the even PIV hierarchy described in the previous section by imposing a symmetry at the level of the Lax system. We will start by discussing this symmetry and its origin and then impose it and discuss its constraints on the Lax pair.
\subsection{Involution and consequences on the Lax matrices} 
The $n$-th element of the FN PII hierarchy admits a Lax pair representation that exhibits a symmetry when considering $\lambda\to -\lambda$ that we will enforce to the PIV hierarchy.

\begin{definition} [Definition of the symmetry]\label{DefSymmetry}We impose the symmetry condition:
\beq \check{\Psi}(-\lambda)= \sigma_1 \check{\Psi}(\lambda)\sigma_1, \qquad \text{ where } \qquad  \sigma_1:=\begin{pmatrix} 0&1\\1&0\end{pmatrix} \eeq
on the horizontal sections of the even Painlev\'{e} IV hierarchy.\footnote{The presence of $\sigma_1$ on the right is necessary for the local Birkhoff factorizations to hold even if it does not modify the symmetry condition for the Lax matrix in \autoref{PropositionSymmetryLaxMatrices} which is really the one that we will use in this article.}  
\end{definition}

Note that the symmetry is an involution and depends on the gauge in the sense that it is imposed to $\check{L}$ (it would take other forms in different gauges). Note that, since the symmetry governs the transformation $\lambda\to -\lambda$, it is not compatible with arbitrary M\"obius transformations acting on $\lambda$. For example, if the position of the finite pole had been taken arbitrary (and fixing $t_{\infty,2d}=0$ as in the canonical choice made in \cite{marchal2024hamiltonianrepresentationisomonodromicdeformations}) then the symmetry condition would have needed an adaptation. As we will see below, this symmetry is compatible with the normalization at infinity of $\check{L}(\lambda)$, i.e. the fact that $\check{L}(\lambda)=\text{diag}(t_{\infty,2d+1},-t_{\infty,2d+1})\lambda^{2d} +O(\lambda^{2d-1})$. However, it will fix the value of $\omega$ to a non-trivial value thereby leaving no remaining degrees of freedom for the action $SL_2(\mathbb{C})$. 

\medskip

The symmetry considered in this section admits a natural geometric interpretation. At the level of the spectral parameter, the involution $\lambda \mapsto -\lambda$ preserves the pole structure of the even PIV hierarchy and acts on the corresponding space of meromorphic connections. Requiring the horizontal sections to satisfy $\Psi(-\lambda)=\sigma_1\Psi(\lambda)\sigma_1$ selects the fixed-point locus of this action. As a consequence, the associated spectral data acquire an additional symmetry and naturally descend to the quotient by $\lambda \mapsto -\lambda$. From this perspective, the FN PII hierarchy appears as the isomonodromic dynamics induced on the invariant submanifold determined by the involution. The purpose of this section is to make this correspondence explicit at the level of the Lax matrices and of the Darboux coordinates.

\begin{proposition}[Symmetry for the Lax matrices]\label{PropositionSymmetryLaxMatrices}The symmetry condition of \autoref{DefSymmetry} implies:
\beq \check{L}(-\lambda)=-\sigma_1 \check{L}(\lambda) \sigma_1\, , \qquad  \text{ and } \qquad  
\check{A}_{\boldsymbol{\alpha}}(-\lambda)=\sigma_1 \check{A}_{\boldsymbol{\alpha}}(\lambda) \sigma_1
\eeq
which translates at the level of the entries into
\begin{align}
 \check{L}_{1,1}(-\lambda)=&\check{L}_{1,1}(\lambda)\,, \qquad  \qquad \check{L}_{1,2}(-\lambda)=-\check{L}_{2,1}(\lambda)\,, \qquad  \check{L}_{2,1}(-\lambda)=-\check{L}_{1,2}(\lambda) \\
[\check{A}_{\boldsymbol{\alpha}}(-\lambda)]_{1,1}=&-[\check{A}_{\boldsymbol{\alpha}}(\lambda)]_{1,1}\,, \qquad \qquad [\check{A}_{\boldsymbol{\alpha}}(-\lambda)]_{1,2}=[\check{A}_{\boldsymbol{\alpha}}(\lambda)]_{2,1}\,, \qquad  [\check{A}_{\boldsymbol{\alpha}}(-\lambda)]_{2,1}=[\check{A}_{\boldsymbol{\alpha}}(\lambda)]_{1,2} \nonumber
\end{align}
Consequently we have
\begin{align}
    \det \check{L}(-\lambda) = \det \check{L}(\lambda), \qquad \text{implying} \qquad t_{\infty,2k}=0\,\,,\,\, \forall \, k\in \llbracket 0,d\rrbracket
\end{align}
Moreover, we have
\bea R_{0,1}&=&0\cr
R_{\infty,2k-1}&=&0\,\,,\,\, \forall\, k\in \llbracket 1, d-1\rrbracket\cr
\omega&=&-\frac{t_{0,0}}{Q_{0,1}}=-\frac{t_{0,0}}{S_{0,1}}
\eea
\end{proposition}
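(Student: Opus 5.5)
The plan is to differentiate the symmetry relation of \autoref{DefSymmetry} to get the conditions on $\check{L}$ and $\check{A}_{\boldsymbol{\alpha}}$. Then I would read off their consequences on the formal data at $\infty$, and on the explicit parameterizations \eqref{CheckL11RS}, \eqref{CheckL12RS} and \eqref{SpecialValue}. Differentiating $\check{\Psi}(-\lambda)=\sigma_1\check{\Psi}(\lambda)\sigma_1$ in $\lambda$ gives
\beq -\check{L}(-\lambda)\check{\Psi}(-\lambda)=\sigma_1\check{L}(\lambda)\check{\Psi}(\lambda)\sigma_1 .\eeq
Substituting $\check{\Psi}(-\lambda)=\sigma_1\check{\Psi}(\lambda)\sigma_1$, using $\sigma_1^2=I_2$ and the invertibility of $\check{\Psi}$, yields $\check{L}(-\lambda)=-\sigma_1\check{L}(\lambda)\sigma_1$. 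Applying $\mathcal{L}_{\boldsymbol{\alpha}}$, which commutes with $\lambda\mapsto-\lambda$, gives in the same way $\check{A}_{\boldsymbol{\alpha}}(-\lambda)=\sigma_1\check{A}_{\boldsymbol{\alpha}}(\lambda)\sigma_1$, with no sign. Conjugation by $\sigma_1$ swaps the two diagonal entries and swaps the two off-diagonal entries. Together with tracelessness this gives the entrywise relations. Since $\det(-\sigma_1 M\sigma_1)=\det M$ for $2\times 2$ matrices, we get $\det\check{L}(-\lambda)=\det\check{L}(\lambda)$.

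For the times, I would use that the formal eigenvalues of $\check{L}$ at infinity are $\pm\sqrt{-\det\check{L}(\lambda)}$. Their expansion is $\pm\left(\sum_{k}t_{\infty,k}\lambda^{k-1}+O(\lambda^{-2})\right)$, with leading term $t_{\infty,2d+1}\lambda^{2d}$ even in $\lambda$. The branch $\mu(\lambda)$ with this leading term is determined by the even function $-\det\check{L}$, so $\mu(-\lambda)=\mu(\lambda)$ as formal series. Hence all coefficients of odd powers $\lambda^{2k-1}$ vanish, i.e.\ $t_{\infty,2k}=0$ for $k\in\llbracket 0,d\rrbracket$, including $t_{\infty,0}$ at order $\lambda^{-1}$.

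Next I would insert $t_{\infty,2d}=0$ into \eqref{CheckL11RS}, which gives $\check{L}_{1,1}(\lambda)=-t_{\infty,2d+1}\lambda^{2d}+\sum_{k}R_{\infty,2d-2-k}\lambda^k+R_{0,1}S_{0,1}\lambda^{-1}$. Evenness of $\check{L}_{1,1}$ kills every odd power. For $\lambda^{-1}$ this gives $R_{0,1}S_{0,1}=0$, hence $R_{0,1}=0$, since $S_{0,1}=Q_{0,1}\neq0$ generically because $\check{L}_{1,2}$ has a genuine pole at $0$. For odd $k$ it gives $R_{\infty,2d-2-k}=0$, i.e.\ $R_{\infty,2k-1}=0$ for $k\in\llbracket1,d-1\rrbracket$.

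For $\omega$, I would compare residues at $\lambda=0$ in $\check{L}_{1,2}(-\lambda)=-\check{L}_{2,1}(\lambda)$. By \eqref{CheckL12RS} and \eqref{SpecialValue} with $R_{0,1}=t_{\infty,2d}=0$, this gives $\omega S_{0,1}=\frac{t_{0,0}^2}{\omega S_{0,1}}$, so $\omega=\pm t_{0,0}/S_{0,1}$. Equivalently, $\check{L}(-\lambda)=-\sigma_1\check{L}(\lambda)\sigma_1$ forces $\text{Res}_{0}\check{L}=\sigma_1(\text{Res}_0\check{L})\sigma_1$. So the residue matrix is $\begin{pmatrix}0&\omega S_{0,1}\\ \omega S_{0,1}&0\end{pmatrix}$, with eigenvalues $\pm\omega S_{0,1}=\pm t_{0,0}$.

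The main obstacle is fixing the sign. Here I would use the local factorization at $0$, namely $\check{\Psi}(\lambda)=G(\lambda)\lambda^{\text{diag}(-t_{0,0},t_{0,0})}C$, together with the $\sigma_1$ acting on the right in \autoref{DefSymmetry}. Then $\check{\Psi}(-\lambda)=\sigma_1\check{\Psi}(\lambda)\sigma_1$ forces $G(0)$ to intertwine $\sigma_1$ with the diagonal exponent. This pins the eigenvector of $\text{Res}_0\check{L}$ for the eigenvalue $-t_{0,0}$ to be $(1,1)^t$, hence $\omega S_{0,1}=-t_{0,0}$. If this bookkeeping turns out ambiguous, I would instead fix the sign by direct comparison with the FN residue matrix $\begin{pmatrix}0&-\alpha_n\\-\alpha_n&0\end{pmatrix}$ in $\lambda\mathcal{L}^{(n)}(\lambda)$, which realizes the convention $\omega=-t_{0,0}/Q_{0,1}$.
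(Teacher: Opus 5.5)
Apart from the sign of $\omega$, your argument is the paper's proof. You differentiate the symmetry relation and use evenness of $\det\check L$ at infinity to kill the even-indexed times. Your formal square root $\mu(\lambda)$ just repackages the paper's reading of the odd coefficients $\lambda^{4d-1},\lambda^{4d-3},\dots,\lambda^{2d-1}$ of $\det\check L$, which vanish and give $t_{\infty,2d}=t_{\infty,2d-2}=\dots=t_{\infty,0}=0$ triangularly. You then use parity of $\check L_{1,1}$ in \eqref{CheckL11RS} for the $R$'s, and compare residues at $0$ via \eqref{SpecialValue} to get $(\omega S_{0,1})^2=t_{0,0}^2$. Your remark that $\operatorname{Res}_{\lambda=0}\check L$ must be $\sigma_1$-invariant, hence equal to $\begin{pmatrix}0&\omega S_{0,1}\\ \omega S_{0,1}&0\end{pmatrix}$ with eigenvalues $\pm t_{0,0}$, is a slightly cleaner route to the same identity.

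The step that fails is your derivation of the sign; the symmetry cannot determine it. Take $\sigma_3=\diag(1,-1)$, which satisfies $\sigma_3\sigma_1\sigma_3=-\sigma_1$. On $\check L$, the map $\check\Psi\mapsto\sigma_3\check\Psi\sigma_3$ is conjugation by $\diag(1,\varphi)$ with $\varphi=-1$, an element of the centralizer torus the paper mentions. This map:
\begin{itemize}
\item preserves $\check\Psi(-\lambda)=\sigma_1\check\Psi(\lambda)\sigma_1$, including the right-hand $\sigma_1$;
\item commutes with $\lambda^{\diag(-t_{0,0},t_{0,0})}$ and with the diagonal normal form at $\infty$, so it respects the local factorizations and the FN normalization;
\item leaves $\check L_{1,1}$ unchanged, and hence leaves all of $(\mathbf Q,\mathbf P)$ and $(\mathbf R,\mathbf S)$ unchanged.
\end{itemize}
Yet it sends $\omega\mapsto-\omega$, and it exchanges which of $(1,1)^t$, $(1,-1)^t$ is the eigenvector of $\operatorname{Res}_{\lambda=0}\check L$ for $-t_{0,0}$. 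Correspondingly, the most your local computation can yield is $G(0)^{-1}\sigma_1 G(0)=\diag(\epsilon,-\epsilon)$ with $\epsilon=\pm1$ undetermined.

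There is a second problem with the local bookkeeping. Read literally on the multivalued $\check\Psi$ with $-\lambda=e^{i\pi}\lambda$, applying the relation twice gives $\check\Psi(e^{2i\pi}\lambda)=\check\Psi(\lambda)$, i.e.\ trivial monodromy at $0$. So the relation cannot be imposed verbatim on $G(\lambda)\lambda^{\diag(-t_{0,0},t_{0,0})}C$ for generic $t_{0,0}$, which is why the paper only uses the Lax-matrix form.

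Only $\omega^2=t_{0,0}^2/S_{0,1}^2$ follows from the symmetry. This is exactly what the paper proves before saying the sign is taken by convenience. Your fallback of fixing the sign by matching the FN residue is that convention, and you should present it as a choice, not a consequence.
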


\begin{proof} The impact of the symmetry on the Lax pair is deduced from the Lax system directly. The first part of the proof follow from easy computations and the fact that the highest part of the asymptotic expansion of $\det \check{L}(\lambda)$ at $\lambda=\infty$ is given by the irregular times:
\beq \det \check{L}(\lambda)\overset{\lambda\to \infty}{=} -\underset{j= 2 d-1}{\overset{4d}{\sum}}\left(\underset{m=0}{\overset{4d-j}{\sum}} t_{\infty,2 d +1-m}t_{\infty,j+m-2d+1}\right) \lambda^{j} +O(\lambda^{2d-2})\eeq
The parity of $\check{L}_{1,1}(\lambda)$ and the fact that $t_{\infty,2d}=0$ implies from \eqref{CheckL11RS} that $R_{0,1}=0$ and $R_{\infty,2k-1}=0$. Finally, the value of $\omega$ follows from \eqref{SpecialValue} giving that $\check{L}_{2,1}(\lambda)\overset{\lambda\to 0}{=}\frac{(t_{0,0})^2}{\omega S_{0,1}\lambda}+ O(1)$ combined with $\check{L}_{1,2}(-\lambda)\overset{\lambda\to 0}{=}-\frac{\omega S_{0,1}}{\lambda} +O(1)$ so that $\omega^2=\frac{(t_{0,0})^2}{(S_{0,1})^2}$. The sign of $\omega$ is taken by convenience.
\end{proof}

The previous proposition implies that the symmetry reduces the space of isomonodromic deformations by half (because half of the irregular times are set to $0$). The fact that the normalizing factor $\omega$ depends on the Darboux coordinates does not affect the symplectic structure which is independent of the choice of $\omega$ (only the Lax matrices depends on it but the Hamiltonians do not). Moreover, half of the coordinates $\mathbf{R}$ are also set to the trivial value $0$ by the symmetry. Unfortunately, this is not the case for the dual Darboux coordinates $\mathbf{S}$. In fact, one can keep half of these coordinates but the second half depends explicitly on them (and also on $\mathbf{R}$ and the times). Consequently the Hamiltonian structure is lost for the three sets of Darboux coordinates introduced in \autoref{sec3} when applying the symmetry. In particular, the naive idea to replace the dependent coordinates by their expressions in the Hamiltonians cannot work.

\medskip

However, one can still make use of the knowledge of the Hamiltonian evolutions from the previous sections: one can compute the evolutions of the Darboux coordinates $(\mathbf{R},\mathbf{S})$ and then replace half of the coordinates $\mathbf{S}$ by their relations arising from the symmetry. This gives the correct reduced evolutions but it is not necessarily possible to integrate them into a reduced Hamiltonian formulation. In other words, because the Darboux coordinates introduced for the PIV hierarchy do not behave well under the symmetry, the symplectic two-form $\Omega$ does not reduce well and the Hamiltonian formulation is not automatically preserved in the chosen surviving coordinates. Nevertheless, the Darboux coordinates $(\td{Q}_i,\td{P}_i)_{1\leq i\leq d}$ introduced in \cite{mazzocco2007hamiltonian} have been shown to be canonical for the Poisson structure. Hence, a Hamiltonian formulation does exist for these Darboux coordinates and it is interesting to establish an explicit relation between our Darboux coordinates and $(\td{Q}_i,\td{P}_i)_{1\leq i\leq d}$ to test the compatibility of the two Hamiltonian evolutions and obtain Hamiltonians after symmetry. 

\subsection{Identification with the canonical coordinates $(\td{\mathbf{Q}},\td{\mathbf{P}})$}
Identifying the Lax matrix from \cite{mazzocco2007hamiltonian} with the Lax matrix developed in \autoref{sec3} gives the following result.

\begin{proposition}[Relation between both formalisms after symmetry]\label{PropIdentification}
After symmetry (\autoref{DefSymmetry}) on the PIV hierarchy, we can match the reduced Lax matrix with those of \cite{mazzocco2007hamiltonian} with the following correspondence:
\begin{align}
d=&n, \qquad  1-t_{0,0}=\alpha_n, \qquad  t_{\infty,1}=z, \qquad t_{\infty,2k+1}=-4^{k}t_k\,\,\,,\,\, \forall\, k\in \llbracket 1,d-1\rrbracket, \nonumber \\ \omega Q_{\infty,2k}+R_{\infty,2d-2k}=&4^d \td{P}_k\,\,, \qquad  \forall\, k\in \llbracket 0,d-1\rrbracket  \nonumber  \\
S_{0,1}=Q_{0,1}=&\frac{\alpha_d}{\td{Q}_d} \qquad \Leftrightarrow \qquad -\omega=\td{Q}_d=\frac{t_{0,0}}{Q_{0,1}} \nonumber \\
\omega Q_{\infty,2i-1}=&\sum_{j=1}^{d-i} \td{P}_j \td{Q}_{i+j}-\td{Q}_{i} \,\,, \qquad \forall\, i\in \llbracket 1,d-1\rrbracket    \\
R_{\infty,2d-2-2k}=& a_{2k+1}^{(n)} \,\,,\,\, \forall \, k\in \llbracket0 ,d-1\rrbracket \nonumber \\
=&-\Res_{\lambda\to \infty} \lambda^{-k-1}\left( \frac{1}{4} (2\lambda)^{2n+1} \left( 1 + \td{\mathcal{P}} + \frac{\left( 1 + \td{\mathcal{T}} \right)^2}{1+\td{\mathcal{P}}} \right) - (2 \lambda)^{-2n-1} \mathcal{\td{Q}}^2(1+ \td{\mathcal{P}}) \right) \nonumber
\end{align}
\end{proposition}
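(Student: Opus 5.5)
The approach is to match the two Lax matrices entry by entry, after bringing them to a common normalization. By \autoref{PropositionSymmetryLaxMatrices}, the symmetric PIV Lax matrix $\check{L}(\lambda)$ has even diagonal entries, $\check{L}_{2,1}(\lambda)=-\check{L}_{1,2}(-\lambda)$, and $t_{\infty,2k}=0$, $R_{0,1}=0$, $R_{\infty,2k-1}=0$. The FN matrix $\mathcal{L}^{(n)}(\lambda)$ has exactly the same shape: an even diagonal, and off-diagonal entries $\pm$ an odd part plus an even part. The two cannot be equal directly, since $\mathcal{L}^{(n)}$ is normalized with $\mathcal{L}_{1,2}$ and $\mathcal{L}_{2,1}$ sharing the same odd part. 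So I first allow a constant gauge $G$ commuting with $\sigma_1$, for instance the diagonalizing matrix of $\sigma_1$ combined with a rescaling, together with a possible shift by a multiple of $\lambda^{-1}\mathrm{Id}$. The latter accounts for $1-t_{0,0}=\alpha_n$, because passing between $\hat\Psi$ and $\Psi$ normalizations at $\lambda=0$ shifts the formal monodromy by $1/2$ on each side.

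The first step is to compare leading data at infinity. The diagonal leading term $4^d\lambda^{2d}$ and the formal normal form fix $d=n$. Comparing the diagonal Turritin exponents $\sum_k t_{\infty,k}\lambda^{k-1}$ with the asymptotics of $\mathcal{L}^{(n)}$ (trivial spectral invariants in \autoref{LemmaSpecInv}, i.e.\ $(2\lambda)^{2n}(1+\td{\mathcal{T}})$) gives $t_{\infty,1}=z$ and $t_{\infty,2k+1}=-4^kt_k$, up to the sign conventions in $\td{\mathcal{T}}$. Next, I compare the residues at $\lambda=0$. The residue of $\check{L}_{1,2}$ is $\omega S_{0,1}$, and the $\lambda^{-1}$ coefficient of $\check{L}_{2,1}$ is $(t_{0,0})^2/(\omega S_{0,1})$ by \eqref{SpecialValue} with $R_{0,1}=0$. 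On the FN side the corresponding residues are $b_0^{(n)}=-\alpha_n$, together with the constant term of $\td{\mathcal{B}}_{even}$, which is $-\td{Q}_n$ from \eqref{CoeffABB} since $\td{\mathcal{Q}}$ has top coefficient $\td{Q}_n\lambda^{2n}$. Matching these gives $\omega=-\td{Q}_d$ and $S_{0,1}=\alpha_d/\td{Q}_d$, consistent with $\omega=-t_{0,0}/S_{0,1}$ from \autoref{PropositionSymmetryLaxMatrices}.

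The second step treats the polynomial parts. The diagonal entry $\check{L}_{1,1}=-4^d\lambda^{2d}+\sum_k R_{\infty,2d-2-2k}\lambda^{2k}$, because the odd $R$'s vanish and $t_{\infty,2d}=0$. It must equal the diagonal $\sum_k a^{(n)}_{2k+1}\lambda^{2k}$ of $\mathcal{L}^{(n)}$ up to the gauge and overall sign, and this gives $R_{\infty,2d-2-2k}=a^{(n)}_{2k+1}$. The residue formula then follows from \eqref{CoeffABB} by extracting the coefficient of $\lambda^{2k+1}$ in $\td{\mathcal{A}}$, indexing with the same $\lambda^{2k}$ shift. For the off-diagonal entries, the gauge makes the combination $\check{L}_{1,2}\pm\check{L}_{2,1}$ correspond to $\td{\mathcal{B}}_{odd}$ and $\td{\mathcal{B}}_{even}$. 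The symmetry gives $\check{L}_{1,2}(\lambda)+\check{L}_{2,1}(\lambda)=\check{L}_{1,2}(\lambda)-\check{L}_{1,2}(-\lambda)$, which is twice the odd part of $\check{L}_{1,2}$, namely $\omega(S_{0,1}\lambda^{-1}+\sum Q_{\infty,2i-1}\lambda^{2i-1}+\lambda^{2d-1})$. The even part of $\check{L}_{1,2}$, $\omega\sum Q_{\infty,2k}\lambda^{2k}$, is then matched against $\td{\mathcal{B}}_{odd}$ together with part of the diagonal. This is how the combination $\omega Q_{\infty,2k}+R_{\infty,2d-2k}=4^d\td{P}_k$ arises, since $(a+b)_{2(n-k)+1}/a_{2n+1}=\td{P}_k$ by \autoref{DefDarbouxMarta}. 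For the odd part, $\td{\mathcal{B}}_{even}=-\lambda^2(\lambda^{-2}\td{\mathcal{Q}}(1+\td{\mathcal{P}}))_+$ has $\lambda^{2i}$ coefficient $-(\td{Q}_i+\sum_{j}\td{P}_j\td{Q}_{i+j})$. After the division by $\lambda$ in $\mathcal{L}^{(n)}$, this yields the stated formula for $\omega Q_{\infty,2i-1}$, up to the sign bookkeeping that I will fix against the $d=1$ case.

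The main obstacle is choosing the constant gauge and the scalar shift consistently. The leading diagonal normalization, the sign of $\omega$, and the $\lambda^{-1}$ shift relating $t_{0,0}$ to $\alpha_n$ must all be fixed simultaneously so that every coefficient identity holds with the stated signs. I would first determine them in the case $d=1$ by direct comparison with \eqref{Example1}, and then check that the resulting choice is independent of $d$. After that, the remaining identities reduce to coefficient extraction from \eqref{CoeffABB}. The final consistency check is that $\check{L}_{2,1}$ computed via \eqref{ExpressionL21} agrees automatically, which holds because both matrices have the same determinant once the diagonal and $\check{L}_{1,2}$ are matched.
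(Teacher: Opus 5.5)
Your core computation is the paper's proof. You read off $R_{\infty,2d-2-2k}=a^{(n)}_{2k+1}$ from the diagonal, match the even and odd parts of $\check{L}_{1,2}$ with $\td{\mathcal{B}}_{odd}$ and $\td{\mathcal{B}}_{even}$ via \eqref{CoeffABB}, and get the times from the leading data at infinity. However, the gauge layer you wrap around it rests on a false premise, and as described it would break the matching.

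\textbf{The gauge is unnecessary, and yours would fail.}
By \autoref{PropositionSymmetryLaxMatrices}, $\check{L}_{2,1}(\lambda)=-\check{L}_{1,2}(-\lambda)$. So $\check{L}_{1,2}$ and $\check{L}_{2,1}$ \emph{do} share their odd part and have opposite even parts, exactly like $\mathcal{L}^{(n)}_{2,1}(\lambda)=-\mathcal{L}^{(n)}_{1,2}(-\lambda)$. The two matrices have identical shape, and the paper simply compares coefficients entry by entry. In particular $\check{L}_{2,1}$ matches automatically, with no determinant argument.

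Your proposed gauge also fails on its own terms:
\begin{itemize}
\item It is self-contradictory: the matrix diagonalizing $\sigma_1$ does not commute with $\sigma_1$.
\item Conjugating by it sends the diagonal leading term $\lambda^{2d}\sigma_3$ to $\lambda^{2d}\sigma_1$ and the symmetry to $L(-\lambda)=-\sigma_3 L(\lambda)\sigma_3$. That destroys the entrywise comparison you then carry out.
\item No rescaling freedom is left, since the symmetry already fixes $\omega=-t_{0,0}/S_{0,1}$.
\end{itemize}
Only discrete conjugations (by $\sigma_1$ or $\sigma_3$) respect both structures. The leading-sign mismatch between $-t_{\infty,2d+1}\lambda^{2d}$ and $4^n\lambda^{2n}$ is absorbed by taking $t_{\infty,2d+1}=-4^d$. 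This is the $k=d$ case of $t_{\infty,2k+1}=-4^kt_k$, as used in \autoref{sec5}.

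Also, $\omega=-\td{Q}_d$ does not come from a residue. $-\td{Q}_n=b^{(n)}_{2n}$ is the \emph{top} coefficient of $\td{\mathcal{B}}_{even}$, which has no constant term. It must be matched with the $\lambda^{2d-1}$ coefficient $\omega$ of $\check{L}_{1,2}$. The residue then gives $\omega S_{0,1}=b_0^{(n)}=-\alpha_n$.

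\textbf{The scalar shift cannot produce $1-t_{0,0}=\alpha_n$.} This is the step that genuinely fails.
\begin{itemize}
\item Both matrices are traceless, so a shift by $c\lambda^{-1}\mathrm{Id}$ leaves $\mathfrak{sl}_2$.
\item The shift only touches the diagonal, whereas $\alpha_n$ sits in the off-diagonal residue.
\item The residues at $0$ are $-t_{0,0}\sigma_1$ for the symmetric $\check{L}$ and $-\alpha_n\sigma_1$ for $\mathcal{L}^{(n)}$. Their eigenvalues are invariant under constant gauge, so a Lax-level identification can only give $\alpha_n=\pm t_{0,0}$.
\end{itemize}
The paper reads the relation off the $\lambda^{-1}$ coefficient of $\check{L}_{1,2}$, and literally this gives $-t_{0,0}=-\alpha_n$. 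The third line of the proposition forces the same conclusion: $S_{0,1}=\alpha_d/\td{Q}_d$ together with $\td{Q}_d=t_{0,0}/S_{0,1}$ gives $\alpha_d=t_{0,0}$. The $d=1$ example also finds $\alpha=t_{0,0}$. So do not try to manufacture the affine form through a gauge.

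\textbf{Calibrating signs on $d=1$ does not settle the $Q_{\infty,2i-1}$ formula.}
The range $i\in\llbracket 1,d-1\rrbracket$ is empty when $d=1$. Moreover, your own extraction $b^{(n)}_{2i}=-\big(\td{Q}_i+\sum_j\td{P}_j\td{Q}_{i+j}\big)$ puts $\td{Q}_i$ and $\sum_j\td{P}_j\td{Q}_{i+j}$ with the same sign. The $d=2$ example confirms this: its coefficient is $\td{Q}_2\td{P}_1+\td{Q}_1$. No global sign convention can flip a relative sign.
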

\begin{proof}The highest part of the asymptotic expansion of the determinant at infinity gives the relation between the irregular times. The relation between $t_{0,0}$ and $\alpha_n$ comes from the coefficient in $\lambda^{-1}$ of $\check{L}_{1,2}(\lambda)$. The expression of $\td{P}_k$ follows from the fact that the coefficient of order $\lambda^{2k}$ of $\check{L}_{1,1}(\lambda)+\check{L}_{1,2}(\lambda)=\frac{1}{2} (2\lambda)^{2n+1} ( 1 + \td{\mathcal{P}})$ is given by $4^n \td{P}_k$ from \eqref{CoeffABB}. The relation giving $Q_{\infty,2i-1}$ comes from the fact that $\td{\mathcal{B}}_{\text{even}}(\lambda)=-\lambda^2[\lambda^{-2}\td{\mathcal{Q}}(1+\td{\mathcal{P}})]_{+}=[\lambda \check{L}_{1,2}(\lambda)]_{\text{even}}+ O(1)$. Expression of $R_{\infty,2d-2-2k}$ correspond to coefficient $\lambda^k$ in the entry $\check{L}_{1,1}(\lambda)$.  
\end{proof}

\autoref{PropIdentification} gives a non-trivial correspondence between the coordinates $(R_{\infty, 1},\dots, R_{\infty,2d-3}, R_{0,1},\mathbf{S}_\infty,S_{0,1})$ and the canonical coordinates $(\td{Q}_1,\dots, \td{Q}_d,\td{P}_1,\dots,\td{P}_d)$. More precisely, we have the expression of $Q_{\infty,2i-1}$ and $Q_{0,1}$ in terms of $(\td{Q}_1,\dots, \td{Q}_d,\td{P}_1,\dots,\td{P}_d)$. Then, the last identity gives a very complicated relation of $R_{\infty, 2j}$ with $(\td{Q}_1,\dots, \td{Q}_d,\td{P}_1,\dots,\td{P}_d)$ that should be inserted into $\omega Q_{\infty,2k}+R_{\infty,2d-2k}=4^d \td{P}_k$ to get the expression for $Q_{\infty,2k}$ in terms of $(\td{Q}_1,\dots, \td{Q}_d,\td{P}_1,\dots,\td{P}_d)$. The situation is better from the reverse point of view: $\td{P}_k$ is expressed in terms of $(\mathbf{R},\mathbf{S})$ from $R_{\infty,2d-2k}-t_{0,0}\frac{Q_{\infty,2k}(\mathbf{S})}{S_{0,1}}$. The relation between $\td{Q}_d$ and $Q_{0,1}$ is also straightforward. Then, for $i\in \llbracket 1,d-1\rrbracket$, $\td{Q}_i$ can be obtained recursively from the relation with $\omega Q_{\infty,2i-1}$ by inverting the inductive system. From this perspective, one can compute the evolutions of $(\td{Q}_1,\dots, \td{Q}_d,\td{P}_1,\dots,\td{P}_d)$ using the evolutions of $(\mathbf{R},\mathbf{S})$ and then integrate them to put them into a standard Hamiltonian form. This route is long and computational and it is likely that a better approach choosing Darboux coordinates adapted to the symmetry should be more efficient. However, we can use it to check that the picture holds for the first elements of the hierarchy.

\section{Examples} \label{sec5}
In this section, we present for $d=1$ and $d=2$ the Hamiltonians obtained from the symmetry reduction of the even Painlev\'{e} IV hierarchy and match them with the results of \cite{mazzocco2007hamiltonian}.
\subsection{Case $d=1$}
For $d=1$, the Lax matrices for the even PIV hierarchy are given by:
\footnotesize{\bea \check{L}_{1,1}(\lambda)&=&-t_{\infty,3}\lambda^2 -t_{\infty,2}\lambda +t_{\infty,3}((Q_{\infty,0})^2- Q_{0,1})- t_{\infty,2}Q_{\infty,0}-P_{\infty,0}+\frac{Q_{0,1}(t_{\infty,3}Q_{\infty,0}+P_{0,1}-t_{\infty,2})}{\lambda}\cr
\check{L}_{1,2}(\lambda)&=&\omega\left(\lambda +Q_{\infty,0}+\frac{Q_{0,1}}{\lambda}\right)\cr
\check{L}_{2,1}(\lambda)&=& \frac{1}{\omega}\Big[ 2t_{\infty,3}\left(-P_{\infty,0}+t_{\infty,3}(Q_{\infty,0})^2-t_{\infty,3}Q_{0,1}-t_{\infty,2}Q_{\infty,0}+t_{\infty,1}\right)\lambda+2t_{\infty,3}Q_{0,1}P_{0,1}\cr&&+ 2(t_{\infty,3}Q_{\infty,0}-t_{\infty,2})P_{\infty,0}
+4((t_{\infty,3})^2Q_{\infty,0}-t_{\infty,2}t_{\infty,3})Q_{0,1}-2(t_{\infty,3})^2(Q_{\infty,0})^3+4t_{\infty,3}t_{\infty,2}(Q_{\infty,0})^2\cr&&-2((t_{\infty,2})^2+t_{\infty,3}t_{\infty,1})Q_{\infty,0}
+2t_{\infty,0}t_{\infty,3}+2t_{\infty,1}t_{\infty,2}+\frac{(t_{0,0})^2-(Q_{0,1})^2(t_{\infty,3}Q_{\infty,0}+P_{0,1}-t_{\infty,2})^2 }{ Q_{0,1} \lambda}
\Big]\cr
\check{L}_{2,2}(\lambda)&=&- \check{L}_{1,1}(\lambda)
\eea}
\normalsize{and}
\bea [\check{A}_{\boldsymbol{\alpha}}(\lambda)]_{1,2}&=& \omega\left(\nu_{\infty,0}^{(\boldsymbol{\alpha})}\lambda +Q_{\infty,0}\nu_{\infty,0}^{(\boldsymbol{\alpha})}+\nu_{\infty,1}^{(\boldsymbol{\alpha})}\right)\cr
[\check{A}_{\boldsymbol{\alpha}}(\lambda)]_{1,1}&=& -t_{\infty,3}\nu_{\infty,0}^{(\boldsymbol{\alpha})}\lambda^2-(t_{\infty,2}\nu_{\infty,0}^{(\boldsymbol{\alpha})}+t_{\infty,3}\nu_{\infty,1}^{(\boldsymbol{\alpha})})\lambda+\frac{1}{2\omega}\mathcal{L}_{\boldsymbol{\alpha}}[\omega]+ (t_{\infty,3}Q_{\infty,0}-t_{\infty,2})\nu_{\infty,1}^{(\boldsymbol{\alpha})} \cr&&
-(P_{\infty,0}+t_{\infty,2}Q_{\infty,0}-t_{\infty,3}(Q_{\infty,0})^2)\nu_{\infty,0}^{(\boldsymbol{\alpha})}\cr
[\check{A}_{\boldsymbol{\alpha}}(\lambda)]_{2,1}&=&\frac{1}{\omega}\Big[2t_{\infty,3}\nu_{\infty,0}^{(\boldsymbol{\alpha})}(-P_{\infty,0}+t_{\infty,3}(Q_{\infty,0})^2-t_{\infty,2}Q_{\infty,0}-t_{\infty,3}Q_{0,1}+t_{\infty,1})\lambda +\alpha_{\infty,2} \cr&&-t_{\infty,3}\mathcal{L}_{\boldsymbol{\alpha}}[Q_{\infty,0}]
-2t_{\infty,3}( t_{\infty,3}(Q_{0,1}-(Q_{\infty,0})^2) +t_{\infty,2}Q_{\infty,0} -t_{\infty,1})\nu_{\infty,1}^{(\boldsymbol{\alpha})}\cr&&
+\nu_{\infty,0}^{(\boldsymbol{\alpha})}\big(4t_{\infty,3}(t_{\infty,3}Q_{\infty,0}-t_{\infty,2})Q_{0,1} -2(t_{\infty,3})^2(Q_{\infty,0})^3+4t_{\infty,2}t_{\infty,3}(Q_{\infty,0})^2\cr&&
+2(t_{\infty,3}Q_{\infty,0}-t_{\infty,2})P_{\infty,0}
-2((t_{\infty,2})^2+t_{\infty,3}t_{\infty,1})Q_{\infty,0} +2t_{\infty,3}t_{\infty,0}+2t_{\infty,2}t_{\infty,1}- t_{\infty,3}\big)
\Big]\cr
[\check{A}_{\boldsymbol{\alpha}}(\lambda)]_{2,2}&=&-[\check{A}_{\boldsymbol{\alpha}}(\lambda)]_{1,1}
\eea
where
\beq \nu_{\infty,0}^{(\boldsymbol{\alpha})}=\frac{1}{2t_{\infty,3}}\alpha_{\infty,2}\,\,\,,\,\, \nu_{\infty,1}^{(\boldsymbol{\alpha})}=\frac{1}{t_{\infty,3}}\alpha_{\infty,1}-\frac{t_{\infty,2}}{2t_{\infty,3}}\alpha_{\infty,2}
\eeq

The Hamiltonians for the Darboux coordinates $(Q_{\infty,0},Q_{0,1},P_{\infty,0},P_{0,1})$ relatively to the times $(t_{\infty,1},t_{\infty,2})$ are
\bea &&\text{Ham}_{t_{\infty,1}}(Q_{\infty,0},Q_{0,1},P_{\infty,0},P_{0,1};t_{\infty,1},t_{\infty,2})=\frac{(P_{\infty,0})^2-Q_{0,1}(P_{0,1})^2)}{t_{\infty,3}}+\frac{(t_{0,0})^2}{t_{\infty,3} Q_{0,1}}\cr&&-t_{\infty,3} (Q_{0,1})^2-t_{\infty,3}(Q_{\infty,0})^4 +2t_{\infty,2}(Q_{\infty,0})^3
+3t_{\infty,3} Q_{0,1}(Q_{\infty,0})^2-4t_{\infty,2}Q_{0,1}Q_{\infty,0}
\cr&&+\frac{2t_{\infty,3}t_{\infty,1}+(t_{\infty,2})^2}{t_{\infty,3}}Q_{0,1}-Q_{\infty,0}
-\frac{2t_{\infty,3}t_{\infty,1}+(t_{\infty,2})^2}{t_{\infty,3}}(Q_{\infty,0})^2+\frac{2(t_{\infty,3}t_{\infty,0}+t_{\infty,1}t_{\infty,2})}{t_{\infty,3}}Q_{\infty,0}\cr&&
\eea
and
\footnotesize{\bea &&\text{Ham}_{t_{\infty,2}}(Q_{\infty,0},Q_{0,1},P_{\infty,0},P_{0,1};t_{\infty,1},t_{\infty,2})=
\frac{\bigl(t_{\infty,2}-t_{\infty,3}Q_{\infty,0}\bigr)Q_{0,1}(P_{0,1})^{2}}{2(t_{\infty,3})^{2}}-\frac{Q_{0,1}P_{0,1}P_{\infty,0}}{t_{\infty,3}}
+\frac{P_{\infty,0}}{2t_{\infty,3}}\cr&&
-\frac{t_{\infty,2}(P_{\infty,0})^{2}}{2(t_{\infty,3})^{2}}+\frac{(t_{0,0})^{2}\bigl(t_{\infty,3}Q_{\infty,0}-t_{\infty,2}\bigr)}{2(t_{\infty,3})^{2}Q_{0,1}}
+\bigl(Q_{\infty,0}t_{\infty,3}-\tfrac12 t_{\infty,2}\bigr)(Q_{0,1})^{2}
+\frac12 t_{\infty,2}(Q_{\infty,0})^{4}\cr&&-\frac{\bigl((t_{\infty,3})^2Q_{0,1}+2(t_{\infty,2})^{2}\bigr)(Q_{\infty,0})^{3}}{2t_{\infty,3}}
+\frac{Q_{0,1}\bigl(2t_{\infty,0}(t_{\infty,3})^{2}-(t_{\infty,2})^{3}-(t_{\infty,3})^{2}\bigr)}{2(t_{\infty,3})^{2}}\cr&&
-\frac{t_{\infty,2}\bigl((t_{\infty,3})^{2}Q_{0,1}-2t_{\infty,1}t_{\infty,3}
-(t_{\infty,2})^{2}\bigr)(Q_{\infty,0})^{2}}{2(t_{\infty,3})^{2}}+\frac{3Q_{0,1}Q_{\infty,0}(t_{\infty,2})^{2}}{2t_{\infty,3}}
-t_{\infty,1}Q_{0,1}Q_{\infty,0}\cr&&+\frac{\bigl((-2t_{\infty,0}+1)t_{\infty,2}t_{\infty,3}-2t_{\infty,1}(t_{\infty,2})^{2}\bigr)Q_{\infty,0}}{2(t_{\infty,3})^{2}}
\eea}
\normalsize{We} can also rewrite the Lax matrices and obtain the corresponding Hamiltonians in terms of the Darboux coordinates $(S_{\infty,0},S_{0,1}, R_{\infty,0},R_{0,1})$. In this case we have:
\beq Q_{0,1}=S_{0,1}\,\,,\,Q_{\infty,0}=-S_{\infty,0}\,,\, P_{0,1}=t_{\infty,3}S_{\infty,0}+R_{0,1}\,,\, P_{\infty,0}=t_{\infty,3}(S_{\infty,0})^{2}-t_{\infty,3}S_{0,1}-R_{\infty,0}
\eeq
We get:
\bea \check{L}_{1,1}(\lambda)&=&-t_{\infty,3}\lambda^2 - t_{\infty,2}\lambda + S_{\infty,0}t_{\infty,2} + R_{\infty,0} + \frac{S_{0,1}(R_{0,1}-t_{\infty,2})}{\lambda}\cr
\check{L}_{1,2}(\lambda)&=&\omega\left(\lambda -S_{\infty,0}+\frac{S_{0,1}}{\lambda}\right)\cr
\check{L}_{2,1}(\lambda)&=&\frac{1}{\omega}\Big[2t_{\infty,3}\left(t_{\infty,2}S_{\infty,0}+R_{\infty,0}+t_{\infty,1}\right)\lambda+\frac{(t_{0,0})^{2}-(S_{0,1})^{2}\left(R_{0,1}-t_{\infty,2}\right)^{2}}
{S_{0,1}\lambda}\cr&&
+2t_{\infty,3}R_{0,1}S_{0,1}
+2\left(t_{\infty,3}S_{\infty,0}+t_{\infty,2}\right)R_{\infty,0}
-2t_{\infty,2}t_{\infty,3}S_{0,1}
+2t_{\infty,2}t_{\infty,3}(S_{\infty,0})^{2}\cr&&
+2\left(t_{\infty,1}t_{\infty,3}+(t_{\infty,2})^{2}\right)S_{\infty,0}
+2\left(t_{\infty,0}t_{\infty,3}+t_{\infty,1}t_{\infty,2}\right)
\Big]
\eea
and
\bea[\check{A}_{\boldsymbol{\alpha}}(\lambda)]_{1,2}&=&\omega\left(\nu_{\infty,0}^{(\boldsymbol{\alpha})}\lambda -S_{\infty,0}\nu_{\infty,0}^{(\boldsymbol{\alpha})}+\nu_{\infty,1}^{(\boldsymbol{\alpha})}\right)\cr
[\check{A}_{\boldsymbol{\alpha}}(\lambda)]_{1,1}&=&-t_{\infty,3}\nu_{\infty,0}^{(\boldsymbol{\alpha})}\lambda^2-(t_{\infty,2}\nu_{\infty,0}^{(\boldsymbol{\alpha})}+t_{\infty,3}\nu_{\infty,1}^{(\boldsymbol{\alpha})})\lambda+\frac{1}{2\omega}\mathcal{L}_{\boldsymbol{\alpha}}[\omega]- (t_{\infty,3}S_{\infty,0}+t_{\infty,2})\nu_{\infty,1}^{(\boldsymbol{\alpha})} \cr&&
+(S_{0,1}t_{\infty,3}+S_{\infty,0}t_{\infty,2}+R_{\infty,0})\nu_{\infty,0}^{(\boldsymbol{\alpha})}\cr
[\check{A}_{\boldsymbol{\alpha}}(\lambda)]_{2,1}&=&\frac{1}{\omega}\Big[
\left(2S_{\infty,0}t_{\infty,2}+2R_{\infty,0}+2t_{\infty,1}\right)\nu^{(\boldsymbol{\alpha})}_0\,t_{\infty,3}\lambda +\alpha_{\infty,2}+2t_{\infty,3}\left(S_{\infty,0}t_{\infty,2}+R_{\infty,0}+t_{\infty,1}\right)\nu_{\infty,1}^{(\boldsymbol{\alpha})}\cr&&
+\nu_{\infty,0}^{(\boldsymbol{\alpha})}\big( 2t_{\infty,3}R_{0,1}S_{0,1}
+2\left(t_{\infty,3}S_{\infty,0}+t_{\infty,2}\right)R_{\infty,0}
-2t_{\infty,2}t_{\infty,3}S_{0,1}
+2t_{\infty,2}t_{\infty,3}(S_{\infty,0})^{2}\cr&&
+2\left(t_{\infty,1}t_{\infty,3}+(t_{\infty,2})^{2}\right)S_{\infty,0}
+2\left((t_{\infty,0}-1)t_{\infty,3}+t_{\infty,1}t_{\infty,2}\right)\big)
\Big]
\eea
The Hamiltonians are
\bea 
&&\text{Ham}_{t_{\infty,1}}(S_{\infty,0},S_{0,1},R_{\infty,0},R_{0,1};t_{\infty,1},t_{\infty,2})=\frac{(t_{0,0})^{2}}{S_{0,1}t_{\infty,3}}
-\frac{S_{0,1}(R_{0,1})^{2}}{t_{\infty,3}}
-2S_{0,1}R_{0,1}S_{\infty,0}\cr&&
+2\left(S_{0,1}-(S_{\infty,0})^{2}\right)R_{\infty,0}
+\frac{(R_{\infty,0})^{2}}{t_{\infty,3}}
+4t_{\infty,2}S_{\infty,0}S_{0,1}
+\frac{\left(2t_{\infty,1}t_{\infty,3}+(t_{\infty,2})^{2}\right)S_{0,1}}{t_{\infty,3}}
-2t_{\infty,2}(S_{\infty,0})^{3}\cr&&
-\frac{\left(2t_{\infty,1}t_{\infty,3}+(t_{\infty,2})^{2}\right)(S_{\infty,0})^{2}}{t_{\infty,3}}
-\frac{\left(2t_{\infty,0}t_{\infty,3}+2t_{\infty,1}t_{\infty,2}-t_{\infty,3}\right)S_{\infty,0}}{t_{\infty,3}}
\eea
and
\small{\bea &&\text{Ham}_{t_{\infty,2}}(S_{\infty,0},S_{0,1},R_{\infty,0},R_{0,1};t_{\infty,1},t_{\infty,2})=-\frac{(t_{0,0})^{2}\left(t_{\infty,3}S_{\infty,0}+t_{\infty,2}\right)}{2(t_{\infty,3})^{2}S_{0,1}}\cr&&
+\frac{\left(t_{\infty,3}S_{\infty,0}+t_{\infty,2}\right)S_{0,1}(R_{0,1})^{2}}{2(t_{\infty,3})^{2}}
+\frac{S_{0,1}\left(t_{\infty,3}S_{0,1}+t_{\infty,2}S_{\infty,0}+R_{\infty,0}\right)R_{0,1}}{t_{\infty,3}}
-\frac{t_{\infty,2}(R_{\infty,0})^{2}}{2(t_{\infty,3})^{2}}\cr&&
+\frac{\left(2t_{\infty,2}((S_{\infty,0})^{2}-S_{0,1})
+2t_{\infty,3}S_{0,1}S_{\infty,0}-1\right)R_{\infty,0}}{2t_{\infty,3}}
-t_{\infty,2}(S_{0,1})^{2}
+\frac{(t_{\infty,2})^{2}(S_{\infty,0})^{3}}{t_{\infty,3}}\cr&&
+\frac{\left(2(t_{\infty,2}(S_{\infty,0})^{2}
+t_{\infty,1}S_{\infty,0}+t_{\infty,0}-1)(t_{\infty,3})^{2}-3(t_{\infty,2})^{2}t_{\infty,3}S_{\infty,0}-(t_{\infty,2})^{3}\right)S_{0,1}}{2(t_{\infty,3})^{2}}\cr&&
+\frac{\left(t_{\infty,2}(2t_{\infty,0}-1)t_{\infty,3}+2t_{\infty,1}(t_{\infty,2})^{2}\right)S_{\infty,0}}{2(t_{\infty,3})^{2}}
+\frac{\left(2t_{\infty,1}t_{\infty,2}t_{\infty,3}+(t_{\infty,2})^{3}+(t_{\infty,3})^{2}\right)(S_{\infty,0})^{2}}{2(t_{\infty,3})^{2}}\cr&&
\eea}

\normalsize{We} can then impose the symmetry which implies $R_{\infty,1}=0$, $R_{0,1}=0$, $t_{\infty,0}=t_{\infty,2}=0$ and $\omega=-\frac{t_{0,0}}{Q_{0,1}}$. We choose to keep $(Q_{0,1},P_{0,1})$ as the reduced Darboux coordinates and thus we have:
\beq Q_{\infty,0}=-\frac{P_{0,1}}{t_{\infty,3}}\,\,,\,\, P_{\infty,0}=-t_{\infty,3}Q_{0,1}+\frac{(P_{0,1})^{2}}{t_{\infty,3}}+t_{\infty,1}
-\frac{(t_{0,0})^{2}}{2t_{\infty,3}(Q_{0,1})^{2}}\eeq
The corresponding reduced Lax matrices are:
\small{\beq\check{L}(\lambda)=\begin{pmatrix}-t_{\infty,3}\lambda^{2}+\frac{(t_{0,0})^{2}}{2t_{\infty,3}(Q_{0,1})^{2}}-t_{\infty,1}& -\frac{t_{0,0}\lambda}{Q_{0,1}}
+\frac{P_{0,1}t_{0,0}}{t_{\infty,3}Q_{0,1}}
-\frac{t_{0,0}}{\lambda}\\
-\frac{t_{0,0}\lambda}{Q_{0,1}}
-\frac{P_{0,1}t_{0,0}}{t_{\infty,3}Q_{0,1}}
-\frac{t_{0,0}}{\lambda}& t_{\infty,3}\lambda^{2}-\frac{(t_{0,0})^{2}}{2t_{\infty,3}(Q_{0,1})^{2}}+t_{\infty,1}
\end{pmatrix}\,\,,\,\,
\check{A}_{\boldsymbol{\alpha}}(\lambda)=\alpha_{\infty,1}\begin{pmatrix}\lambda& -\frac{t_{0,0}}{t_{\infty,3}Q_{0,1}} \\ -\frac{t_{0,0}}{t_{\infty,3}Q_{0,1}}& -\lambda\end{pmatrix}
\eeq}
\normalsize{The} evolutions of $Q_{0,1}$ and $P_{0,1}$ relatively to $t_{\infty,1}$ (which is the only deformation time remaining) are given by
\bea \label{EvolutionReduceddEqual1}\partial_{t_{\infty,1}} Q_{0,1}&=& -\frac{2Q_{0,1}P_{0,1}}{t_{\infty,3}}\cr
 \partial_{t_{\infty,1}} P_{0,1}&=&\frac{(t_{0,0})^{2}}{t_{\infty,3}(Q_{0,1})^{2}}-\frac{2(P_{0,1})^{2}}{t_{\infty,3}}-2t_{\infty,1}
+2t_{\infty,3}Q_{0,1}\cr
\partial_{t_{\infty,1}} Q_{\infty,0}&=&-2Q_{0,1}+\frac{2(P_{0,1})^{2}}{(t_{\infty,3})^{2}}+\frac{2t_{\infty,1}}{t_{\infty,3}}-\frac{(t_{0,0})^{2}}{(t_{\infty,3})^{2}(Q_{0,1})^{2}}
\eea
As expected, they cannot be written into an Hamiltonian form. 

Regarding coordinates $(\mathbf{R},\mathbf{S})$, we have $R_{0,1}=0$, $S_{0,1}=Q_{0,1}$, $R_{\infty,0}= \frac{(t_{0,0})^{2}}{2t_{\infty,3}(S_{0,1})^{2}}-t_{\infty,1}$,  $S_{\infty,0}=-Q_{\infty,0}=$ and $P_{0,1}=t_{\infty,3} S_{\infty,0}$. The evolutions are
\bea \partial_{t_{\infty,1}} S_{0,1}&=&-2S_{0,1}S_{\infty,0}\cr
\partial_{t_{\infty,1}} S_{\infty,0}&=&2S_{0,1}-2(S_{\infty,0})^2 -\frac{2t_{\infty,1}}{t_{\infty,3}} +\frac{(t_{0,0})^2}{(t_{\infty,3})^2(S_{0,1})^2}\cr
&=&2 \frac{\sqrt{2t_{\infty,3}(R_{\infty,0}+t_{\infty,1})}}{(t_{0,0})}-2(S_{\infty,0})^2  +\frac{2R_{\infty,0}}{t_{\infty,3}} \cr
\partial_{t_{\infty,1}} R_{\infty,0}&=&-1+ \frac{2(t_{0,0})^2S_{\infty,0}}{t_{\infty,3} (S_{0,1})^2}
\eea
\normalsize{These} evolutions cannot be put into an Hamiltonian form either by considering $(S_{\infty,0},R_{\infty,0})$, $(S_{0,1},R_{\infty,0})$ or $(S_{0,1},S_{\infty,0})$.
In order to recover the standard Flaschka-Newell Lax pair, we define:
\beq q_{\text{FN}}:=-\frac{t_{0,0}}{t_{\infty,3}Q_{0,1}}\,\,,\, p_{\text{FN}}:=-\frac{t_{0,0}P_{0,1}}{2t_{\infty,3}Q_{0,1}}
\eeq
In this case, the evolutions of $(q_{\text{FN}}, p_{\text{FN}})$ relatively to $t_{\infty,1}$ are Hamiltonian and we have:
\beq \text{Ham}_{t_{\infty,1}}(q_{\text{FN}},p_{\text{FN}};t_{\infty,1})=\frac{2(p_{\text{FN}})^2}{t_{\infty,3}} -\frac{1}{8}t_{\infty,3}(q_{\text{FN}})^4+\frac{1}{2}t_{\infty,1}(q_{\text{FN}})^2+t_{0,0}q_{\text{FN}}\eeq
so that $\partial_{t_{\infty,1}}q_{\text{FN}}=\frac{4}{t_{\infty,3}} p_{\text{FN}}$ and $q_{\text{FN}}$ satisfies:
\beq \frac{d^2 q_{\text{FN}}}{dt_{\infty,1} dt_{\infty,1}}  =\frac{4}{t_{\infty,3}}\left(\frac{1}{2}t_{\infty,3} (q_{\text{FN}})^3 -t_{\infty,1}q_{\text{FN}} -t_{0,0}\right) 
\eeq
which can be put into the standard FN PII equation by taking
\beq x=\left(-\frac{4}{t_{\infty,3}}\right)^{\frac{1}{3}} t_{\infty,1}\,\,,\, q=\left(-\frac{4}{t_{\infty,3}}\right)^{-\frac{1}{3}}q_{\text{FN}} \,\,,\,\, \alpha= t_{0,0} \,\,:\,\, \frac{d^2q }{dx^2}=2q^3+xq+\alpha\eeq

The correspondence with the Darboux coordinates $(\td{Q}_1,\td{P}_1)$ from \cite{mazzocco2007hamiltonian} is given by
\begin{align}
    Q_{0,1}=&\frac{t_{0,0}}{\td{Q}_1}, \qquad
\omega=-\td{Q}_1, \qquad t_{\infty,1}=z, \qquad
t_{\infty,3}=-4t_1 \nonumber \\
Q_{\infty,0}=&\frac{4\td{P}_1-R_{\infty,0}}{\omega}=-\frac{4\td{P}_1-\frac{(\td{Q}_1)^2}{2t_{\infty,3}}+t_{\infty,1}}{\td{Q}_1} \nonumber \\
P_{0,1}=&-t_{\infty,3}Q_{\infty,0}= 4t_{\infty,3}\frac{\td{P}_1}{\td{Q}_1} -\frac{1}{2}\td{Q}_1+\frac{t_{\infty,3}t_{\infty,1}}{\td{Q}_{1}}\nonumber \\
P_{\infty,0}=&-t_{\infty,3}Q_{0,1}+\frac{(P_{0,1})^{2}}{t_{\infty,3}}+t_{\infty,1}
-\frac{(t_{0,0})^{2}}{2t_{\infty,3}(Q_{0,1})^{2}}\nonumber \\
=&-\frac{t_{\infty,3}t_{0,0}}{\td{Q}_1} +\frac{(4t_{\infty,3}\td{P}_1 -(\td{Q}_1)^2+t_{\infty,3}t_{\infty,1})^{2}}{t_{\infty,3}}+t_{\infty,1}
-\frac{(\td{Q}_1)^2}{2t_{\infty,3}} \nonumber \\ 
R_{\infty,0}=& \frac{(t_{0,0})^{2}}{2t_{\infty,3}(S_{0,1})^{2}}-t_{\infty,1}=\frac{(\td{Q}_1)^2}{2t_{\infty,3}}-t_{\infty,1} \nonumber \\
S_{\infty,0}=&-Q_{\infty,0}=\frac{4\td{P}_1-\frac{(\td{Q}_1)^2}{2t_{\infty,3}}+t_{\infty,1}}{\td{Q}_1}, \qquad
S_{0,1}= \frac{t_{0,0}}{\td{Q}_1} 
\end{align}
In particular, we have:
\beq \td{P}_1=\frac{t_{0,0}P_{0,1}}{4t_{\infty,3}Q_{0,1}}+\frac{(t_{0,0})^2}{8t_{\infty,3}(Q_{0,1})^2} -\frac{1}{4}t_{\infty,1} 
\eeq 
We can compute the evolutions of $(\td{Q}_1,\td{P}_1)$ relatively to $t_{\infty,1}$ using \eqref{EvolutionReduceddEqual1} and we find that the evolutions are Hamiltonian with
\beq \text{Ham}_{t_{\infty,1}}(\td{Q}_1,\td{P}_1;t_{\infty,1})=4(\td{P}_1)^2-\frac{(\td{Q}_1)^2\td{P}_1}{t_{\infty,3}} +2t_{\infty,1}\td{P}_1-\frac{1}{4}\left(2t_{0,0}-1\right)\td{Q}_1
\eeq
in agreement with \eqref{Example1}.

\subsection{Case $d=2$}
For $d=2$, the corresponding case of the PIV hierarchy involves lengthy formulas. For simplicity, we refer to \url{https://math.univ-lyon1.fr/~marchal/AdditionalRessources/index.html} for the Maple file providing the complete computations. Let us only recall the general procedure. The Lax matrices are given by \autoref{GeoLaxMatrices} and  \autoref{AppendixAuxiliaryMatrixGeometricDarboux}. The compatibility equations \eqref{compatibility} then give the evolutions of the Darboux coordinates $(Q_{0,1},Q_{\infty,0},Q_{\infty,1},Q_{\infty,2},P_{0,1},P_{\infty,0},P_{\infty,1},P_{\infty,2})$ in terms of the times $(t_{\infty,i})_{1\leq i\leq 4}$. These evolutions are given by the Hamiltonians provided in \autoref{PropHamGeometricDarboux}. Then, one imposes the symmetry condition that is equivalent in this case to
\bea 0&=&t_{\infty,0}=t_{\infty,2}=t_{\infty,4} \cr
Q_{\infty,2}&=&-\frac{P_{0,1}}{t_{\infty,5}}\cr
Q_{\infty,1}&=&\frac{(P_{0,1})^{2}}{(t_{\infty,5})^{2}}-\frac{P_{\infty,0}}{t_{\infty,5}}+\frac{t_{\infty,3}}{t_{\infty,5}}
-\frac{(t_{0,0})^{2}}{2(t_{\infty,5})^{2}(Q_{0,1})^{2}}\cr
P_{\infty,1}&=&-P_{0,1}Q_{\infty,1}-t_{\infty,5}Q_{\infty,0}+\frac{P_{\infty,0}P_{0,1}}{t_{\infty,5}}\cr
P_{\infty,2}&=&-2P_{0,1}Q_{\infty,0}-t_{\infty,5}Q_{0,1}
-\frac{(P_{0,1})^{4}}{(t_{\infty,5})^{3}}
+\frac{(P_{0,1})^{2}P_{\infty,0}}{(t_{\infty,5})^{2}}
-\frac{t_{\infty,3}(P_{0,1})^{2}}{(t_{\infty,5})^{2}}
+\frac{(P_{\infty,0})^{2}}{t_{\infty,5}}
-\frac{t_{\infty,3}P_{\infty,0}}{t_{\infty,5}}\cr&&+t_{\infty,1}+\frac{3(t_{0,0})^{2}P_{\infty,0}}{2(t_{\infty,5})^{2}(Q_{0,1})^{2}}
-\frac{(t_{0,0})^{2}t_{\infty,3}}{2(t_{\infty,5})^{2}(Q_{0,1})^{2}}
+\frac{3(t_{0,0})^{4}}{8(t_{\infty,5})^{3}(Q_{0,1})^{4}}
\eea
We chose to keep $(Q_{\infty,0},Q_{0,1},P_{\infty,0},P_{0,1})$ as parameterizing coordinates. After symmetry, the Lax matrices simplifies into
\bea
\check{L}_{1,1}(\lambda)&=&-t_{\infty,5}\lambda^{4}-t_{\infty,3}\lambda^{2}
+\frac{(t_{0,0})^{2}}{2(Q_{0,1})^{2}t_{\infty,5}}\lambda^{2}
-t_{\infty,1}+\frac{(P_{0,1})^{2}(t_{0,0})^{2}}{2(t_{\infty,5})^{3}Q_{0,1}^{2}}-\frac{P_{\infty,0}(t_{0,0})^{2}}{(t_{\infty,5})^{2}(Q_{0,1})^{2}}
\cr&&+\frac{t_{\infty,3}(t_{0,0})^{2}}{2(t_{\infty,5})^{2}(Q_{0,1})^{2}}
-\frac{3(t_{0,0})^{4}}{8(t_{\infty,5})^{3}(Q_{0,1})^{4}}\cr
\check{L}_{1,2}(\lambda)&=&-\frac{t_{0,0}}{Q_{0,1}}\lambda^{3}+\frac{P_{0,1}t_{0,0}}{t_{\infty,5}Q_{0,1}}\lambda^{2}
+\left(\frac{t_{0,0}P_{\infty,0}}{t_{\infty,5}Q_{0,1}}-\frac{t_{0,0}(P_{0,1})^{2}}{(t_{\infty,5})^{2}Q_{0,1}}-\frac{t_{\infty,3}t_{0,0}}{t_{\infty,5}Q_{0,1}}+\frac{(t_{0,0})^{3}}{2(Q_{0,1})^{3}(t_{\infty,5})^{2}}\right)\lambda\cr&&
-\frac{t_{0,0}Q_{\infty,0}}{Q_{0,1}}-\frac{t_{0,0}}{\lambda}\cr
\check{L}_{2,1}(\lambda)&=&-\frac{t_{0,0}}{Q_{0,1}}\lambda^{3}-\frac{P_{0,1}t_{0,0}}{t_{\infty,5}Q_{0,1}}\lambda^{2}
+\left(\frac{t_{0,0}P_{\infty,0}}{t_{\infty,5}Q_{0,1}}-\frac{t_{0,0}(P_{0,1})^{2}}{(t_{\infty,5})^{2}Q_{0,1}}-\frac{t_{\infty,3}t_{0,0}}{t_{\infty,5}Q_{0,1}}+\frac{(t_{0,0})^{3}}{2(Q_{0,1})^{3}(t_{\infty,5})^{2}}\right)\lambda\cr&&
+\frac{t_{0,0}Q_{\infty,0}}{Q_{0,1}}-\frac{t_{0,0}}{\lambda}\cr
\check{L}_{2,2}(\lambda) &=& - \check{L}_{1,1}(\lambda)
\eea
and the auxiliary matrix is given by 
\bea
[\check{A}_{\boldsymbol{\alpha}}(\lambda)]_{1,1}&=& -\frac{\alpha_{\infty,3}}{3}\lambda^3-\alpha_{\infty,1}\lambda+\frac{\alpha_{\infty,3} (t_{0,0})^2}{1536\,(Q_{0,1})^2}\lambda \cr
[\check{A}_{\boldsymbol{\alpha}}(\lambda)]_{1,2}&=&\frac{t_{0,0}\alpha_{\infty,3}}{48\,Q_{0,1}}\lambda^2+\frac{t_{0,0}\alpha_{\infty,3}P_{0,1}}{768\,Q_{0,1}}\lambda\cr&&+\frac{t_{0,0}\left(\left((P_{0,1})^{2}\alpha_{\infty,3}+16P_{\infty,0}\alpha_{\infty,3}+768\alpha_{\infty,1}\right)(Q_{0,1})^{2}-\frac{1}{2}(t_{0,0})^{2}\alpha_{\infty,3}\right)}{12288\,(Q_{0,1})^{3}}\cr
[\check{A}_{\boldsymbol{\alpha}}(\lambda)]_{2,1}&=&\frac{t_{0,0}\alpha_{\infty,3}}{48\,Q_{0,1}}\lambda^2-\frac{t_{0,0}\alpha_{\infty,3}P_{0,1}}{768\,Q_{0,1}}\lambda\cr&&+\frac{t_{0,0}\left(\left((P_{0,1})^{2}\alpha_{\infty,3}+16P_{\infty,0}\alpha_{\infty,3}+768\alpha_{\infty,1}\right)(Q_{0,1})^{2}-\frac{1}{2}(t_{0,0})^{2}\alpha_{\infty,3}\right)}{12288\,(Q_{0,1})^{3}}\cr
[\check{A}_{\boldsymbol{\alpha}}(\lambda)]_{2,2} &=& - [\check{A}_{\boldsymbol{\alpha}}(\lambda)]_{1,1}
\eea
In these coordinates, the evolutions are not Hamiltonians. In order to obtain a Hamiltonian system, we can use the coordinates $(\td{Q}_1,\td{Q}_2,\td{P}_1,\td{P}_2)$ given in \autoref{sec2}. The relation between both sets is given by:
\bea 
Q_{0,1}&=&-\frac{t_{0,0}}{\td{Q}_2}\cr
P_{0,1}&=&\frac{1}{2}\td{Q}_2+\frac{256\,\td{P}_1}{\td{Q}_2}+\frac{16\,t_{\infty,3}}{\td{Q}_2}\cr
Q_{\infty,0}&=&\frac{1}{32}\td{Q}_2\td{P}_1-\frac{8(\td{P}_1)^2}{\td{Q}_2}
-\frac{t_{\infty,3}\td{P}_1}{\td{Q}_2}+\frac{1}{16}\td{Q}_1-\frac{(t_{\infty,3})^2}{32\td{Q}_2}+\frac{16\td{P}_2}{\td{Q}_2}+\frac{t_{\infty,1}}{\td{Q}_2}\cr
P_{\infty,0}&=&\frac{1}{64}(\td{Q}_2)^2-\frac{4096(\td{P}_1)^2}{(\td{Q}_2)^2}
-\frac{512t_{\infty,3}\td{P}_1}{(\td{Q}_2)^2}+\frac{16\td{Q}_1}{\td{Q}_2}
-\frac{16(t_{\infty,3})^2}{(\td{Q}_2)^2}
\eea
or equivalently
\bea
\td{P}_1
&=&-\frac{t_{0,0}P_{0,1}}{256\,Q_{0,1}}-\frac{t_{\infty,3}}{16}-\frac{(t_{0,0})^{2}}{512(Q_{0,1})^{2}}\cr
\td{P}_2&=&-\frac{(t_{0,0})^{2}(P_{0,1})^{2}}{131072(Q_{0,1})^{2}}
-\frac{t_{\infty,1}}{16}-\frac{t_{0,0}Q_{\infty,0}}{16\,Q_{0,1}}
-\frac{(t_{0,0})^{2}P_{\infty,0}}{4096(Q_{0,1})^{2}}
+\frac{t_{\infty,3}(t_{0,0})^{2}}{8192(Q_{0,1})^{2}}
+\frac{3(t_{0,0})^{4}}{524288(Q_{0,1})^{4}}\cr
\td{Q}_1&=&-\frac{t_{0,0}(P_{0,1})^{2}}{256\,Q_{0,1}}
-\frac{(t_{0,0})^{2}P_{0,1}}{256(Q_{0,1})^{2}}
-\frac{t_{0,0}P_{\infty,0}}{16\,Q_{0,1}}\cr
\td{Q}_2&=&-\frac{t_{0,0}}{Q_{0,1}}
\eea
With these coordinates the Lax matrices are:
\bea \check{L}_{1,1}(\lambda)&=&16\lambda^{4}-\left(t_{\infty,3}+\frac{(\td{Q}_2)^{2}}{32}\right)\lambda^{2}-\frac{(\td{Q}_2)^{2}\td{P}_1}{32}-\frac{\td{Q}_1\td{Q}_2}{16}
+8(\td{P}_1)^{2}+t_{\infty,3}\td{P}_1+\frac{(t_{\infty,3})^{2}}{32}-t_{\infty,1}\cr
\check{L}_{1,2}(\lambda)&=&\td{Q}_2\lambda^3+\left(16\td{P}_1+t_{\infty,3}+\frac{(\td{Q}_2)^2}{32}\right)\lambda^2
+\left(\td{Q}_2\td{P}_1+\td{Q}_1\right)\lambda-8(\td{P}_1)^2\cr&&+\left(\frac{(\td{Q}_2)^2}{32}-t_{\infty,3}\right)\td{P}_1+\frac{\td{Q}_1\td{Q}_2}{16}-\frac{(t_{\infty,3})^2}{32}+16\td{P}_2+t_{\infty,1}-\frac{t_{0,0}}{\lambda}\cr
\check{L}_{2,1}(\lambda)&=&\td{Q}_2\lambda^3-\left(16\td{P}_1+t_{\infty,3}+\frac{(\td{Q}_2)^2}{32}\right)\lambda^2
+\left(\td{Q}_2\td{P}_1+\td{Q}_1\right)\lambda+8(\td{P}_1)^2\cr&&-\left(\frac{(\td{Q}_2)^2}{32}-t_{\infty,3}\right)\td{P}_1-\frac{\td{Q}_1\td{Q}_2}{16}+\frac{(t_{\infty,3})^2}{32}-16\td{P}_2-t_{\infty,1}-\frac{t_{0,0}}{\lambda}
\eea
and
\footnotesize{\bea \check{A}_{t_{\infty,1}}&=&\begin{pmatrix}
    -\lambda&-\frac{\td{Q}_2}{16}\\ -\frac{\td{Q}_2}{16}&-\lambda
\end{pmatrix}   \\ \check{A}_{t_{\infty,3}}&=&\begin{pmatrix}
    -\frac{1}{3}\lambda^3 +\frac{(\td{Q}_2)^2}{1536}\lambda&-\frac{\td{Q}_2}{48}\lambda^2-\frac{(\td{Q}_2)^2+512\td{P}_1+32t_{\infty,3}}{1536}\lambda-\frac{(16\td{P}_1+t_{\infty,3})\td{Q}_2}{768}-\frac{\td{Q}_1}{48}\\ -\frac{\td{Q}_2}{48}\lambda^2+\frac{(\td{Q}_2)^2+512\td{P}_1+32t_{\infty,3}}{1536}\lambda-\frac{(16\td{P}_1+t_{\infty,3})\td{Q}_2}{768}-\frac{\td{Q}_1}{48} & \frac{1}{3}\lambda^3 -\frac{(\td{Q}_2)^2}{1536}\lambda \nonumber
\end{pmatrix}
\eea}
\normalsize{The} evolutions are Hamiltonians with:
\bea 
&&\text{Ham}_{t_{\infty,1}}(\td{Q}_1,\td{Q}_2,\td{P}_1,\td{P}_2;t_{\infty,1},t_{\infty,3})=
-\frac{1}{16}(\td{Q}_2)^{2}\td{P}_2-32\td{P}_1\td{P}_2-2t_{\infty,3}\td{P}_2+2t_{\infty,3}(\td{P}_1)^{2}
\cr&&
+16(\td{P}_1)^{3}+\left(\frac{1}{16}(t_{\infty,3})^{2}-2t_{\infty,1}\right)\td{P}_1
+\frac{1}{16}(\td{Q}_1)^{2}+\frac{1}{16}(1-2t_{0,0})\td{Q}_2\cr
&&\text{Ham}_{t_{\infty,3}}(\td{Q}_1,\td{Q}_2,\td{P}_1,\td{P}_2;t_{\infty,1},t_{\infty,3})=-\frac{1}{48}\td{P}_1\td{P}_2(\td{Q}_2)^2-\frac{1}{24}\td{Q}_1\td{Q}_2\td{P}_2-\frac{16}{3}(\td{P}_2)^2
\cr&&-\frac{1}{768}t_{\infty,3}(\td{Q}_2)^2\td{P}_2+\frac{1}{3}t_{\infty,3}(\td{P}_1)^3
+\frac{16}{3}(\td{P}_1)^2\td{P}_2+\frac{1}{24}(t_{\infty,3})^2(\td{P}_1)^2
-\left(\frac{1}{48}(t_{\infty,3})^2+\frac{2}{3}t_{\infty,1}\right)\td{P}_2\cr&&
+\frac{1}{48}(1-2t_{0,0})\td{Q}_2\td{P}_1+\frac{t_{\infty,3}}{24}
\left(\frac{1}{32}(t_{\infty,3})^2-t_{\infty,1}\right)\td{P}_1
+\frac{t_{\infty,3}}{768}(\td{Q}_1)^2+\frac{3-2t_{0,0}}{48}\td{Q}_1
+\frac{t_{\infty,3}(1-2t_{0,0})}{768}\td{Q}_2\cr&&
\eea
These Hamiltonians can directly be matched with \eqref{Example2} using the correspondence $z=t_{\infty,1}$, $t_1=-\frac{1}{4} t_{\infty,3}$, $t_{\infty,5}=-16$ and $\alpha_2=1-t_{0,0}$ giving
\beq \text{Ham}_{t_{\infty,1}}=-\mathcal{H}_2^{(2)},\,\,\, \text{Ham}_{t_{\infty,3}}=-\frac{1}{4} \mathcal{H}_1^{(2)}\eeq
Note that the sign discrepancy simply comes from the fact that the symplectic two-form is reversed in both formalisms giving opposite Hamiltonians.

\section{Consequences and further discussions} \label{sec6}

In this article, we have shown that the FN PII hierarchy is realized as a sub-space of the even PIV hierarchy connections $F_d$ with an additional symmetry: 
\beq \text{FN PII hierarchy  }= \left\{\check{L}(\lambda)\in F_{d} \,\text{ such that }\, \check{L}(-\lambda)=-\sigma_1 \check{L}(\lambda) \sigma_1\right\}\eeq
This gives a new isomonodromic realization of the FN PII hierarchy. In particular, this identification suggests that the FN PII hierarchy should be viewed as a symmetry reduction inside a larger isomonodromic hierarchy rather than as an isolated system. This opens many perspectives that go beyond the scope of the present work.
\begin{itemize}
     \item \textit{\textbf{Suitable Darboux coordinate parameterizations:}} We have discussed several canonical Darboux coordinate systems giving different parameterizations of the Lax matrix and its auxiliary matrix adapted to different geometric aspects. The oper coordinates are adapted for the oper form of the connection and the resolution of the compatibility equations which is usually simpler in this gauge. The geometric coordinates are adapted for the expression of the Hamiltonians that have the same pole structure as the initial connection. Finally, the geometric Lax coordinates are adapted for the expression of the Lax matrices and to a certain extent for the symmetry reduction. Unfortunately, as we have seen in this article, none of these sets of Darboux coordinates is perfectly adapted for the symmetry reduction. Thus, the natural question is to know if such a set of Darboux coordinates exist and how it can be constructed from the known sets. Another possible approach is to directly start from the Lax system after symmetry and find suitable Darboux coordinates for which the resolution of the compatibility equations would be explicit. In this approach, since $\check{L}_{2,1}$ is directly known from $\check{L}_{1,2}$, it may not be interesting to go to the oper gauge to solve the compatibility equations there but rather try to solve the compatibility equations directly in the geometric gauge. For this purpose it is not clear if the Darboux coordinates provided by \cite{mazzocco2007hamiltonian} are suitable for computations or if  one needs to define another set of Darboux coordinates that would be  more suitable to solve the compatibility equations.
    \item \textit{\textbf{Symmetric Riemann-Hilbert problems:}} The Riemann-Hilbert (RH) problem for the meromorphic differential system of the FN PII equation has been described in \cite{Flaschka1980}, the interesting part is that this problem admits itself a symmetry lifting it to a symmetry at the level of the Stokes constants and the asymptotics of the horizontal sections in each Stokes sector. Our result could be interpreted as a degeneration of a certain RH problem giving rise after a symmetry to the same problem described in \cite{Flaschka1980}. More generally, one may wonder if this is an instance of more general picture on moduli spaces of connections and their analytic topological counterpart: the wild character varieties (the Betti side) \cite{Boalch2018}. In fact, through the irregular RH correspondence, one relates analytically meromorphic connections to character varieties. The natural question is whether one can describe fixed-point loci inside wild character varieties and identify it with moduli spaces of connections giving rise to reduced integrable hierarchies. We believe that the present work is an instance of this bigger picture.  
    \item \textit{\textbf{Folding of affine Dynkin diagrams:}} One of the remarkable features of the Painlev\'e-type equations is their intimate relation with affine Dynkin diagrams initiated by Okamoto \cite{zbMATH00929752} and then generalized further \cite{Boalch2012,doucot2023diagramsirregularconnectionsriemann}. The present construction of the article suggests that the symmetry reduction may admit an interpretation at the level of these underlying affine root systems. In fact, establishing both diagrams, the result of the paper agrees with a certain folding of the associated affine diagrams (the FN PII diagram is a folding of the even PIV equation diagram). Indeed, the FN PII equation admits a Lax pair that could be related through a “folding" to a Lax pair with a non-diagonalizable leading order (up to an additional gauge transformation). From this, one obtains the Dynkin diagram associated to the system through its irregular class. This is of particular interest since it may establish a new operation between spaces of connections giving new relations between different isomonodromy systems. 
\end{itemize}
Finally, we would like to stress that one of the motivations of this work is to stress the link between the isomonodromic and isospectral descriptions to integrable systems. The FN PII hierarchy was discovered by a reduction of the mKdV hierarchy and admits a natural isospectral reformulation. On the other hand, the even PIV hierarchy arises naturally from the geometry of meromorphic connections. The current work establishes that these different constructions describe the same geometric object after imposing the appropriate symmetry. Consequently, the distinction between these two viewpoints becomes less rigid than previously thought and opens the possibility to merge them explicitly.

\appendix
\numberwithin{equation}{section}
\renewcommand{\theequation}{\thesection.\arabic{equation}}

\section{Expression of the auxiliary matrix}
\subsection{Expression of $A_{\boldsymbol{\alpha}, \text{oper}}(\lambda)$ in the oper Darboux coordinates}\label{AppendixAuxiliaryMatrixOperGauge}
From \cite{marchal2024hamiltonianrepresentationisomonodromicdeformations}, the entries of the matrix $A_{\boldsymbol{\alpha}, \text{oper}}(\lambda)$ are given by
\bea \left[A_{\boldsymbol{\alpha}, \text{oper}}(\lambda)\right]_{1,2}&=&\nu_{\infty,0}^{(\boldsymbol{\alpha})}+\sum_{j=1}^{2d} \frac{\nu_j^{(\boldsymbol{\alpha})}}{\lambda-q_j}\cr
\left[A_{\boldsymbol{\alpha}, \text{oper}}(\lambda)\right]_{1,1}&=&\frac{1}{2\omega}\mathcal{L}_{\boldsymbol{\alpha}}[\omega]- \sum_{j=1}^{2d}\frac{p_j \mu_j^{(\boldsymbol{\alpha})}}{\lambda-q_j}\cr
\left[A_{\boldsymbol{\alpha}, \text{oper}}(\lambda)\right]_{2,1}&=& \partial_{\lambda} \left[A_{\boldsymbol{\alpha}, \text{oper}}(\lambda)\right]_{1,1}+\left[A_{\boldsymbol{\alpha}, \text{oper}}(\lambda)\right]_{1,2}\left[L_{\text{oper}}(\lambda)\right]_{2,1},\cr
\left[A_{\boldsymbol{\alpha}, \text{oper}}(\lambda)\right]_{2,2}&=& \partial_{\lambda} \left[A_{\boldsymbol{\alpha}, \text{oper}}(\lambda)\right]_{1,2}+\left[A_{\boldsymbol{\alpha}, \text{oper}}(\lambda)\right]_{1,1}+\left[A_{\boldsymbol{\alpha}, \text{oper}}(\lambda)\right]_{1,2}\left[L_{\text{oper}}(\lambda)\right]_{2,2}
\eea
where the coefficients $(\mu_1^{(\boldsymbol{\alpha})},\dots, \mu_{2d}^{(\boldsymbol{\alpha})})$ are given by
\beq \begin{pmatrix}1&1 &\dots &\dots &1\\
q_1& q_2&\dots &\dots& q_{2d}\\
\vdots & & & & \vdots\\
\vdots & & & & \vdots\\
q_1^{2d-2}& q_2^{2d-2} &\dots & \dots& q_{2d}^{2d-2}\\
\frac{1}{q_1}&\frac{1}{q_2}& \dots&\dots& \frac{1}{q_{2d}}\end{pmatrix}\begin{pmatrix} \mu^{(\boldsymbol{\alpha})}_1\\ \vdots\\\vdots\\ \mu^{(\boldsymbol{\alpha})}_{2d}\end{pmatrix}=\begin{pmatrix} \nu^{(\boldsymbol{\alpha})}_{\infty,1}\\ \vdots\\ \nu^{(\boldsymbol{\alpha})}_{\infty,2d-1}\\ \nu^{(\boldsymbol{\alpha})}_{\infty,0}\end{pmatrix}
\eeq

\subsection{Expression of $\check{A}_{\boldsymbol{\alpha}}(\lambda)$ in the geometric Darboux coordinates}\label{AppendixAuxiliaryMatrixGeometricDarboux}
Entries of the matrix $\check{A}_{\boldsymbol{\alpha}}(\lambda)$ are given by

\begin{align}
    \left[\check{A}_{\boldsymbol{\alpha}}(\lambda)\right]_{1,2}=&
\omega\left(\sum_{i=0}^{2d-1}\nu_{\infty,i}^{(\boldsymbol{\alpha})}\lambda^{2d-1-i} +\sum_{k=0}^{2d-2}\sum_{i=0}^{k}\nu_{\infty,i}^{(\boldsymbol{\alpha})}Q_{\infty,k} \lambda^{k-i}\right) \\
\left[\check{A}_{\boldsymbol{\alpha}}(\lambda)\right]_{1,1}=&\frac{1}{2\omega}\mathcal{L}_{\boldsymbol{\alpha}}[\omega]+t_{\infty,2d+1}\left(\nu_{\infty,0}^{(\boldsymbol{\alpha})}Q_{0,1}+ \sum_{j=1}^{2d-1}\nu_{\infty,j}^{(\boldsymbol{\alpha})}Q_{\infty,j-1}\right)+ \left[\check{L}_{1,1}(\lambda) \left(\sum_{i=0}^{2d-1} \nu_{\infty,i}\lambda^{-i}\right)\right]_{\infty,+} \nonumber
\end{align}

One also has $ \left[\check{A}_{\boldsymbol{\alpha}}(\lambda)\right]_{2,2}=-\left[\check{A}_{\boldsymbol{\alpha}}(\lambda)\right]_{1,1}$ and finally
\bea \left[\check{A}_{\boldsymbol{\alpha}}(\lambda)\right]_{2,1}&=& -\frac{t_{\infty,2d+1}\mathcal{L}_{\boldsymbol{\alpha}}[\omega]}{\omega^2}\lambda -\frac{(t_{\infty, 2d}-t_{\infty,2d+1}Q_{\infty,2d-2})\mathcal{L}_{\boldsymbol{\alpha}}[\omega]}{\omega^2}+ \frac{\alpha_{\infty,2d}-t_{\infty,2d+1}\mathcal{L}_{\boldsymbol{\alpha}}[Q_{\infty,2d-2}]}{\omega} \cr
&&+\left[
\frac{ \underset{i=0}{\overset{2d+1}{\sum}}\underset{j=2d-1}{\overset{4d}{\sum}}\underset{m=0}{\overset{4d-j}{\sum}}t_{\infty,2d+1-m}t_{\infty,j+m-2d+1}\nu_{\infty,i}^{(\boldsymbol{\alpha}) }\lambda^{j-i}  }{\check{L}_{1,2}(\lambda)}\right]_{\infty,+} -\frac{1}{\omega}\nu_{\infty,0}^{(\boldsymbol{\alpha}) }  \cr
&&+ \left[\frac{\check{L}_{1,1}(\lambda)}{\check{L}_{1,2}(\lambda)}\left(\check{L}_{1,1}(\lambda)\frac{[\check{A}_{\boldsymbol{\alpha}}(\lambda)]_{1,2}}{\check{L}_{1,2}(\lambda)}-2[\check{A}_{\boldsymbol{\alpha}}(\lambda)]_{1,1}\right)\right]_{\infty,+}
\eea
with $\nu_{\infty,2d}^{(\boldsymbol{\alpha})}:= -\underset{j=1}{\overset{2d-1}{\sum}}\nu_{\infty,j}^{(\boldsymbol{\alpha})} Q_{\infty,j-1} -\nu_{\infty,0}^{(\boldsymbol{\alpha})} Q_{0,1}$ and 
$\nu_{\infty,2d+1}^{(\boldsymbol{\alpha})}:=
-\underset{j=2}{\overset{2d}{\sum}}\nu_{\infty,j}^{(\boldsymbol{\alpha})}Q_{\infty,j-2} -\nu_{\infty,1}^{(\boldsymbol{\alpha})}Q_{0,1}$

\section{Proof of \autoref{TheoSymplectic}}\label{AppendixProofDarbouxCoord}
Let us denote: $\mathbf{R}:=(R_{\infty,0},\dots,R_{\infty,2d-2},R_{0,1})^t$, $\mathbf{S}=(S_{\infty,0},\dots, S_{\infty,2d-2},S_{0,1})^t$, $\mathbf{Q}:=(Q_{\infty,0},\dots,Q_{\infty,2d-2},Q_{0,1})^t$ and $\mathbf{P}:=(P_{\infty,0},\dots,P_{\infty,2d-2},P_{0,1})^t$. The change of coordinates (with the ordering) we are considering is of the form $\mathbf{S}=F(\mathbf{Q})$ and $\mathbf{R}=T(\mathbf{Q})\mathbf{P}+ K(\mathbf{Q})$, i.e. $\mathbf{R}$ is affine in $\mathbf{P}$ while $\mathbf{S}$ is independent of $\mathbf{P}$. In particular we have $d\mathbf{S}= C(\mathbf{Q})d\mathbf{Q}$ where $C_{i,j}:=\frac{\partial S_i}{\partial Q_j}$. For $(i,j)\in \llbracket 1, 2d-1\rrbracket^2$, we have from \autoref{TheodS} that $C_{i,j}= \frac{\partial S_{\infty, i-1}}{\partial Q_{\infty, j-1}}=[T^{-1}]_{j,i}$. Thus, we have $C=(T^{-1})^t$ because the other entries are trivial from the definition of $S_{0,1}$. Moreover, we have:
\beq dR_i=d\left(\sum_{m} T_{i,m}P_m+K_i\right)=\sum_{m,k} \frac{\partial T_{i,m}}{\partial Q_k}P_m dQ_k +  \sum_{m} T_{i,m} dP_m +  \sum_{k} \frac{\partial K_{i}}{\partial Q_k} dQ_k \eeq
Thus we have:
\bea \sum_{i} dS_i\wedge dR_i&=& \left(\sum_{i,j} C_{i,j}dQ_j \right) \wedge \left(\sum_{m,k} \frac{\partial T_{i,m}}{\partial Q_k}P_m dQ_k +  \sum_{m} T_{i,m} dP_m +  \sum_{k} \frac{\partial K_{i}}{\partial Q_k} dQ_k\right)\cr
&=&\sum_{i,j,m} C_{i,j}T_{i,m} dQ_j \wedge dP_m + \sum_{i,j,m,k} C_{i,j} \frac{\partial T_{i,m}}{\partial Q_k} P_m dQ_j\wedge dQ_k +\sum_{i,j,k}C_{i,j} \frac{\partial K_i}{\partial Q_k} dQ_j \wedge dQ_k \cr &&
\eea
Thus, \eqref{CanonicalChangeEq} is satisfied if and only if:
\bea (C_1)\,&:&\sum_{i,j,m =1}^{2d} C_{i,j}T_{i,m}=\delta_{j,m} \,\,,\,\, \forall\, (j,m)\in\llbracket 1, 2d\rrbracket^2\cr
(C_2)\,&:&\,\,\forall\, m\in\llbracket 1, 2d\rrbracket\,:\, \sum_{i=1}^{2d} C_{i,k} \frac{\partial T_{i,m}}{\partial Q_j} = \sum_{i=1}^{2d} C_{i,j} \frac{\partial T_{i,m}}{\partial Q_k}  \,\,\,,\,\, \forall\, 1\leq j\neq k\leq 2d\cr
(C_3)\,&:&\,\,\sum_{i=1}^{2d}C_{i,k} \frac{\partial K_i}{\partial Q_j}=\sum_{i=1}^{2d}C_{i,j} \frac{\partial K_i}{\partial Q_k}\,\,\,,\,\, \forall\, 1\leq j\neq k\leq 2d\
\eea
\underline{Proof of $(C_1)$}: $(C_1)$ is equivalent to say that $C^t= T^{-1}$ which is already proved in \autoref{TheodS}.
\smallskip

\noindent\underline{Proof of $(C_2)$}: Let us now discuss condition $(C_2)$ depending on the value of $m\in \llbracket 1, 2d\rrbracket$.\\
For $m=2d$ we have $T_{i,2d}=\delta_{i,2d}$ so that $(C_2)$ is trivially verified (both sides vanish). 
This proves $(C_2)$ for $m=2d$.\\ 
Let us now consider $m\in \llbracket 1, 2d-1\rrbracket$. For $j=2d$, the l.h.s. of $(C_2)$ vanishes because $T$ is independent of $Q_{0,1}$ so all partial derivatives vanish. For $j=2d$, the r.h.s. of $(C_2)$ is $\frac{\partial T_{2d,m}}{\partial Q_k}$ but since $m<2d$ all $T_{2d,m}$ vanish and the r.h.s. of $(C_2)$ vanishes. Hence we conclude that either $j=2d$ or $k=2d$ (by symmetry $j\leftrightarrow k$) verifies the second identity for a given $m<2d$. Thus, let us now take $(m,j,k)\in \llbracket 1,2d-1\rrbracket^3$ so that we are only left with the Toeplitz blocks. The definition of $T$ is equivalent to
\bea T_{i,i}&=&-1\, \,,\,\, \forall \, i\in \llbracket 1,2d-1\rrbracket\cr
T_{i,m}&=&0 \,\,,\,\, \forall\, m>i\cr
T_{i,m}&=&-Q_{\infty, 2d-1-(i-m)} \,\, ,\,\,\, \forall\, 1\leq m<i\leq2d-1
\eea
Thus, for all $(m,j)\in \llbracket 1,2d-1\rrbracket^2$ and $i\in \llbracket 1, 2d-1\rrbracket$, we have $\frac{\partial T_{i,m}}{\partial Q_j}=\frac{\partial T_{i,m}}{\partial Q_{\infty,j-1}}= -\delta_{j, 2d-i+m}\delta_{1\leq m<i\leq 2d-1}$. Hence we get:
\bea \sum_{i=1}^{2d} C_{i,k} \frac{\partial T_{i,m}}{\partial Q_j}&=&-\sum_{i} C_{i,k} \delta_{j, 2d-i+m}\delta_{1\leq m<i\leq 2d-1}=-\sum_{i=m+1}^{2d-1}C_{i,k} \delta_{j, 2d-i+m}\cr
&=&-\sum_{i=m+1}^{2d-1}C_{i,k}\delta_{i,2d-j+m}= C_{2d-j+m,k} \delta_{m+1\leq 2d-j+m\leq 2d-1}\cr
&=& C_{2d-j+m,k} \delta_{j\leq 2d-1}\delta_{m+1\leq j}
\eea
Let us now recall that $C=(T^{-1})^t$ so that the $(2d-1)\times (2d-1)$ submatrix of $C$ is an upper-triangular Toeplitz matrix whose expression is
\bea C_{i,i}&=&-1\, \,,\,\, \forall \, i\in \llbracket 1,2d-1\rrbracket\cr
C_{i,k}&=&0 \,\,,\,\, \forall\, i>k\cr
C_{i,k}&=&b_{2d-1-(k-i)} \,\, ,\,\,\, \forall\, 1\leq i<k\leq2d-1
\eea
Thus, we have by defining $b_{2d-1}=-1$:
\bea \sum_{i=1}^{2d} C_{i,k} \frac{\partial T_{i,m}}{\partial Q_j}&=&b_{4d-1+m-k-j}\delta_{1\leq 2d-j+m\leq k\leq 2d-1} \delta_{j\leq 2d-1}\delta_{m+1\leq j}\cr
&=&b_{4d-1+m-k-j}\delta_{2d+m\leq k+j}\delta_{k\leq 2d-1} \delta_{j\leq 2d-1}\delta_{m+1\leq j}
\eea
The previous expression is clearly symmetric in the exchange $j\leftrightarrow k$ hence ending the proof of condition $(C_2)$.
\smallskip

\noindent\underline{Proof of $(C_3)$}: Let us first give the expression of $K_i$. We have:
\bea 
K_{1}&=& t_{\infty,2d+1}(Q_{\infty,2d-2})^2-t_{\infty,2d+1}Q_{\infty,2d-3}
\cr
K_{i}&=& t_{\infty,2d+1}Q_{\infty,2d-2}Q_{\infty,2d-1-i}-t_{\infty,2d+1}Q_{\infty,2d-2-i}
\,\,,\,\, \forall \, 2\leq i\leq 2d-2\cr
K_{2d}&=& t_{\infty,2d+1}Q_{\infty,2d-2}
\eea
Thus,
\bea \frac{\partial K_i}{\partial Q_{0,1}}&=& 0\,\,\,,\,\, \forall \, 1\leq i\leq 2d-1\cr
\frac{\partial K_{2d}}{\partial Q_{0,1}}&=& 0
\cr
\frac{\partial K_1}{\partial Q_{\infty,2d-2}}&=& 2t_{\infty,2d+1}Q_{\infty,2d-2} 
\cr
\frac{\partial K_1}{\partial Q_{\infty,j-1}}&=& -t_{\infty,2d+1}\delta_{j,2d-2} \,\,,\,\, \forall \, 1\leq j\leq 2d-2\cr
\frac{\partial K_i}{\partial Q_{\infty,2d-2}}&=& t_{\infty,2d+1}Q_{\infty,2d-1-i} \,\,,\,\, \forall \, 2\leq i\leq 2d-1\cr
\frac{\partial K_i}{\partial Q_{\infty,j-1}}&=& t_{\infty,2d+1}Q_{\infty,2d-2}\delta_{2d-i,j}-t_{\infty,2d+1}\delta_{2d-1-i,j} 
\,\,,\,\, \forall \, 1\leq i,j\leq 2d-2\cr
\frac{\partial K_{2d}}{\partial Q_{\infty,2d-2}}&=& t_{\infty,2d+1}
\cr
\frac{\partial K_{2d}}{\partial Q_{\infty,j-1}}&=& 0\,\,,\,\, \forall \, 1\leq j\leq 2d-2
\eea
Thus, we get for $j=2d$ and  $1\leq k\leq 2d-1$ that the l.h.s. of $(C_3)$ is:
\beq \sum_{i=1}^{2d}C_{i,k} \frac{\partial K_i}{\partial Q_{2d}}= \sum_{i=1}^{2d}C_{i,k} \frac{\partial K_i}{\partial Q_{0,1}}=0
\eeq
while the r.h.s. of $(C_3)$ is
\beq \sum_{i=1}^{2d}C_{i,2d} \frac{\partial K_i}{\partial Q_k}=C_{2d,2d}\frac{\partial K_{2d}}{\partial Q_{0,1}}=0
\eeq
so that both sides are equal. By obvious symmetry, the result also holds for $k=2d$ and $j\leq 2d-1$.\\ 
Let us now consider $j=2d-1$ and $1\leq k\leq 2d-2$. The l.h.s. of $(C_3)$ in this case is
\bea \label{lhsC3} \sum_{i=1}^{2d}C_{i,k} \frac{\partial K_i}{\partial Q_{2d-1}}&=& \sum_{i=1}^{2d}C_{i,k} \frac{\partial K_i}{\partial Q_{\infty, 2d-2}}\cr
&=&C_{1,k}(2t_{\infty,2d+1}Q_{\infty,2d-2} 
)+\sum_{i=2}^{2d-1} C_{i,k} t_{\infty,2d+1}Q_{\infty,2d-1-i}+ C_{2d,k}t_{\infty,2d+1}
\cr
&=&C_{1,k}(2t_{\infty,2d+1}Q_{\infty,2d-2} 
)+\sum_{i=2}^{2d-1} C_{i,k} t_{\infty,2d+1}Q_{\infty,2d-1-i}
\eea
where we have used that $C$ is upper triangular so that $C_{2d,k}=0$ when $k\leq 2d-2$.
For $j=2d-1$ and $1\leq k\leq 2d-2$, the r.h.s. of $(C_3)$ is
\bea \label{rhsC3}&&\sum_{i=1}^{2d}C_{i,2d-1} \frac{\partial K_i}{\partial Q_k}=\sum_{i=1}^{2d-1}C_{i,2d-1} \frac{\partial K_i}{\partial Q_{\infty,k-1}}=C_{1,2d-1}\frac{\partial K_1}{\partial Q_{\infty,k-1}}+ \sum_{i=2}^{2d-1}C_{i,2d-1} \frac{\partial K_i}{\partial Q_{\infty,k-1}}\cr
&&=-C_{1,2d-1}t_{\infty,2d+1}\delta_{k,2d-2}+ \sum_{i=2}^{2d-1}C_{i,2d-1} (t_{\infty,2d+1}Q_{\infty,2d-2}\delta_{2d-i,k}-t_{\infty,2d+1}\delta_{2d-1-i,k} 
)\cr
&&=-C_{1,2d-1}t_{\infty,2d+1}\delta_{k,2d-2}+ C_{2d-k,2d-1} t_{\infty,2d+1}Q_{\infty,2d-2}- C_{2d-1-k,2d-1}t_{\infty,2d+1}\delta_{k\leq 2d-3}
\cr
&&=-C_{1,2d-1}t_{\infty,2d+1}\delta_{k,2d-2}+ t_{\infty,2d+1}C_{1,k} Q_{\infty,2d-2}- t_{\infty,2d+1}C_{1,k+1}\delta_{k\leq 2d-3}
\eea
where we have used in the last equality the fact that $C$ is upper-triangular Toeplitz so that $C_{i,j}=C_{1,j-i+1}$. 
Only terms proportional to $t_{\infty,2d+1}$ remain and $(C_3)$ is equivalent to
\beq 2C_{1,k}Q_{\infty,2d-2}+\sum_{i=2}^{2d-1} C_{i,k} Q_{\infty,2d-1-i}=-C_{1,2d-1}\delta_{k,2d-2}+ C_{1,k} Q_{\infty,2d-2}-C_{1,k+1}\delta_{k\leq 2d-3}\eeq
i.e. to
\beq C_{1,k}Q_{\infty,2d-2}+\sum_{i=2}^{k} C_{1,k-i+1} Q_{\infty,2d-1-i}=-C_{1,2d-1}\delta_{k,2d-2}-C_{1,k+1}\delta_{k\leq 2d-3}\eeq
Recall now that $C_{1,k}=b_{2d-k}$ for all $k\in \llbracket 1, 2d-2\rrbracket$ with the convention that $b_{2d-1}=1$. Thus the identity is equivalent to 
\beq b_{2d-k}Q_{\infty,2d-2}- Q_{\infty,2d-1-k}+\sum_{i=2}^{k-1} b_{2d-1-k+i} Q_{\infty,2d-1-i}=-b_1\delta_{k,2d-2}-b_{2d-1-k}\delta_{k\leq 2d-3} \eeq
i.e. for all $k\in \llbracket 1, 2d-2\rrbracket$:
\beq -Q_{\infty,2d-1-k}+\sum_{i=1}^{k-1} b_{2d-1-k+i} Q_{\infty,2d-1-i}=-b_{2d-1-k}\eeq
But coefficients $(b_k)_{1\leq k\leq 2d-2}$ are related to $(Q_{\infty,k})_{1\leq k\leq 2d-2}$ by
\beq  \left(-1+ \sum_{n=1}^{2d-2} b_{2d-1-n} \lambda^n\right)\left(1+\underset{i=1}{\overset{2d-2}{\sum}} Q_{\infty,2d-1-i} \lambda^i\right)\overset{\lambda\to 0}{=} -1+ O(\lambda^{2d-1}) \eeq
which is equivalent to say that 
\beq \forall \,k \in \llbracket1,2d-2\rrbracket\,:\, b_{2d-1-k}-Q_{\infty, 2d-1-k} +\sum_{i=1}^{k-1}b_{2d-1-k+i} Q_{\infty,2d-1-i} =0
\eeq
ending the proof for $j=2d-1$ and $k\leq 2d-2$. By symmetry, the result also holds for $k=2d-1$ and $j\leq 2d-2$.\\
Let us finally consider the last cases corresponding to $j\leq 2d-2$ and $k\leq 2d-3$. The l.h.s. of $(C_3)$ is given by
\bea \label{lhsC3bis}&&\sum_{i=1}^{2d}C_{i,k} \frac{\partial K_i}{\partial Q_j}=\sum_{i=1}^{2d}C_{i,k} \frac{\partial K_i}{\partial Q_{\infty,j-1}}= C_{1,k}\frac{\partial K_1}{\partial Q_{\infty,j-1}}+\sum_{i=2}^{2d}C_{i,k} \frac{\partial K_i}{\partial Q_{\infty,j-1}}\cr
&&=-t_{\infty,2d+1}\delta_{j,2d-2}C_{1,k}+ \sum_{i=2}^{2d}C_{i,k}(t_{\infty,2d+1}Q_{\infty,2d-2}\delta_{2d-i,j}-t_{\infty,2d+1}\delta_{2d-1-i,j} 
)\cr
&&=-t_{\infty,2d+1}\delta_{j,2d-2}C_{1,k}+ t_{\infty,2d+1}Q_{\infty,2d-2}C_{2d-j,k}\delta_{k+j\geq 2d}\cr&&
-t_{\infty,2d+1}C_{2d-1-j,k}\delta_{k+j\geq 2d-1}\delta_{j\neq 2d-2}
\cr
&&=-t_{\infty,2d+1}\delta_{j,2d-2}C_{1,k}+ t_{\infty,2d+1}Q_{\infty,2d-2}C_{1,k+j-2d+1}\delta_{k+j\geq 2d}\cr&&-t_{\infty,2d+1}C_{1,k+j+2-2d}\delta_{k+j\geq 2d-1}\delta_{j\neq 2d-2}
\delta_{k+j\geq 2d}
\eea
where we have used the Toeplitz structure of $C$: $C_{i,j}=C_{j-i+1}$.
The r.h.s. of $(C_3)$ is given by
\bea \label{rhsC3bis}&&\sum_{i=1}^{2d}C_{i,j} \frac{\partial K_i}{\partial Q_k}=\sum_{i=1}^{2d}C_{i,j} \frac{\partial K_i}{\partial Q_{\infty,k-1}}= C_{1,j}\frac{\partial K_1}{\partial Q_{\infty,k-1}} + \sum_{i=1}^{2d}C_{i,j}\frac{\partial K_i}{\partial Q_{\infty,k-1}}\cr
&&= -t_{\infty,2d+1}C_{1,j}\delta_{k,2d-2}+\sum_{i=1}^{2d}C_{i,j}(t_{\infty,2d+1}Q_{\infty,2d-2}\delta_{2d-i,k}-t_{\infty,2d+1}\delta_{2d-1-i,k} 
) \cr
&&=-t_{\infty,2d+1}C_{1,j}\delta_{k,2d-2} +t_{\infty,2d+1}Q_{\infty,2d-2} C_{2d-k,j}\delta_{k+j\geq 2d} \cr&&
-t_{\infty,2d+1}C_{2d-1-k,j}\delta_{k+j\geq 2d-1}\delta_{k\neq 2d-2} 
\cr
&&=-t_{\infty,2d+1}C_{1,j}\delta_{k,2d-2} +t_{\infty,2d+1}Q_{\infty,2d-2} C_{1,j+k-2d+1}\delta_{k+j\geq 2d} \cr&&
-t_{\infty,2d+1}C_{1,j+k-2d+2}\delta_{k+j\geq 2d-1}\delta_{k\neq 2d-2} 
\eea
Thus both sides of $(C_3)$ given by \eqref{lhsC3bis} and \eqref{rhsC3bis} identify if and only if
\beq \label{IdTopProve}\delta_{j,2d-2}C_{1,k}+ C_{1,k+j+2-2d}\delta_{k+j\geq 2d-1}\delta_{j\neq 2d-2}=C_{1,j}\delta_{k,2d-2}+ C_{1,j+k-2d+2}\delta_{k+j\geq 2d-1}\delta_{k\neq 2d-2}\eeq
This last equality is obvious if both $j$ and $k$ are lower than $2d-3$ or if both are equal to $2d-2$. If $j=2d-2$ and $k\leq 2d-3$, then the l.h.s. is $C_{1,k}+0$ while the r.h.s. is $0+C_{1,2d-2+k-2d+2}\delta_{k\geq 1}=C_{1,k}$ so that the identity \eqref{IdTopProve} always holds. This ends the proof of condition $(C_3)$ and thus of \autoref{TheoSymplectic}.

\bibliographystyle{plain}
\bibliography{Biblio}

\end{document}